\newcommand{\matrixnorm}[1]{{\left\vert\kern-0.25ex\left\vert\kern-0.25ex\left\vert #1 \right\vert\kern-0.25ex\right\vert\kern-0.25ex\right\vert}}
\DeclareMathOperator*{\argmin}{arg\,min}
\algrenewcommand\algorithmicrequire{\textbf{Input:}}
\algrenewcommand\algorithmicensure{\textbf{Output:}}
\newtheorem{theorem}{Theorem}
\newtheorem{cor}{Corollary}
\newtheorem{lemma}{Lemma}
\begin{document}
\bstctlcite{IEEEexample:BSTcontrol}

\title{A Probably Approximately Correct Analysis of Group Testing Algorithms}
\author{Sameera~Bharadwaja~H.
	and~Chandra~R.~Murthy
\thanks{Sameera Bharadwaja H. and Chandra R. Murthy are with the Dept. of ECE, Indian Institute of Science, Bangalore, India (email: sameerah@iisc.ac.in; cmurthy@iisc.ac.in). A part of this work was published in~\cite{Sameera_Chandra_ISIT_2022}.}\vspace{-0.7cm}
}

\maketitle

\begin{abstract}
We consider the problem of identifying the defectives from a population of items via a \emph{non-adaptive group testing} framework with a random pooling-matrix design. We analyze the sufficient number of tests needed for \emph{approximate set identification}, i.e., for identifying \emph{almost} all the defective and non-defective items with high \emph{confidence}. To this end, we view the group testing problem as a function learning problem and develop our analysis using the probably approximately correct (PAC) framework. Using this formulation, we derive sufficiency bounds on the number of tests for three popular binary group testing algorithms: column matching, combinatorial basis pursuit, and definite defectives. We compare the derived bounds with the existing ones in the literature for exact recovery theoretically and using simulations. Finally, we contrast the three group testing algorithms under consideration in terms of the sufficient testing rate \emph{surface} and the sufficient number of tests \emph{contours} across the range of the approximation and confidence levels.
\end{abstract}

\begin{IEEEkeywords}
	Group Testing, Approximate Set Recovery, Probably Approximately Correct (PAC) Analysis, Approximate Coupon Collector Problem.
\end{IEEEkeywords}

\IEEEpeerreviewmaketitle

\section{Introduction}
\label{sec:Introduction}

\IEEEPARstart{I}{dentifying} a set of $k$ \emph{defectives} from a population of $n$ \emph{items} is an interesting problem. A n\"{a}ive solution is to test the items individually, which requires $n$ tests. This approach is inefficient if $n$ is very large or there are constraints on time-to-test, cost budget, or testing hardware and resource constraints.

An alternative is to pool items together and run $m < n$ tests in parallel, with each test pooling random subsets of the items together. These are collectively called group testing algorithms (or pool testing algorithms) and work as follows. The outcome of a group test is negative if and only if none of the defective items participate in that group test; it is positive otherwise. When $k \ll n$, which holds in many applications like identifying a rare disease from a set of blood samples, testing a population for an infection in the early stage of an epidemic, identifying defective industrial products in a high-yield production line, etc., this approach can help significantly reduce the sufficient number of tests needed.

Group testing was first introduced by Dorfman in $1943$ during world war II to test the prospective entrants into the US military for Syphilis~\cite{Dorfman_1943}. Owing to its success, group testing has been adopted in diverse applications ranging from (other) infectious disease detection \cite{Ajay_Bharti_2009, Cohen_Nir_Eldar_2021} to multiple access control protocols \cite{Wolf_1985}, cognitive radios \cite{Sharma_Murthy_2014}, drug and DNA library scanning \cite{Hwang_Du_2006, Ngo_Du_1999, Raghu_Peter_2008, Xin_2009}, product testing \cite{Sobel_Groll_1959}, etc. Group testing has gained renewed interest recently due to global developments related to COVID-19~\cite{Yelin_Aharony_2020, Sameera_Chandra_TSP_2022}. For example, it is possibly applicable to monkeypox detection~\cite{Li_Zhao_Kimberly_2010} and to scale-up testing to detect Khosta 2, a type of sarbecovirus~\cite{Seifert_Bai_2022}.

Group testing methods can be categorized into \emph{adaptive} and \emph{non-adaptive} types~\cite{Ding_Zhu_1999}. In adaptive group testing, the tests are performed in multiple stages wherein the group design in the current stage depends on the test outcomes from the previous stage. Dorfman-style testing falls under this category. In contrast, in the non-adaptive method, all the required tests are performed in a single stage, followed by the application of a suitable decoding algorithm~\cite{Aldridge_Balsassini_2014, Chan_Jaggi_Saligrama_2011, Chan_Jaggi_Saligrama_Agnihotri_2014} to recover the individual \emph{item status} (i.e., defective or non-defective).

In this paper, we focus on random pooling-based non-adaptive group testing~\cite{Sharma_Murthy_2014}. Here, the decision about which items will participate in which group test is predetermined and is encoded in a random binary \emph{test matrix} denoted by $\mathbf{A} \in \{0, 1\}^{m \times n}$~\cite{Aldridge_GT_IT_2019}. The $(i, j)$-th element of $\mathbf{A}$, denoted by $a_{ij}$, takes the value $1$ or $0$ depending on whether the $j$-th item participates in the $i$-th group test or not, respectively. The \emph{item vector} is denoted by $\mathbf{x} \in \{0, 1\}^n$, whose $j$-th entry, denoted by $x_j$, takes the value $1$ if the $j$-th item is defective, and $0$ otherwise. In addition, the support of $\mathbf{x}$ is denoted by the set $\mathcal{K}$ such that $|\mathcal{K}| = k$, where $|\cdot|$ denotes the cardinality of a set. Finally, the \emph{outcome} of the $i$th group test is $y_i = \vee_{j=1}^{n}~a_{ij}x_j,~i \in [m]$, where $\vee$ denotes the boolean OR-ing operation. Thus, the outcome is $1$ if the group test includes one or more defective items and is $0$ otherwise.

Once the group test outcomes ($y_i, i \in [m]$) are collected, a decoding algorithm aims to output an item vector estimate denoted by $\hat{\mathbf{x}} \in \{0,1\}^n$. The estimate of the defective item set, $\hat{\mathcal{K}}$ is then simply the support of $\hat{\mathbf{x}}$, i.e., the set of indices corresponding to the nonzero entries in $\hat{\mathbf{x}}$. Some of the well-known binary group testing (decoding) algorithms studied in this work include~\cite[Chapter 2]{Aldridge_Balsassini_2014}~\cite{Chan_Jaggi_Saligrama_Agnihotri_2014}
\begin{enumerate}
	\item Column Matching (CoMa), also called Combinatorial Orthogonal Matching Pursuit (COMP).
	\item Combinatorial Basis Pursuit (CBP), also called Coupon Collector (CoCo) algorithm.
	\item The Definite Defectives (DD) algorithm.
\end{enumerate}

Note that, under the group testing model described above, CoMa and CBP only make false positive errors, i.e., $\hat{\mathcal{K}} \supseteq {\mathcal{K}}$, while DD only makes false negative errors, i.e., $\hat{\mathcal{K}} \subseteq {\mathcal{K}}$~\cite{Aldridge_Balsassini_2014}. We analyze these separately in the sequel.

Now, when \emph{exact set identification} is considered, the decoding algorithm succeeds when $\hat{\mathcal{K}} = \mathcal{K}$ and fails otherwise. Let $\mathbb{P}_{\mathbf{A}}\left(\hat{\mathcal{K}} \neq \mathcal{K}\right)$ denote the probability of error given the set of defectives, $\mathcal{K}$. Then, the average probability of error under a combinatorial setting can be written as~\cite{Aldridge_GT_IT_2019}
\begin{equation}
	\label{eq:Error_Defective_Set_Estimate}
	\mathbb{P}(\text{err}) = \mathbb{P}_{\mathbf{A},\mathcal{K}}(\hat{\mathcal{K}} \neq \mathcal{K}) = \frac{1}{{n \choose k}}\sum_{\mathcal{K}: |\mathcal{K}| = k}\mathbb{P}_{\mathbf{A}}(\hat{\mathcal{K}} \neq \mathcal{K}).
\end{equation}
Characterizing $\mathbb{P}(\text{err})$ as a function of $k$, $n$, and $m$ is helpful in using group testing for  practical applications. For instance, one can ask: \emph{Given $k$ and $n$ along with an upper bound on $\mathbb{P}(\text{err})$, what is the sufficient number of tests, $m_S$, required by a group testing algorithm?} In the following subsections, we briefly discuss the existing work toward answering this question before presenting our main contributions. 
\vspace{-0.4cm}
\subsection{Prior Work}
\label{sec:Prior Work}
Under $\delta^\prime \equiv \mathbb{P}(\text{err}) = n^{-\delta}, \delta > 0$ and $k = o(n)$, the authors in~\cite{Chan_Jaggi_Saligrama_2011, Chan_Jaggi_Saligrama_Agnihotri_2014} show that an upper bound on the number of tests is $ek(1+\delta)\log n$ and $2ek(1+\delta)\log n$ for CoMa and CBP algorithms, respectively.\footnote{$\log$ denotes natural logarithm unless specified otherwise.} Further, the linear programming (LiPo) decoder requires no more than $O(k\log n)$ tests under the same conditions with the constant factor associated with the asymptotic expression being a function of $(1\!+\!1/k)$, $\log k/ \log n$ and $\delta$~\cite{Chan_Jaggi_Saligrama_Agnihotri_2014}. Similarly, the sufficient number of tests for DD and sequential COMP (SCOMP) algorithms is also $C_ak\log n,$ with a decoder-dependent constant $C_a > 0$~\cite{Aldridge_Balsassini_2014}. In addition, a lower bound on the required number of tests is $(1-\delta^\prime)k\log(n/k)$~(See \cite{Chan_Jaggi_Saligrama_Agnihotri_2014} and references therein, e.g.,~\cite{Malyutov_1978, Arkadii_2004}.) \textcolor{black}{In the above works, the entries of the test matrix are chosen i.i.d. from the $\text{Bernoulli}(p)$ distribution, where $p \in (0,1)$. Further, the author in~\cite{Aldridge_2017} presents an improved converse bound for this Bernoulli-design group testing.}

\textcolor{black}{Other test matrix designs have been considered and contrasted with the Bernoulli design. The authors in~\cite{Coja-Oghlan_Oliver_2020} draw inspiration from spatially-coupled low-density parity check (LDPC) codes for designing their test matrix. For any $0\!<\!\beta\!<\!1$, $k\!\sim\!n^\beta$, $\epsilon^\prime\!>\!0$, the authors derive lower and upper bounds on the number of tests as $(1\!-\!\epsilon^\prime)m_{\text{inf}}$ and $(1\!+\!\epsilon^\prime)m_{\text{inf}}$, respectively, where $m_{\text{inf}} = \max\{\beta/\log^2 2, (1\!-\!\beta)/\log 2\} k\log n$. The authors in~\cite{Coja-Oghlan_Gebhard_Max_2020} show similar results for the number of tests below which any group testing algorithm fails and above which the SCOMP and/or DD algorithms succeed.}

\textcolor{black}{In~\cite{Aldridge_Johnson_Scarlett_2016, Johnson_Aldridge_Scarlett_2019}, the authors show that a (near-)constant tests-per-item design with $m(\log 2)/k$ tests-per-item (chosen with replacement) requires $23.4\%$ fewer tests (correspondingly, the testing rate improves by $\approx 30\%$) than the Bernoulli design, when coupled with COMP or DD algorithms. The authors in~\cite{Tan_Scarlett_2021} show that, in the sub-linear sparsity regime (i.e., $k = \Theta(n^\beta),~\beta \in (0,1)$), the DD algorithm under a constrained design with at most a fixed number of tests-per-item and $\rho = O\left((n/k)^\beta\right),~\beta \in [0,1)$ items-per-test, yields an improved achievability result compared to the COMP algorithm under an unconstrained design. The work in~\cite{Gandikota_Jaggi_2019} presents an analysis of constrained design (e.g., constraints on the number of tests per item or the \emph{item-divisibility} and on the number of items per test) and shows that even a small \emph{amount of constraint} can have a significant effect on the information-theoretic bound. Recently, the authors in~\cite{Oliver_Max_Olaf_2022} considered both a constant column-weight design (termed $\Delta$-divisible item setting) and a constant row-weight design (termed $\Gamma$-sized test setting) while analyzing the achievable number of tests of various decoding algorithms.}

\textcolor{black}{In contrast to the sub-linear regime, the works in~\cite{Malyutov_1978, Malyutov_2013} consider very sparse regime, where $k = O(1)$.} The authors in~\cite{Scarlett_Cevher_2017} present a sufficiency condition on $m$ for a weakened version of the maximum likelihood decoder in the non-asymptotic regime, i.e., when the problem dimensions are finite. The non-asymptotic bounds are more practically useful: for instance, in a typical RT-qPCR based Covid-19 testing kit, one can accommodate $\sim 96$ or $384$ (pooled) samples~\cite{Sameera_Chandra_TSP_2022}. However, the analysis of the maximum likelihood-based decoder is dependent on the decoder having the knowledge of $k$ and is typically computationally intractable even for moderate sized problems~\cite{Aldridge_GT_IT_2019, Johnson_Aldridge_Scarlett_2019}. Therefore, practical group testing algorithms like CoMa, CBP, and DD are more attractive.
\subsection{Approximate Defective Set Identification}
\label{sec:Approximate Defective Set Identification}
In many group testing applications,  \emph{approximate} defective set recovery is sufficient~\cite{Aldridge_GT_IT_2019}. Here, the estimated defective set can contain missed defective items or false positive items. \textcolor{black}{In this direction, the authors in~\cite{Atia_Saligrama_2009} present an achievability bound on the number of tests when up to $\alpha k$ misses out of $k$ defectives are allowed given $n$ items, as $O\left(e^{(1-\alpha)k \log n}/H(e^{-\alpha})\right)$, where $H(\cdot)$ denotes the binary entropy.} The authors in \cite{Scarlett_Cevher_2016, Scarlett_Cevher_2017} consider approximate defect set identification and derive both achievability and converse bounds on $m$ in the \emph{sub-linear regime}, i.e., $k\!=\!\Theta(n^\beta),~\beta\!\in\!(0,1)$. The proof uses information spectrum methods and thresholding techniques from channel coding theory. For achievability, the authors use a maximum likelihood-like recovery algorithm. Similarly, the authors in \cite{Scarlett_Cevher_List_2017} consider a list-decoding algorithm for approximate recovery. The results in all these scenarios show that allowing $\lfloor \alpha k \rfloor$ defectives to be missed relaxes the converse bound on $m$ by at most a multiplicative factor $1\!-\!\alpha$, where $\alpha \in (0, 1)$.

The authors in~\cite{Truong_Aldridge_Scarlett_2020} show that the probability that the maximum of the false positive (FP) or false negative (FN) errors incurred is no more than $\alpha k, \alpha \in (0,1)$ approaches one when $m \geq (1\!+\!\alpha^\prime)(1\!-\!\alpha)\log_2{n \choose k}$ but the same probability approaches zero when $m < (1\!-\!\alpha^\prime)\log_2{n \choose k}$, for arbitrary $\alpha^\prime > 0$, as $n \to \infty$. A treatment of this \emph{all-or-nothing} phenomenon is also presented in~\cite{Jonathan_Ilias_2023}. The authors in~\cite{Lee_Chandrasekher_2019} use sparse graph codes and present a decoder called SAFFRON which can recover $(1\!-\!\epsilon^\prime)k$ defectives with probability $1\!-\!k/n^r$ with $2(1\!+\!r)C^\prime k\log_2 n$ tests, where $C^\prime$ is a function of $\epsilon^\prime > 0$ and $r \in \mathbb{Z}^+$.

The available bounds on the sufficient number of tests for approximate defective set recovery also depend on the underlying test matrix design, i.e., the distribution from which the test matrix is drawn. For example, in the sub-linear regime, with an i.i.d. $\text{Bernoulli}(\log 2/k)$ test matrix, $k\log(n/k)/\log^22$ tests are sufficient for the error probability (appropriately defined to account for the maximum number of FPs and FNs allowed) to approach zero \cite{Scarlett_Cevher_Separate_2018}. Similarly, under a doubly-regular design with column-weight $r$ and row-weight $s$, it is known that the FNR is minimized for larger $r$ and smaller $s$, whereas the number of tests is $rn/s$~\cite{Tan_Scarlett_2023}. Therefore, they fix different (nonzero) values for the allowed FNR and numerically analyze the number of tests. They define $\text{FPR}=\mathbb{E}[|\mathcal{\hat{K}} \setminus \mathcal{K}|]/|[n]\setminus\mathcal{K}|$ and $\text{FNR}=\mathbb{E}[|\mathcal{K} \setminus \mathcal{\hat{K}}|]/|\mathcal{K}|$, where $[n] \triangleq \{1, \ldots, n\}$ as a measure of the average approximation errors. In contrast, we are interested in the number of FP or FN errors allowed at a given $n$, $k$ and with the confidence parameter, $\delta$. In essence, we derive a lower bound on the cumulative distribution function of the approximation errors.

Recently, the authors in~\cite{Iliopoulos_Ilias_2021} addressed a question related to the \emph{computational-statistical gap} (CSG) in the non-adaptive group testing paradigm, where one is interested in $(1\!-\!o(1))-$approximate recovery in the $k = o(n)$ regime, for the $\text{Bernoulli}(\log 2/k)$ test design~\cite[Sec. IIA]{Scarlett_Cevher_Separate_2018}. The CSG is defined as the gap between the number of tests above which recovery is information-theoretically possible and the number of tests required by the currently best-known efficient algorithms to succeed~\cite{Iliopoulos_Ilias_2021}. The authors provide evidence that the gap can be closed when $m \geq (1\!+\!\alpha)k\log_2(n/k),~\alpha\!\in\!(0,1)$ and $n, k \to \infty$, enabling $(1\!-\!\alpha^\prime)m$ tests to contain at least $\lfloor (1\!-\!\alpha)k \rfloor$ defectives asymptotically almost surely, for any $\alpha^\prime > 0$. The authors show that the absence of CSG implies that a practical local search routine succeeds in solving the smallest satisfying set (SSS) estimator under the said regime, which otherwise has a combinatorial complexity.

In contrast to the usual error probability or the FP/FN rates, the author in~\cite{Aldridge_2021} considers a new metric called the expected number of tests per \emph{infected} individual found (ETI). One of the questions posed in this work involves tolerating partial recovery and analysis of the SAFFRON algorithm's ETI, which is shown to be $2e\log_2 (n/k)$. Lastly,~\cite[Section 5.1]{Aldridge_GT_IT_2019, Malyutov_2013} presents bounds on $m$ for CoMa and DD under partial recovery conditions, albeit without proof. For instance, for $m \geq (1+\eta)ek\log(n/k)$, where $\eta > 0$, the average number of FPs output by the CoMa algorithm behaves as $o(k)$, and, therefore, the probability of getting more than $\gamma k$ false positives, for a fixed $\gamma \in (0,1)$, tends to zero. A similar result is presented in the context of the DD algorithm for FN errors. The authors in~\cite{Gilbert_Wen_Strauss_2008} also briefly discuss partial defective set recovery in group testing.

In practical settings with finite resources, one is often interested in ensuring that the probability of the error incurred by a function learned using a finite number of randomly drawn samples exceeding a threshold remains below a small number, called \emph{confidence}. We use the probably approximately correct (PAC) formulation~\cite{Mohri_Afshin_Talwalkar_2012, Shai_Shai_2014} in this work to bridge this gap.

\textcolor{black}{We have presented some of the main results from the vast field of research on group testing relevant to this work. A collection of extensive results and deeper discussions pertaining to various decoder rules, testing protocols, and designs, including the bounds on the number of tests under different constraints, reliability criteria, various measurement and mixing models can be found in~\cite{Aldridge_GT_IT_2019} and references therein. With this background, we now summarize the motivation and contributions of this work.}

\vspace{-0.3cm}
\subsection{Motivation and Contributions}
\label{sec:Motivation for the Current Work}
\textcolor{black}{To the best of our knowledge, a rigorous treatment of sufficiency bounds on $m$ for \emph{approximate} recovery using practical algorithms like CoMa, CBP, and DD accounting for the \emph{randomness} in the test matrix and at nonzero confidence levels is not available in the literature.} This work addresses this gap by viewing these algorithms through the lens of function learning and PAC analysis. In turn, this allows us to shed light on the relationship between PAC-learning and the exact recovery bounds available in the literature~\cite{Chan_Jaggi_Saligrama_Agnihotri_2014, Aldridge_Balsassini_2014}. \textcolor{black}{A fundamental difference between PAC learning~\cite{Mohri_Afshin_Talwalkar_2012, Shai_Shai_2014} and our problem is as follows. In group testing, we can \emph{choose} the data distribution from which the samples are drawn for function learning based on our knowledge of the hypothesis space from which the target function is to be learned (in the context of group testing, function learning corresponds to identifying defective items.) For example, in i.i.d.\ Bernoulli test matrix designs, we can choose the probability $p$ with which a given item participates in a given test. As we will see, this additional degree of freedom allows us to optimize the PAC learning process and obtain tighter and more general bounds on the sufficient number of group tests.}

\textcolor{black}{It is worth mentioning that viewing group testing as a function learning problem has other potential applications: blind chemistry, where one is interested in determining which $k$ out of $n$ reactants in a chemical reaction has the potential to create a particular (useful) detectable compound~\cite{Gilbert_Wen_Strauss_2008};  identification of key design variables for improving  an automobile's fuel efficiency; key-species identification in a complex biological ecosystem; reactions of bacteria in gut micro-biome to a given drug, etc~\cite{Pant_Maiti_2022, Gilbert_Wen_Strauss_2008}.}
\textcolor{black}{Also, multi-label classification with a large number of the number of class labels  $n \sim 10^3\!-\!10^6$ and a small number of classes per input (e.g., an image) is $k\!\ll\!n$ can be performed by a set of $n$ \emph{one-vs.-rest} classifiers. The authors in~\cite{Ubaru_Mazumdar_2017} propose to use a $(k,e)$-disjunct matrix and encode the $n$-dimensional label vector using an $m$-dimensional vector, where $m\!\ll\!n$. The $m$-length binary vector is constructed by learning $m$ one-vs.-rest classifiers. Then, the recovery of the original $n$-length label vector from the $m$-length binary vector is solved as a classical group testing problem. The test design in~\cite{Ubaru_Mazumdar_2017} can tolerate up to $\lfloor e/2 \rfloor$ misclassifications. On a similar note, the authors in~\cite{Malioutov_Varshney_2013} show how group testing can be used in a binary classification problem by posing it as an exact rule-learning problem.}

\textcolor{black}{We note that the PAC formulation is one of the key tools used to analyze machine learning algorithms~\cite{Mohri_Afshin_Talwalkar_2012}. \textcolor{black}{For example, PAC analysis aids in developing a lower bound on the probability that the error rate of the above-mentioned multi-label classifier lies below a certain error threshold.} However, the existing analysis of group testing algorithms does not conform to this notion of characterizing the distribution of the error rates over the randomness of the test matrix. By formulating group testing as a PAC learning problem and developing corresponding achievability bounds, we are able to bridge this gap. Furthermore, since the PAC formulation allows one to characterize the decoding error rate \emph{and} the confidence in a unified framework, it is well suited for applications where the target error tolerance is small but nonzero, as in the examples discussed above.} 

In this paper, we consider random pooling-based non-adaptive group testing and characterize the sufficient number of tests, $m$, required for \emph{approximate} recovery of the defective set with \emph{high-probability} using the PAC learning framework. In doing so, we are able to perform a finer analysis of random pooling-based group testing algorithms by separately accounting for the randomness in the test matrix $\mathbf{A}$ and the defective set $\mathbf{x}$ for three popular group testing recovery algorithms. As we will show, the exact recovery bounds for group testing in the literature are a special case of the PAC analysis-based results.

Our main contributions are as follows:
\begin{enumerate}
	\item We reformulate the defective set identification problem in non-adaptive group testing with random pooling as a function learning problem. Doing so enables us to apply the PAC framework for deriving a sufficiency bound on the number of tests in both exact and approximate recovery conditions for three well-known and popular binary group testing algorithms: CoMa, CBP, and DD. \textcolor{black}{In contrast to existing works, we optimize the design parameters to get a tighter bound on the sufficient number of tests in the CBP algorithm. Further, the existing results on the coupon collector problem do not apply to approximate recovery. Therefore, we also extend the analysis of the coupon collector problem to handle \emph{collection of only a subset of coupons}. The new results are then used to develop measurement bounds for approximate recovery of the defective set using the CBP algorithm.}
	\item \textcolor{black}{The PAC-based analysis is characterized by two parameters: 1) the approximation error tolerance, $\epsilon \in [0,1]$, and 2) the confidence, $1-\delta \in [0,1)$. We present sufficiency bounds on the number of tests required for CoMa, CBP, and DD algorithms as a function of $n$, $k$, $\epsilon$, and $\delta$. Thus, we develop a common framework to arrive at sufficiency bounds on the number of tests for both exact and approximate set recovery with high probability. As a result, the exact defective set recovery results found in the literature emerge as a special of our analysis.}
	\item \textcolor{black}{We derive the order-wise behavior of the PAC bounds for large $n$ and $k$.  When we fix $n$, $k$, and $\epsilon$, the sufficient number of tests obtained by the PAC analysis is $\propto \log(C_d/\delta)$, for $\delta = (0,1]$ with a constant, $C_d = 2$ for the CBP algorithm and $C_d = 1$ for CoMa and DD algorithms. Further, the sufficient number of tests is $\propto (\log(1/\epsilon) + 1/\epsilon)$ for $\epsilon \in [0,1]$ when we fix $n$, $k$ and~$\delta$.}
	\item \textcolor{black}{We relate the number of false positives (similarly, the number of false negatives) to the approximation error probability $\epsilon$ used in the PAC formulation. This way of relating the theoretical results to the practical metrics of interest makes the results appealing to practitioners also.}
	\item The PAC-based analysis allows us to trade-off the \emph{accuracy} of defective set recovery with the \emph{confidence} with which the decoded set meets that accuracy. We present a visualization of this trade-off in the form of a sufficient number of tests \emph{contour/surface}, which shows its dependence on the approximation error tolerance and the probability of failure to meet the required error tolerance. \textcolor{black}{In summary, PAC analysis enables one to characterize a lower bound on the cumulative distribution of the approximation errors.}
\end{enumerate}

One of the main takeaways from our work is that using the PAC framework for analyzing practical group testing algorithms opens up means to accommodate both exact and approximate recovery \emph{and} account for the randomness in the test matrix, under the same umbrella. Furthermore, we work mainly in the non-asymptotic regime, which is of practical interest, as mentioned in Sec.~\ref{sec:Prior Work}. In addition, the sensitivity of the group testing algorithms to both confidence level and error tolerance can be separately quantified.

\textcolor{black}{The main results are stated formally in Theorems~\ref{thm:CoMa_Bound},~\ref{thm:CBP_s_Bound}, and~\ref{thm:DD_Bound} in the sequel for the CoMa, CBP, and DD algorithms, respectively. Here, provide a brief overview of what will follow. Let $c \triangleq c(a,b)$ denote that the parameter/variable $c$ is a function of two parameters/variables, $a$ and $b$. We begin with the CoMa bound presented in Theorem~\ref{thm:CoMa_Bound}:\\}
\textcolor{black}{\emph{When each entry of the test matrix is chosen i.i.d. from a $\text{Bernoulli}(p)$ distribution with $p \in (0,1)$, the CoMa algorithm succeeds in determining the defective set with approximation error at most $\epsilon$ with confidence $1-\delta$ if the number of tests is at~least
\begin{equation}
	\frac{\log{n-k \choose g_\epsilon + 1}+\log\frac{1}{\delta}}{\log\left(1/(1-(1-p)^k+(1-p)^{g_\epsilon + k + 1})\right)} \nonumber
\end{equation}
with $g_\epsilon \triangleq g_\epsilon(\epsilon, k, p) \in \mathbb{Z}_+ \cup \{0\}$ denoting the allowed number of false positives.} Further, $g_\epsilon = ke\epsilon$ order-wise, for large $k$ and with $p = 1/k$, which leads to the order-wise sufficient number of tests given by 
\begin{equation}
	2ke\left[\log\left(\frac{n}{ke\epsilon+1}\right) + 1 + \frac{\log\left(\frac{1}{\delta}\right)}{ke\epsilon+1}\right] \nonumber
\end{equation}
for large $n$ and $k$ with $n \gg k$. In the exact recovery case, i.e., when $\epsilon = 0$, we get $2ke\left[\log n + 1 + \log\left(\frac{1}{\delta}\right)\right]$. Finally, we note that the above result for the CoMa algorithm corresponds to the sufficient number of tests being $\propto \log(1/\delta)$ and $\propto \log(1/\epsilon) + 1/\epsilon$.
}

\textcolor{black}{We present the following CBP bound in Theorem~\ref{thm:CBP_s_Bound}:\\}
\textcolor{black}{\emph{For a non-adaptive random test design where each test contains $s \triangleq s(n, k)$ (out of $n$) items chosen independently with replacement, the CBP algorithm succeeds in determining the defective set with approximation error at most $\epsilon$ with confidence $1-\delta$ if the number of tests is at~least
\begin{equation}
	\frac{(n-k)}{(1-\eta)s\left(\frac{n-k}{n}\right)^s}\left[ \frac{\log\left(\frac{1}{c\delta}\right)}{g_\epsilon+1} + \frac{g_\epsilon}{g_\epsilon+1} + \log\left(\frac{n-k}{g_\epsilon+1}\right)\right] \nonumber
\end{equation}
with $g_\epsilon \triangleq g_\epsilon(\epsilon, k, s) \in \mathbb{Z}_+ \cup \{0\}$ denoting the allowed number of false positives and a Chernoff parameter, $\eta \triangleq \eta(n, k, g_\epsilon, \delta, s, c) \in (0,1)$ for a constant $c \in (0,1)$.}\\
The CBP analysis is supported by Lemma~\ref{lem:CCP_Lemma}, an extension to the \emph{coupon collector problem} for partial coupon-set collection. For large $n$ and $k$, order-wise, $g_\epsilon = ke\epsilon$, similar to the CoMa case, with $s = 1/\log(n/(n-k))$. Therefore, the sufficient number of tests for the CBP algorithm can be written as
\begin{equation}
	2ke\left[\log\left(\frac{n}{ke\epsilon\!+\!1}\right) \!+\!1\!+\!\log\left(\frac{2}{\delta}\right)\!\left[\frac{1}{ke\epsilon\!+\!1}\!+\!\frac{e}{2k}\right]\right] \nonumber
\end{equation}
for large $n$ and $k$. Similar to the CoMa analysis, we see that the bound $\propto \log(1/\delta)$ and $\propto \log(1/\epsilon) + 1/\epsilon$. In the exact recovery case, we get $2ke\left[\log n \!+\!1\!+\!\left(1\!+\!\frac{e}{2k}\right)\log\left(\frac{2}{\delta}\right)\right]$.
}

\textcolor{black}{The third result in this paper is the DD bound, in the form of an implicit equation, presented as Theorem~\ref{thm:DD_Bound}:\\}
\textcolor{black}{\emph{When each entry of the test matrix is chosen i.i.d. from a $\text{Bernoulli}(p)$ distribution with $p \in (0,1)$, the DD algorithm succeeds in determining the defective set with approximation error at most $\epsilon$ with confidence $1-\delta$ if the number of tests, $m$, satisfies the following implicit equation
\begin{equation}
	{k \choose {d_\epsilon+1}}(1-(d_\epsilon+1) p(1-p)^{k-1+\bar{g}+\tilde{g}})^{m} \leq \delta \nonumber
\end{equation}
with $d_\epsilon \triangleq d_\epsilon(\epsilon, k, p) \in \mathbb{Z}_+ \cup \{0\}$ denoting the allowed number of false negatives, $\bar{g} \triangleq \bar{g}(n, k, p, m)$ and a constant, $\tilde{g} \geq 0$.}
Finally, as $n$ and $k$ grow large, $d_\epsilon = ke\epsilon$ order-wise, with $p = 1/k$. Therefore, the sufficient number of tests for the DD algorithm can be written as 
\begin{equation}
	ke\left[\log\left(\frac{n}{k}\right)\!+\!\frac{\log\left(\frac{1}{\delta}\right)}{(ke\epsilon\!+\!1)\log\left(\frac{n}{k}\right)}\!+\!\frac{\log\left(\frac{ke}{ke\epsilon+1}\right)}{\log\left(\frac{n}{k}\right)}\right] \nonumber
\end{equation}
for large $n$ and $k$. In the exact recovery case with $d_\epsilon = 0$, we get $ke\left(\log (n/k) + \frac{1}{\log(n/k)} \left[\log(1/\delta) + \log k + 1 \right]\right)$. We observe that the bound is $\propto \log(1/\delta)$ and $\propto \log(1/\epsilon) + 1/\epsilon$ in the DD case too.}

\textcolor{black}{The proposed PAC formulation for group testing explicitly characterizes the dependency of the bound on both $\epsilon$ and $\delta$ in a unified manner, unlike existing results. For example, we are able to provide insight into how the allowed approximation error scales with $k$ and $\epsilon$: for sufficiently small $\epsilon$, we get $g_\epsilon = d_\epsilon = \Theta(k)$ from the three results above.}

\textcolor{black}{The technical novelty of this work  is that it investigates the group testing problem from a new perspective, i.e., the PAC framework. Traditional PAC analysis methods require $\epsilon > 0$ and show that the sample complexity for PAC learnability varies as $1/\epsilon$~\cite{Mohri_Afshin_Talwalkar_2012}. In the context of group testing, we obtain the exact recovery bounds by setting $\epsilon = 0$. Therefore, the proof techniques used in classical PAC analysis can not be directly applied to group testing. We first bridge this gap between the traditional PAC framework and the analysis of group testing algorithms. We show an equivalence between the performance characterization in PAC learning and group testing in the exact recovery scenario, through Lemma~\ref{lem:defective_set_learnt_function_equivalence}. In the approximate recovery scenario, we relate the allowed number of false positive errors, $g_\epsilon$ and the false negative errors, $d_\epsilon$, with the approximation error tolerance probability, $\epsilon$, of the PAC framework (see~\eqref{eq:Prob_Error_FP_Relationship},~\eqref{eq:Prob_Error_FP_Relationship_sLength} and~\eqref{eq:Prob_Error_FN_Relationship}).}

\textcolor{black}{Existing results for the coupon collector problem~\cite{Mitzenmacher_Upfal_2005} cannot be directly applied to derive CBP bounds in the approximate recovery case. Therefore, we derive the expressions for the expected stopping time and the tail probability bound for a \emph{subset coupon collection problem} (SCCP), where one is interested in acquiring only a subset of coupons to complete the collection, in Lemma~\ref{lem:CCP_Lemma}. Further, we optimize the Chernoff parameter by bounding the error probability in two independent parts in the CBP analysis, thereby obtaining a tighter bound as compared to that in the literature for the exact recovery case~\cite{Chan_Jaggi_Saligrama_Agnihotri_2014}.}

\textcolor{black}{Lemma~\ref{lem:probability_CoMa} and Lemma~\ref{lem:DD_Interim_Results} (see part \ref{lem:DD_Interim_Results_c}) forms the basis for deriving the sufficiency bound for CoMa and DD algorithms: the exact computation of the probabilities that $g$ non-defective items are hidden and $d$ defectives items are unidentified (respectively) is characterized. Also, the order-wise analysis of DD bound involves simplification of an implicit equation in~\eqref{eq:DD_Bound} and solving the transcendental equation in~\eqref{eq:DD_transcendental}. Lastly, we give the sufficiency bounds for the three algorithms in terms of the allowed number of false positives or false negatives. In case the practitioner is unable to perform the required number of tests or they perform additional tests, the expressions in~\eqref{eq:CoMa_Bound},~\eqref{eq:CBP_s_Bound} and~\eqref{eq:DD_Bound} can be used to determine the increase or decrease in the false positives/negatives to be expected, respectively, and at different confidence levels.}

\section{PAC Learning View of Group Testing}
\label{sec:Function Learning View of Group Testing}
In this section, we cast the group testing problem as a function learning problem, also termed as learning from examples \cite{Valiant_1984}. Here, a target function $f \in \mathcal{C}$ is learnt using $m$ \emph{training examples} $(\mathbf{a}_i, f(\mathbf{a}_i)), i \in [m]$, with the inputs $\mathbf{a}_i$ drawn independently from a distribution $\mathcal{D}$. The training examples are fed to the learner, which then outputs an estimate of $f$, also called a hypothesis, and denoted by $h$. The error between $h$ and $f$ evaluated on unseen \emph{test} data is
\begin{equation}
	\label{eq:err_defn}
	e(h,f) = \mathbb{P}_{\mathbf{a} \sim \mathcal{D}}(h(\mathbf{a}) \neq f(\mathbf{a})).
\end{equation}
However, the quantity $e(h,f)$ is random because the $m$ training examples are drawn from $\mathcal{D}$. Therefore, we can ask how many training examples are sufficient to ensure that 
\begin{equation}
	\label{eq:PAC bound_standard}
	\mathbb{P}(e(h, f) > \epsilon) \leq \delta,
\end{equation}
where $\delta \in (0,1)$ and $1-\delta$ is called the \emph{confidence parameter}. Obviously, it is desirable to have small $\epsilon$ and $\delta$.

\vspace{-0.3cm}
\subsection{PAC Model for Group Testing}
\label{PAC Model for Group Testing}
Given a test matrix, $\mathbf{A}$, and the corresponding test outcomes, $y_i$, $i \in [m]$,  consider the problem of learning a hypothesis that can predict the group test outcomes with high confidence. More formally, we consider the $i$th row of the test matrix, $\mathbf{a}_i, i \in [m]$ as the input, and the outcome of the $i$th group test, $y_i$, as a \emph{label} associated with the $i$th training example: $(\mathbf{a}_i, y_i)$. Since the entries of $\mathbf{x}$ corresponding to the non-defective items are $0$, we can write $y_i$ as
\begin{align}
	y_i &= a_{ij_1}x_{j_1} \vee \ldots \vee a_{ij_k}x_{j_k} \nonumber \\
	\label{eq:function_learning_form_gt}
	    &= a_{ij_1} \vee a_{ij_2} \vee \ldots \vee a_{ij_k} \triangleq x(\mathbf{a}_i),
\end{align}
where $j_1, j_2, \ldots, j_k$ are the indices of $\mathbf{x}$ corresponding to the defective items. Thus, $x(\mathbf{a}_i)$ is a $k$-literal logical OR-ing function, and is our target function to learn. In computer science, this problem is referred to as the $k$-disjunctive function learning problem~\cite{Nick_1987}. \textcolor{black}{In this work, we focus on the analysis of group testing algorithms rather than developing decoding techniques inspired by the PAC framework. The target function space, denoted by $\mathcal{C}$, consists of all $k$-literal OR-ing functions, where $k$ literals are picked without replacement from $n$ literals in accordance with~\eqref{eq:function_learning_form_gt}.} 

The relationship between non-adaptive random pooling-based group testing and the function learning problem is summarized in Table~\ref{table:gt_vs_fl}, where $\mathcal{D} = \mathcal{B}(p)$ denotes a Bernoulli distribution with parameter $p \in (0,1)$. Another well-studied random pooling design is to uniformly and independently sample $s$ items with replacement in each group test~\cite{Chan_Jaggi_Saligrama_2011}.
\begin{table}[t]
	\centering
	\renewcommand{\arraystretch}{1.3}
	\caption{Group Testing as a PAC Learning Problem}
	\label{table:gt_vs_fl}
	\begin{tabular}{|c|c|c|c|}
		\hline
		& \textbf{Target} & \textbf{Training Examples} & \textbf{Hypothesis} \\
		\hline
		\textcolor{black}{\textbf{Group testing case}} & $x(\cdot)$ & $(\textbf{a}_{i} \overset{\text{i.i.d.}}{\sim} \mathcal{D}, y_i = x(\mathbf{a}_i))$ & $\hat{x}(\cdot)$ \\
		\hline
		\textcolor{black}{\textbf{PAC learning case}} & $f(\cdot)$ & $(\textbf{a}_i \overset{\text{i.i.d.}}{\sim} \mathcal{D}, f(\mathbf{a}_i))$ & $h(\cdot)$ \\
		\hline
	\end{tabular}
\end{table}

In the notation of group testing,~\eqref{eq:err_defn} can be written as
\begin{equation}
	\label{eq:err_defn_gt}
	e(\hat{x}(\cdot), x(\cdot)) = \mathbb{P}_{\mathbf{a} \sim \mathcal{D}}(\hat{x}(\mathbf{a}) \neq x(\mathbf{a})),
\end{equation}
which denotes the error probability on future group tests, i.e., after $m$ training samples are observed and the mapping $\hat{x}(.)$ is learnt. Thus, in the PAC learning view, we seek to determine the number of training examples, $m$, and a mapping from the training examples to a hypothesis, $\hat{x}(\cdot)$, such that with a confidence probability $1-\delta$, the error between $x(\cdot)$ and $\hat{x}(\cdot)$ is at most $\epsilon$ \cite{Valiant_1984}, \cite[Chapter~2]{Mohri_Afshin_Talwalkar_2012}, i.e.,
\begin{equation}
	\label{eq:PAC bound}
	\mathbb{P}(e(\hat{x}(\cdot), x(\cdot)) > \epsilon) \leq \delta
\end{equation}
\textcolor{black}{holds true, where $e(\cdot, \cdot)$, as defined in \eqref{eq:err_defn_gt}, is a random variable.}

\textcolor{black}{A fundamental difference between the PAC formulation of the group testing presented here as compared with the classical PAC-learning problem is that the data distribution, $\mathcal{D}$, is choosable. For example, the distribution can be set based on the hypothesis class, $\mathcal{C}$, i.e., based on the sparsity parameter, $k$. Even though the group testing \emph{algorithms} considered in this work do not use the knowledge of $k$ during the defective set recovery process, the design of the testing matrix and hence, the test data distribution, depends on this knowledge.}
 
We demonstrate that, when $\epsilon = 0$, the bounds on $m$ derived via the PAC model reduce to the exact recovery results derived in \cite{Chan_Jaggi_Saligrama_Agnihotri_2014, Aldridge_Balsassini_2014}. Note that, in classical group testing, the goal is to correctly identify the defective set, whereas in the PAC learning view of group testing, we seek to learn a hypothesis satisfying~\eqref{eq:PAC bound}. The following Lemma relates the PAC learning to group testing in the exact recovery case.\footnote{All the proofs are presented in Sec.~\ref{sec:Proofs of Theorems and Lemmas}.}
\begin{lemma}
	\label{lem:defective_set_learnt_function_equivalence}
	Let $\mathcal{D}$ be a distribution such that $\mathbb{P}_{\mathcal{D}}(a_{j} = 1) \in (0, 1),~j \in [n]$ and \textcolor{black}{$a_j$s are independent}. \textcolor{black}{Let $\mathcal{C}$ denote the set of all $k$-literal OR-ing functions in $n$-dimensional space, where $k < n$. Let $\hat{x}: \{0,1\}^n \to \{0,1\}$ (correspondingly $\hat{\mathcal{K}}$) be a function in $\mathcal{C}$ that is learnt using a set of $m$ training samples.} Then, $\hat{\mathcal{K}} = \mathcal{K}$ if and only if $\mathbb{P}_{\mathbf{a} \sim \mathcal{D}}(\hat{x}(\mathbf{a}) \neq x(\mathbf{a})) = 0$.
\end{lemma}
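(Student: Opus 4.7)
The plan is to prove the two directions separately, noting that the forward direction is immediate from definitions while the reverse direction hinges on the distributional assumption that every binary vector $\mathbf{a} \in \{0,1\}^n$ receives positive probability under $\mathcal{D}$.

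For the forward direction, I would observe that if $\hat{\mathcal{K}} = \mathcal{K}$, then by the defining form of a $k$-literal disjunction in~\eqref{eq:function_learning_form_gt}, the two functions $\hat{x}(\cdot)$ and $x(\cdot)$ are syntactically (hence semantically) identical, so $\{\mathbf{a} : \hat{x}(\mathbf{a}) \neq x(\mathbf{a})\} = \emptyset$ and the probability in~\eqref{eq:err_defn_gt} is trivially zero. No distributional assumption is needed here.

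For the reverse direction, I would first argue that under the hypotheses on $\mathcal{D}$, every atom $\mathbf{a} \in \{0,1\}^n$ has strictly positive probability, since by independence
\begin{equation}
\mathbb{P}_{\mathcal{D}}(\mathbf{a}) = \prod_{j : a_j = 1} \mathbb{P}_{\mathcal{D}}(a_j = 1) \prod_{j : a_j = 0}\bigl(1 - \mathbb{P}_{\mathcal{D}}(a_j = 1)\bigr) > 0. \nonumber
\end{equation}
Therefore $\mathbb{P}_{\mathbf{a} \sim \mathcal{D}}(\hat{x}(\mathbf{a}) \neq x(\mathbf{a})) = 0$ forces pointwise equality $\hat{x}(\mathbf{a}) = x(\mathbf{a})$ for \emph{every} $\mathbf{a} \in \{0,1\}^n$.

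To conclude $\hat{\mathcal{K}} = \mathcal{K}$, I would probe the two functions using standard-basis inputs. Suppose for contradiction that $\hat{\mathcal{K}} \neq \mathcal{K}$; then there exists some index $j$ in the symmetric difference, say $j \in \mathcal{K} \setminus \hat{\mathcal{K}}$ (the other case is symmetric). Choose the input $\mathbf{a}^{(j)}$ with $a^{(j)}_j = 1$ and $a^{(j)}_i = 0$ for all $i \neq j$. Then $x(\mathbf{a}^{(j)}) = 1$ because $j \in \mathcal{K}$, while $\hat{x}(\mathbf{a}^{(j)}) = 0$ because none of the literals indexed by $\hat{\mathcal{K}}$ are active, contradicting pointwise equality. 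The main (and only) subtlety to watch out for is ensuring that $\hat{x} \in \mathcal{C}$ so that it is genuinely representable as an OR of exactly $k$ literals drawn from the $n$ items — this is an explicit assumption of the lemma, and so the unit-vector probe cleanly distinguishes the supports. Since the argument is purely combinatorial once pointwise equality is established, there is no serious obstacle; the only care needed is the invocation of the full-support property of $\mathcal{D}$ to convert a probability-zero statement into a set-theoretic one.
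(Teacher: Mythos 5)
Your proof is correct and follows essentially the same route as the paper's: the forward direction is immediate from the identification of a $k$-literal disjunction with its support set, and the reverse direction uses the positivity of every atom under $\mathcal{D}$ to exhibit a distinguishing input whenever $\hat{\mathcal{K}} \neq \mathcal{K}$. The only cosmetic difference is that you probe with a single unit vector after first deducing pointwise equality, whereas the paper argues by contrapositive using the indicator vector of the full set difference $\hat{\mathcal{K}} \setminus \mathcal{K}$ (resp. $\mathcal{K} \setminus \hat{\mathcal{K}}$); both witnesses work for the same reason.
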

Lemma~\ref{lem:defective_set_learnt_function_equivalence} says that provided the marginal probability of every entry of the vector $\mathbf{a}$ is bounded in the open interval $(0,1)$ \textcolor{black}{and the entries of $\mathbf{a}$ are drawn independently}, the notions of recovery in~\eqref{eq:Error_Defective_Set_Estimate} and~\eqref{eq:PAC bound} when $\epsilon = 0$ are equivalent. Based on the above, we derive a sufficiency bound on $m$ for three group testing algorithms in the following sections.

\section{PAC Analysis with False Positive Errors}
\label{sec:PAC Analysis for False Positives}
In this section, we develop a PAC analysis for the case where only false positive errors occur, i.e., when $\mathcal{K} \subseteq \hat{\mathcal{K}}$. This is the case with two popular group testing algorithms: CoMa and CBP. The CoMa algorithm performs \emph{column-wise} decoding, while the CBP algorithm performs \emph{row-wise} decoding~\cite{Chan_Jaggi_Saligrama_Agnihotri_2014}. The two algorithms are mathematically equivalent in the sense that they always output the same $\hat{\mathcal{K}}$. However, the upper bound analysis in the two cases are different.

\vspace{-0.3cm}
\subsection{Bernoulli Test Design: The CoMa Algorithm}
\label{sec:The CoMa Algorithm}
The CoMa algorithm \cite{Chan_Jaggi_Saligrama_Agnihotri_2014} declares an item as defective if the ones in the column of the test matrix corresponding to that item \textcolor{black}{are a subset of} the ones in the outcome vector. Otherwise, the item is declared as non-defective. The algorithm never classifies a defective item as a non-defective. However, the estimate may contain false positives, which occur when non-defective items do not participate in any of the negative outcome tests in the training phase. Such items are also called \emph{hidden non-defectives}.

Suppose the hypothesis output by the CoMa algorithm has $G$ hidden non-defective items. Then, from~\eqref{eq:err_defn_gt}, the probability that $\hat{x}(\cdot) \equiv \hat{x}$ differs from $x(\cdot) \equiv x$ for the next group test, defined as $\mathbb{P}_{\mathbf{a}_i \sim \mathcal{B}(p)} (\hat{x}(\mathbf{a}_i) \neq x(\mathbf{a}_i))$, is a function of $G$. Suppose we want the error between $\hat{x}$ and $x$ to be at most $\epsilon$. In turn, this requires $G \leq g_\epsilon$, where $g_\epsilon$ can be computed from
\begin{align}
	{\mathbb{P}}_{\textbf{a}_i\sim \mathcal{B}(p)}\left(\hat{x}(\textbf{a}_i ) \neq x(\textbf{a}_i)\right) &= (1-(1-p)^{G})(1-p)^k \leq \epsilon \nonumber \\
	\Rightarrow g_\epsilon &= \Biggl\lfloor \frac{\log\left(1-\epsilon/\left(1-p\right)^k\right)}{\log(1-p)} \Biggr\rfloor.	\label{eq:Prob_Error_FP_Relationship}
\end{align}
Thus, the bound in \eqref{eq:PAC bound} reduces to $\mathbb{P}(G > g_\epsilon) \leq \delta$. In order to proceed further, we need the following Lemma.
\begin{lemma}
	\label{lem:probability_CoMa}
	The probability that a fixed set of $g~(1 \leq g < k)$ non-defective items remain hidden in all $m$ tests is given~by
	\begin{equation}
		\label{eq:probability_CoMa_equation}
		\mathbb{P}^h_{g}(m) = \left(1 - (1-p)^k + (1-p)^{g+k}\right)^m.
	\end{equation}
\end{lemma}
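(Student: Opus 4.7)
The plan is to unpack the definition of a ``hidden'' non-defective directly in terms of the entries of $\mathbf{A}$ and exploit both the independence of the rows of $\mathbf{A}$ and the independence of the defective/non-defective columns under the i.i.d.\ Bernoulli design. Recall that CoMa declares item $j$ defective iff the support of the $j$th column of $\mathbf{A}$ is contained in the support of $\mathbf{y}$. So a fixed set $\mathcal{G}$ of $g$ non-defectives is hidden across all $m$ tests precisely when, for every test $i \in [m]$ and every $j \in \mathcal{G}$, $a_{ij} = 1 \Rightarrow y_i = 1$. Equivalently, for each row $i$, either no item in $\mathcal{G}$ participates in test $i$, or at least one defective participates in test $i$.

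First I would reduce the joint event to a per-row event. Because the rows of $\mathbf{A}$ are i.i.d., it suffices to compute the probability $q(g,k,p)$ that the above condition holds for a single test, and then take the $m$th power. Writing $E_i$ for the event ``all items in $\mathcal{G}$ are hidden in test $i$'', and noting that the $g$ entries $\{a_{ij}\}_{j\in \mathcal{G}}$ are independent of the $k$ entries $\{a_{i\ell}\}_{\ell\in \mathcal{K}}$ (since the test matrix is i.i.d.\ Bernoulli$(p)$ and $\mathcal{G}\cap\mathcal{K}=\emptyset$), I would condition on whether any item of $\mathcal{G}$ participates in test $i$.

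Let $A_i=\{\text{no item of }\mathcal{G}\text{ participates in test }i\}$, so $\mathbb{P}(A_i)=(1-p)^g$; on $A_i$, $E_i$ holds automatically. Let $B_i=\{\text{at least one defective participates in test }i\}$, so $\mathbb{P}(B_i) = 1-(1-p)^k$; on $A_i^c$, the event $E_i$ is equivalent to $B_i$. Using independence of $A_i$ and $B_i$, I get
\begin{equation}
\mathbb{P}(E_i) \;=\; (1-p)^g + \bigl(1-(1-p)^g\bigr)\bigl(1-(1-p)^k\bigr) \;=\; 1 - (1-p)^k + (1-p)^{g+k}, \nonumber
\end{equation}
after routine algebra. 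Since the events $\{E_i\}_{i=1}^{m}$ are independent across rows, $\mathbb{P}^h_{g}(m) = \mathbb{P}(E_1)^m$ yields exactly \eqref{eq:probability_CoMa_equation}.

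There is no real obstacle here; the only subtlety is making sure to invoke the independence between the $\mathcal{G}$-columns and the $\mathcal{K}$-columns in the same row (which holds because the entries of $\mathbf{A}$ are i.i.d.\ and the two index sets are disjoint), and separately the independence across rows. The formula does not depend on the identity of $\mathcal{G}$, only on $|\mathcal{G}|=g$, which is why one may speak of ``a fixed set'' of $g$ non-defectives. Note also that the bound is stated for $1 \le g < k$ but the derivation is valid for any $0 \le g \le n-k$, with $g=0$ trivially giving $\mathbb{P}^h_0(m)=1$.
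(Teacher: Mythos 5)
Your proof is correct and follows essentially the same route as the paper's: reduce to a single test by independence across rows, compute the per-test hiding probability by splitting into mutually exclusive cases, and raise to the $m$th power. The only cosmetic difference is the partition used (you condition on whether any item of $\mathcal{G}$ participates, while the paper conditions on whether the test is positive or negative), and the two give the same expression $1-(1-p)^k+(1-p)^{g+k}$ after expansion.
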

From the above Lemma, we obtain the following sufficiency condition on $m$ for CoMa under the PAC model:
\begin{theorem}\label{thm:CoMa_Bound}
	The sufficient number of tests such that the predicted outcome based on the estimated defective set does not agree with the true outcome on future group tests with probability at most $\epsilon$ and confidence parameter $1-\delta$~is
	\begin{equation}
		\label{eq:CoMa_Bound}
		m_S =\frac{\log{n-k \choose g_\epsilon + 1}+\log\frac{1}{\delta}}{\log\left(1/(1-(1-p)^k+(1-p)^{g_\epsilon + k + 1})\right)},
	\end{equation}
	with $g_\epsilon$ as given by \eqref{eq:Prob_Error_FP_Relationship}.
\end{theorem}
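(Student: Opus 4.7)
The plan is to reduce the PAC bound $\mathbb{P}(e(\hat{x},x) > \epsilon) \leq \delta$ to the event that the number $G$ of hidden non-defectives exceeds $g_\epsilon$, and then to bound this tail probability by a union bound over all $(g_\epsilon+1)$-subsets of non-defectives, applying Lemma~\ref{lem:probability_CoMa} to each subset.

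First I would use the construction already supplied in the text: since CoMa only produces false positives, the per-test error probability conditioned on $G$ hidden non-defectives equals $(1-(1-p)^G)(1-p)^k$, and this quantity is monotonically non-decreasing in $G$. Consequently, $e(\hat{x},x) \leq \epsilon$ is implied by $G \leq g_\epsilon$, where $g_\epsilon$ is defined in~\eqref{eq:Prob_Error_FP_Relationship}. Hence it suffices to enforce $\mathbb{P}(G > g_\epsilon) \leq \delta$. Second, I would observe that $\{G \geq g_\epsilon+1\}$ is exactly the event that at least one $(g_\epsilon+1)$-subset of the $n-k$ non-defectives is entirely hidden in all $m$ tests, so the union bound gives
\begin{equation}
\mathbb{P}(G > g_\epsilon) \;\leq\; \binom{n-k}{g_\epsilon+1}\,\mathbb{P}^{h}_{g_\epsilon+1}(m).
\end{equation}
Third, I would substitute the closed-form expression from Lemma~\ref{lem:probability_CoMa}, namely $\mathbb{P}^{h}_{g_\epsilon+1}(m) = \bigl(1-(1-p)^k+(1-p)^{g_\epsilon+k+1}\bigr)^m$, and impose
\begin{equation}
\binom{n-k}{g_\epsilon+1}\bigl(1-(1-p)^k+(1-p)^{g_\epsilon+k+1}\bigr)^{m} \;\leq\; \delta.
\end{equation}
Solving this inequality for $m$ by taking logarithms and noting that the base inside the $m$-th power lies in $(0,1)$ yields the bound claimed in~\eqref{eq:CoMa_Bound}.

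I do not foresee a deep technical obstacle, since Lemma~\ref{lem:probability_CoMa} handles the only nontrivial probabilistic calculation. The main care point is justifying the reduction $\{G > g_\epsilon\} = \bigcup_{S:\,|S|=g_\epsilon+1}\{\text{all items in }S\text{ are hidden}\}$ cleanly so that the union bound is applied to a family of identically-distributed (though correlated) events, each of which has probability given by Lemma~\ref{lem:probability_CoMa} with $g=g_\epsilon+1$. A secondary subtlety is that $g_\epsilon$ must satisfy $g_\epsilon+1 \leq n-k$ for the binomial coefficient to be positive; this is implicit because otherwise $\epsilon$ is already achieved trivially. With these observations, the theorem follows by algebraic rearrangement, and no asymptotics or optimization over $p$ is needed at this stage.
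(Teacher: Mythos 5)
Your proposal is correct and follows essentially the same route as the paper's own proof: reduce the PAC condition to $\mathbb{P}(G > g_\epsilon) \leq \delta$, apply the union bound over $(g_\epsilon+1)$-subsets of the $n-k$ non-defectives, invoke Lemma~\ref{lem:probability_CoMa} with $g = g_\epsilon+1$, and solve for $m$. Your added remarks on the monotonicity of the per-test error in $G$ and the requirement $g_\epsilon + 1 \leq n-k$ are sound clarifications of steps the paper leaves implicit.
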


\subsubsection{Optimum Bernoulli Parameter}
\label{sec:Optimum_p}
\textcolor{black}{We can determine the optimum value of $p$ for which~\eqref{eq:CoMa_Bound} is minimized by solving the following mixed-integer non-linear program (MINLP):}
	\begin{align}
		\hat{m}_S, \hat{p}_\text{opt}, \hat{g}_\epsilon &= \underset{\underset{g_\epsilon \in \mathbb{Z}_+ \cup \{0\}}{m \in \mathbb{Z}_+,~p \in (0,1)}}{\argmin}~m \nonumber \\
		&~~~~~~\text{s.t.}~(1-(1-p)^{g_\epsilon})(1-p)^k \leq \epsilon \nonumber \\&~~~\text{and} \nonumber \\
		\label{eq:CoMa_MINLP}
		&{n-k \choose g_\epsilon + 1}\left(1\!-\!(1\!-\!p)^k\!+\!(1\!-\!p)^{g_\epsilon\!+\!1\!+\!k}\right)^m \leq \delta.
\end{align}
\textcolor{black}{The constraints above correspond to the error probability ($\epsilon$) and confidence ($1-\delta$) requirements. Since the problem does not admit a closed-form solution, we solve~\eqref{eq:CoMa_MINLP} using a grid-search over $m$, $p$ and $g_\epsilon$, ensuring that we are varying them over the feasible range.}

\begin{figure*}[t]
	\centering
	\includegraphics[width=1.0\linewidth]{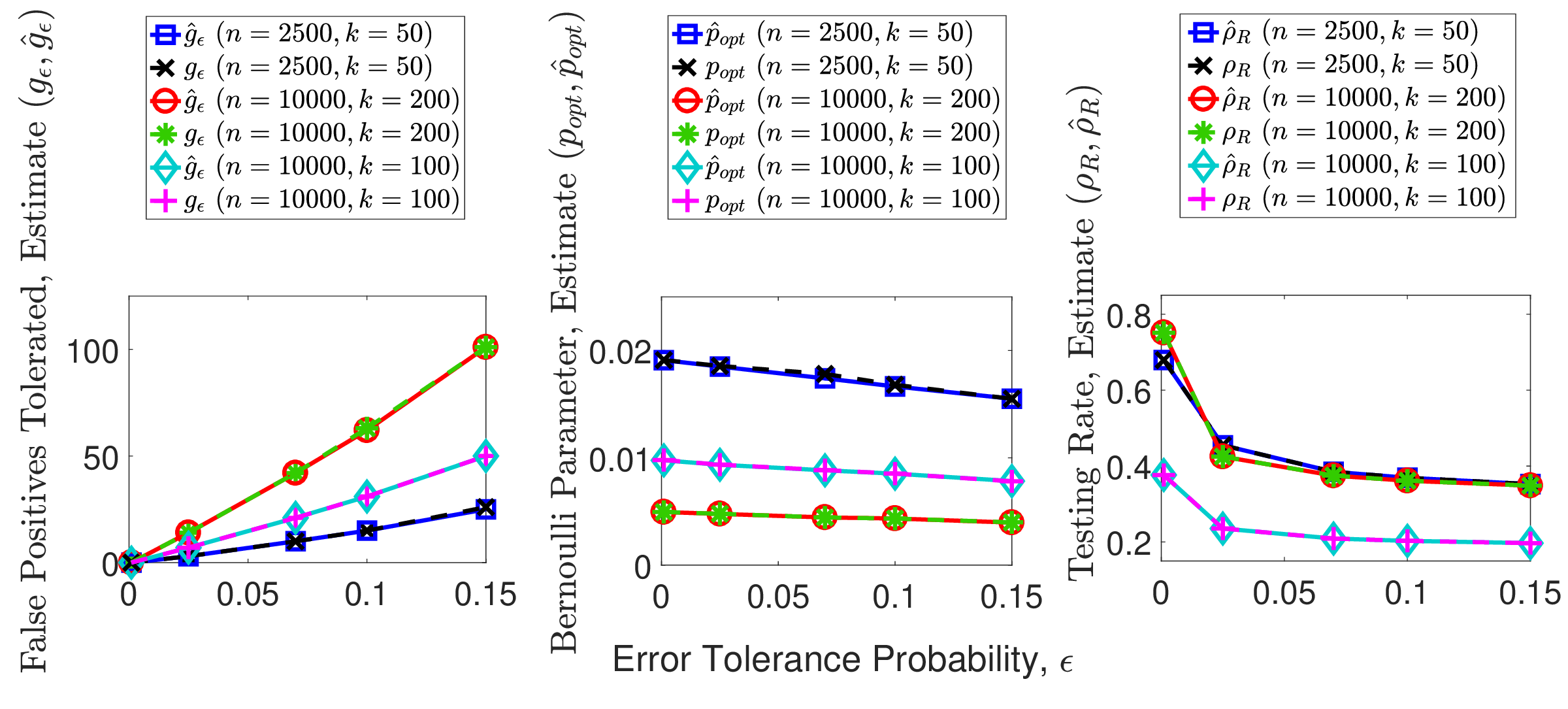}
	\vspace{-0.3in}
	\caption{Comparison of the solution of MINLP using grid-search vs. the implicit equations for $m_S$, $p_\text{opt}$ and $g_\epsilon$ at $\delta = 0.01$ when $(n, k) \in \{(2500, 50), (10000, 200), (10000, 100)\}$ over various values of the approximation error tolerance, $\epsilon$.}
	\label{fig:coma_minlp}
\end{figure*}

\textcolor{black}{In order to obtain insights, we also minimize~\eqref{eq:CoMa_Bound} using a heuristic fixed-point iteration method. Note that~\eqref{eq:CoMa_Bound} holds for any $g_\epsilon \in \mathbb{Z}_+$. Further, when $g_\epsilon=0$ (i.e., for exact recovery), $p_{\text{opt}}$ in~\eqref{eq:CoMa_pOpt} reduces to $1/(k+1)$ or $O(1/k)$ for large $k$. We can use $1/(k+1)$ as an initial value for $p$ and find the largest number of false positive errors $g_\epsilon$ for which the $\epsilon$-constraint is satisfied from \eqref{eq:Prob_Error_FP_Relationship}. Then, for the given $g_\epsilon$, we seek the $p$ for which the term $\left(1-(1-p)^k+(1-p)^{g_\epsilon + 1 + k}\right)$ is minimized, since this results in the smallest $m$ for which the $\delta$-constraint is satisfied. Differentiating this term and setting it equal to zero yields the optimum value of $p$ as}
\begin{equation}
	\label{eq:CoMa_pOpt}
	p_{\text{opt}}\!=\!1\!-\!\sqrt[{g_\epsilon\!+\!1}]{k/(k\!+\!g_\epsilon\!+\!1)}.	
\end{equation}
\textcolor{black}{We now iterate between \eqref{eq:CoMa_pOpt} and \eqref{eq:Prob_Error_FP_Relationship} to obtain the optimal $g_\epsilon$ and $p_\text{opt}$. Although it appears hard to prove analytically, we find that this procedure converges, and yields the globally optimal solution. We compare the numerical solution, $\hat{g}_\epsilon$ and $\hat{p}_\text{opt}$, obtained by solving \eqref{eq:CoMa_MINLP} with that obtained using the fixed-point procedure, namely, $g_\epsilon$ and $p_\text{opt}$. Also, we compare the value of $m_S$ computed from Theorem~\ref{thm:CoMa_Bound} along with~\eqref{eq:CoMa_pOpt} and~\eqref{eq:Prob_Error_FP_Relationship} with that obtained by solving~\eqref{eq:CoMa_MINLP}, i.e., $\hat{m}_S$.} The computed values and the solution of the MINLP are plotted over various values of the approximation error tolerance, $\epsilon$, with $\delta = 0.01$ for a collection $(n, k) \in \{(2500, 50), (10000, 200), (10000, 100)\}$ in Fig.~\ref{fig:coma_minlp}. It can be observed that the expression for $p_\text{opt}$ as given in~\eqref{eq:CoMa_pOpt} is consistent with the expression for $g_\epsilon$ in~\eqref{eq:Prob_Error_FP_Relationship} which further yields the optimum $m_S$ as given by~\eqref{eq:CoMa_Bound}.

\subsubsection{Order-Wise Analysis of~\eqref{eq:CoMa_Bound}}
\label{sec:CoMa_Order-wise_analysis}
We note that the $\log{n-k \choose g_\epsilon+1}$ term appearing in the expression for $m_S$ is similar to the $\log{n-k \choose \tau}$ term obtained in~\cite{Scarlett_Cevher_2017, Aldridge_GT_IT_2019}, where $\tau$ denoted the number of false positives. Further, the achievability bound in~\cite{Scarlett_Cevher_2017} is characterized by the conditional mutual information term in the denominator. \textcolor{black}{In this sub-section, we elucidate the behavior of the denominator in~\eqref{eq:CoMa_Bound} for large $n$ and $k$, resulting in an order-wise characterization of the achievability bound.}

First, we can relate the CoMa bound for the exact recovery case in \cite{Chan_Jaggi_Saligrama_Agnihotri_2014} to our PAC bound. From~\eqref{eq:Prob_Error_FP_Relationship}, we see that $\epsilon = 0$ implies $g_\epsilon = 0$. In the \emph{sub-linear} regime, $k = \Theta(n^\beta), \beta \in (0,1)$, the numerator in \eqref{eq:CoMa_Bound} can be upper bounded by $\log n + \log(1/\delta)$. Let $z \triangleq 1-(1-p)^k + (1-p)^{k+1}$. \textcolor{black}{Using $\log(1/\tilde{z}) \geq 1-\tilde{z},~\tilde{z} > 0$}, the denominator in \eqref{eq:CoMa_Bound} can be lower bounded as follows:
\begin{equation}
	\log\left(1/\tilde{z}\right) \geq p(1-p)^k = \frac{1}{ek} - O\left( \frac{1}{ek^2} \right)
	\geq \frac{1}{e(k+1)} \label{eq:Denominator_CoMA_Bound_Reduction}
\end{equation}
where $p=1/k$ \cite{Chan_Jaggi_Saligrama_Agnihotri_2014}, and Laurent's series is used in the penultimate step as $k$ gets large. Combining the reduced expressions, \textcolor{black}{we obtain that $m = e(k + 1)\left(\log n + \log(1/\delta)\right)$ are sufficient for exact recovery.} From the sufficiency result for $m$ in \cite[Theorem 4]{Chan_Jaggi_Saligrama_Agnihotri_2014}, with $\delta^\prime\!\triangleq\!n^{-\delta}$, $m = ek(\log n + \log(1/\delta^\prime))$ is sufficient, which is similar to our bound at $\epsilon\!=\!0$.

\textcolor{black}{Second, we analyze how the achievability bound in~\eqref{eq:CoMa_Bound} behaves as as a function of $\delta$ and $\epsilon$ when $n$ and $k$ grow large. Define $z \triangleq (1-p)^k - (1-p)^{g_\epsilon+k+1}$. Using ${n-k \choose g_\epsilon+1} \leq (e(n-k)/(g_\epsilon+1))^{g_\epsilon+1}$ along with the fact that $\log(1/(1-z)) \geq z$, $z \in [0, 1]$ in~\eqref{eq:CoMa_Bound}, we see that
\begin{equation}
	\label{eq:CoMa_Bound_geps_numreduce}
	m_S = \frac{(g_\epsilon+1)\log\left(\frac{n-k}{g_\epsilon+1}\right) + (g_\epsilon+1) + \log\left(\frac{1}{\delta}\right)}{(1-p)^k\left(1-(1-p)^{g_\epsilon+1}\right)},
\end{equation}
tests are sufficient to ensure no more than $g_\epsilon$ errors with confidence $1-\delta$. Set $p = 1/k$ in~\eqref{eq:CoMa_Bound_geps_numreduce}. Using $(1-x)^r \leq e^{-xr}$ for $x \in [0,1], r \geq 0$ and $1-e^{-y} \geq y/2$, for $y \in [0,1]$, we lower bound the second factor in the denominator of~\eqref{eq:CoMa_Bound_geps_numreduce} with $x = p$, $r = g_\epsilon+1$ and $y = (g_\epsilon+1)/k$ as:
\begin{align}
	\label{eq:CoMa_geps_reduce_den_approx}		
	1-(1-p)^{g_\epsilon+1} \geq \frac{g_\epsilon+1}{2k}.
\end{align}
Using $(1-p)^k \to 1/e$ for large $k$ and $\log(n-k) < \log n$, along with~\eqref{eq:CoMa_geps_reduce_den_approx} in~\eqref{eq:CoMa_Bound_geps_numreduce}, we get
\begin{align}
	m_S &= 2ke\left[\log\left(\frac{n}{g_\epsilon+1}\right) + 1 + \frac{\log\left(\frac{1}{\delta}\right)}{g_\epsilon+1}\right] \nonumber \\
	\label{eq:CoMa_approx_asymptotic}
	&= 2ke\left[\log\left(\frac{n}{ke\epsilon+1}\right) + 1 + \frac{\log\left(\frac{1}{\delta}\right)}{ke\epsilon+1}\right],
\end{align}
where, we have used $p = 1/k$ in~\eqref{eq:Prob_Error_FP_Relationship} and  $\log(1-x) \to -x$, $x \to 0$ along with $(1-p)^k \to 1/e$ as $n$ and $k$ grow, to get $g_\epsilon = ke\epsilon$. From~\eqref{eq:CoMa_approx_asymptotic}, we see that $m_S \propto \log(1/\delta)/\epsilon$ for very small $\delta$. Further, we see that the dependency of our bound on $\epsilon > 0$ is $\propto\!\left(\log(1/\epsilon) + 1/\epsilon\right)$. On the other hand, we get $m_S \approx 2ke\left(\log n + 1 + \log(1/\delta)\right)$ by setting $\epsilon = 0$, i.e., for the exact recovery case, accounting for the confidence parameter~$\delta$.}

\subsection{Near-Constant Row-Weight Design: The CBP Algorithm}
\label{sec:The CBP Algorithm}
The CBP algorithm~\cite{Chan_Jaggi_Saligrama_Agnihotri_2014} declares all the items participating in the group tests with negative outcomes as non-defective, and the remaining items as defective. It is clear that the CBP algorithm, like CoMa, makes only false positive errors. Moreover, both CoMa and CBP algorithms are equivalent in the sense that they lead to the same set of items being declared positive. In the case of the CBP algorithm, we consider a near-constant row-weight design for the random testing matrix, and, consequently, the analysis technique is different from the CoMA case. As we shall see from the numerical results, the CoMa bound is tighter than the CBP bound in the exact recovery case. On the other hand, the CBP bound is tighter in the approximate recovery case, i.e., as $g_\epsilon$ increases. Therefore, it is of interest to study CBP bounds separately. We also note that the CoMa analysis yields an analytical expression for the optimum parameters of the test design, as seen earlier.

The CBP algorithm is related to the Coupon Collector Problem (CCP), where the goal is to collect distinct \emph{coupons} to obtain a set of all available coupons~\cite{Chan_Jaggi_Saligrama_Agnihotri_2014}. More precisely, there is one of $n$ distinct types of coupon, say, inside each cereal box. How many cereal boxes should a person purchase in order to collect all the $n$ coupons?

The expected stopping time, i.e., the average number of purchases required to succeed (with replacements, as coupons can repeat across purchases) is $nH_n,$ where $H_n \triangleq \sum_{i=1}^{n}~1/i$ denotes the $n$th Harmonic number for any $n \in \mathbb{N}$ and $H_0 \triangleq 0$. A well known asymptotic approximation for $H_n$ is $H_n \approx \log n + \gamma + 1/2n + O(1/n^2) \approx O(\log n)$, where $\gamma \approx 0.5772$ is the Euler–Mascheroni constant. Therefore, the expected stopping time is $O(n \log n)$ for sufficiently large $n$. \textcolor{black}{One can bound $H_n$ as $\log n + \gamma < H_n < \log(n+1) + \gamma$.} Also, the probability that the stopping time exceeds $\chi n \log n$ is at most $n^{-\chi+1}$, for~$\chi > 1$~\cite{Motwani_Raghavan_1995, Mitzenmacher_Upfal_2005}.

Before we present the main result, on similar lines to \cite{Chan_Jaggi_Saligrama_Agnihotri_2014}, we relate the CBP algorithm to the CCP as follows. The CBP algorithm collects items from a sequence of tests. Consider an $s$-length test vector whose entries index the items being pooled in a given test. The $s$ items are chosen with replacement.\footnote{A given item can be potentially picked more than once apart from being picked once or not picked at all. By assigning the number of times an item is picked in the $i$th test to the $(i,j)$th entry of the test design matrix, the authors in~\cite{Chang_Chen_Guo_Huang_2015} present a multi-group testing model using standard (not Boolean) arithmetic. However, in our work, the $(i,j)$th entry of the testing matrix is set to one irrespective of whether the $j$th item is picked once or more than once in $i$th test, and is set to zero otherwise.}

Following~\cite{Chan_Jaggi_Saligrama_Agnihotri_2014}, it is clear that there is a natural bijection between the $s$-length vector and $n$-length row vector $\textbf{a}_i$ of the $i$th group test. Since the probability of an item occurring at any location of the $s$-length vector is uniform and independent, and this property holds across the tests, the items in any sub-sequence of $m^\prime$ tests with negative outcomes may be viewed as the outcome of a process of selecting a single chain of $sm^\prime$ coupons. Since the outcome of a single test with $s$ items is negative with probability $((n-k)/n)^s$, in order for the CBP algorithm to succeed with $m$ tests, in the exact recovery case, we require
\begin{equation}
	\label{eq:CBP_CCP_Expected_Inequality}
	ms\left(\frac{n\!-\!k}{n}\right)^s \geq (n\!-\!k)H_{n\!-\!k} \geq (n\!-\!k)\left[ \log(n\!-\!k)\!+\!\gamma \right].
\end{equation}

For approximate recovery, i.e., $g_\epsilon$ errors, it suffices to \emph{stop collecting} items participating in negative outcome tests once we collect $n-k-g_\epsilon$ non-defective items.  Lemma~\ref{lem:CCP_Lemma} presents the expected stopping time and tail probability in this case.
\begin{lemma}\label{lem:CCP_Lemma}
	For the coupon collector problem with $w$ distinct coupons, with each coupon being picked in an equally likely and independent fashion, if any subset containing $w-g$ distinct coupons are sufficient to complete the collection, then\\
	(a) The expected stopping time is $w[\log w + \gamma - H_g]$, where $\gamma$ is the Euler-Mascheroni constant and $H_g$ is the $g$th Harmonic number as defined earlier.\\
	(b) For any $\chi > 1$, the stopping time exceeds 
	$\chi w[\log w + \gamma - H_g]$ with probability at most $w^{(g+1)(-\chi+1)} \frac{e^{(g+1)\chi [H_g - \gamma] + g}}{(g+1)^{(g+1)}}$.
\end{lemma}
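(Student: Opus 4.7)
The overall plan is to reduce both parts of Lemma~\ref{lem:CCP_Lemma} to standard arguments, using the key observation that the stopping rule ``complete any $w-g$ of the $w$ coupons'' depends only on the number of distinct coupons seen so far and not on their identities. For part~(a), I would write the stopping time as $T = \sum_{i=1}^{w-g} T_i$, where $T_i$ is the number of trials between the arrival of the $(i-1)$th and the $i$th distinct coupon. Conditioned on having $i-1$ distinct coupons in hand, each fresh trial lands on a novel coupon with probability $(w-i+1)/w$, so $T_i$ is geometric with mean $w/(w-i+1)$. Linearity of expectation and reindexing then give $E[T] = w \sum_{j=g+1}^{w} 1/j = w(H_w - H_g)$, and substituting the standard approximation $H_w \approx \log w + \gamma$ already invoked in the paragraph preceding the lemma yields the announced form.

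For part~(b), I would note that $\{T > t\}$ occurs precisely when at least $g+1$ of the $w$ coupons have not appeared in the first $t$ trials. For any fixed subset $S \subseteq [w]$ with $|S| = g+1$, the probability that none of its members has been chosen in $t$ independent, uniform trials is exactly $(1 - (g+1)/w)^t$. A union bound over the $\binom{w}{g+1}$ choices of $S$, combined with $(1-x)^t \leq e^{-xt}$, gives
\[
    P(T > t) \;\leq\; \binom{w}{g+1}\, e^{-t(g+1)/w}.
\]
Setting $t = \chi w [\log w + \gamma - H_g]$ converts the exponential factor into $w^{-\chi(g+1)} e^{\chi(g+1)(H_g - \gamma)}$, and a Stirling-type estimate of the form $\binom{w}{g+1} \leq w^{g+1} e^g / (g+1)^{g+1}$ (which follows from $\binom{w}{g+1} \leq w^{g+1}/(g+1)!$ together with the inequality $(g+1)! \geq (g+1)^{g+1}/e^g$, derivable from $(1+1/n)^n \leq e$) regroups the factors into the advertised bound.

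The main obstacle is not structural but arithmetic. The geometric decomposition in part~(a) and the union bound in part~(b) are essentially textbook, and the only conceptual point is to recognize the permutation symmetry of the stopping rule. What requires care is (i) choosing the specific Stirling-type inequality on $(g+1)!$ that yields the factor $e^g$ rather than the looser $e^{g+1}$ one reads off from the standard bound $\binom{w}{k} \leq (we/k)^k$, and (ii) keeping the residual $e^{-\chi\gamma}$ and $e^{\chi H_g}$ contributions tracked explicitly so that the final product rearranges into exactly the stated form. Neither difficulty alters the structure of the proof; both are routine bookkeeping once the decomposition and union bound are in place.
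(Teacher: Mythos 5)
Your proof is correct and follows essentially the same route as the paper's: the identical geometric decomposition $T=\sum_{i=1}^{w-g}T_i$ for part (a), and for part (b) the same union bound over the ${w \choose g+1}$ possible sets of missed coupons, the same exponential bound $e^{-t(g+1)/w}$ at $t=\chi w[\log w+\gamma-H_g]$, and the same factorial inequality $(g+1)!\geq (g+1)^{g+1}/e^{g}$ to control the binomial coefficient. The only cosmetic difference is that the paper expresses the probability of missing a fixed set of $g+1$ coupons as the telescoping product $\prod_{i=0}^{g}\left(1-\frac{1}{w-i}\right)^{t}$ and bounds each factor separately via $H_w-H_{w-(g+1)}\geq (g+1)/w$, whereas you use the closed form $\left(1-\frac{g+1}{w}\right)^{t}$ directly; both yield the identical final bound.
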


From Lemma~\ref{lem:CCP_Lemma} (a), the RHS of~\eqref{eq:CBP_CCP_Expected_Inequality} can be modified with $g=g_\epsilon$ and $w=n-k$ to obtain
\begin{equation}
	\label{eq:CBP_CCP_Approximate_Recovery_Inequality}
	ms\left(\frac{n-k}{n}\right)^s \geq (n-k)\left[\log(n-k) + \gamma - H_{g_\epsilon}\right].
\end{equation}

With this background, we are ready to present the result on the sufficient number of tests required by the CBP algorithm in the PAC framework. Since CBP can only make false positive errors, the PAC equation \eqref{eq:PAC bound} takes the form $\mathbb{P}(G > g_\epsilon) \leq \delta$, similar to the CoMa algorithm. 

We note that an error in the prediction occurs when none of the $k$ defectives out of $n$ items participate in the group test and at least one of $G$ hidden non-defectives participates, making the predicted outcome $1$ whereas the true outcome is $0$. On similar lines as~\eqref{eq:Prob_Error_FP_Relationship}, the largest value of $g_\epsilon$ such that the error between $\hat{x}$ and $x$ is at most $\epsilon$ can be computed as
\begin{equation}
	\mathbb{P}_{\textbf{a}_i \sim \mathcal{S}}(\hat{x}(\textbf{a}_i) \neq x(\textbf{a}_i)) = \left( 1\!-\!\frac{k}{n} \right)^s \left[ 1\!-\!\left( 1\!-\!\frac{G}{n\!-\!k} \right)^s \right] \leq \epsilon, \nonumber
\end{equation} 
where $\mathcal{S}$ denotes a uniform distribution over all $s$-length vectors. This yields
\begin{equation}
	\label{eq:Prob_Error_FP_Relationship_sLength}
	g_\epsilon = \left\lfloor (n-k) \left[ 1 - \left( 1 - \frac{\epsilon}{\left(1-\frac{k}{n}\right)^s} \right)^{1/s} \right] \right\rfloor.
\end{equation} 
The following theorem characterizes the sufficient number of tests required by the CBP algorithm in the PAC framework.
\begin{theorem}\label{thm:CBP_s_Bound}
	Suppose $s$ items are chosen with replacement in each group test. The sufficient number of tests required by the CBP algorithm such that the predicted outcome based on the estimated defective set does not agree with the true outcome on future group tests with probability at most $\epsilon$ and confidence parameter $1-\delta$~is
	\begin{equation}
		\label{eq:CBP_s_Bound}
		m_S = \frac{\chi(n-k)}{(1-\eta)s\left(\frac{n-k}{n}\right)^s} \left[\log(n-k)+\gamma-H_{g_\epsilon}\right],
	\end{equation}
	where $g_\epsilon$ is given by~\eqref{eq:Prob_Error_FP_Relationship_sLength}, 
	\begin{equation}
		\label{eq:CBP_s_Bound_Chi}
		\chi = \frac{\left[\frac{\log\left(\frac{1}{c\delta}\right)}{g_\epsilon+1} + \frac{g_\epsilon}{g_\epsilon+1} + \log\left( \frac{n-k}{g_\epsilon+1} \right)\right]}{\log(n-k)+\gamma-H_{g_\epsilon}},
	\end{equation}
$\eta = (-C + \sqrt{C^2 +4C})/2 \in (0,1)$, with
	\begin{equation}
		\label{eq:CBP_CCP_C_Definition}
		C \triangleq \frac{\log\left(\frac{1}{(1-c)\delta}\right)}{\left(\frac{n\!-\!k}{s}\right) \left[ \frac{\log\left(\frac{1}{c\delta}\right)}{g_\epsilon\!+\!1}\!+\!\frac{g_\epsilon}{g_\epsilon\!+\!1}\!+\!\log\left( \frac{n\!-\!k}{g_\epsilon\!+\!1} \right) \right]},
	\end{equation}
	and $c \in (0,1)$ is a design parameter.
\end{theorem}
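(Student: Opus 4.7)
The plan is to decompose the overall failure event \(\{G > g_\epsilon\}\) into a \emph{measurement-count} event and a \emph{coupon-collection} event, and to apportion the confidence budget \(\delta\) between them using the design parameter \(c\in(0,1)\).

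First, I would exploit the near-constant row-weight design: each test that returns a negative outcome is in one-to-one correspondence (via the bijection noted in the paragraph preceding Lemma~\ref{lem:CCP_Lemma}) with an \(s\)-length sequence of coupons drawn uniformly with replacement from the \(n-k\) non-defective items. Writing \(M_0\) for the number of negative-outcome tests, we have \(M_0\sim\mathrm{Bin}(m,p_0)\) with \(p_0=((n-k)/n)^s\) and mean \(\mu=mp_0\). The CBP algorithm produces at most \(g_\epsilon\) hidden non-defectives exactly when the \(sM_0\) coupons collected from negative tests cover all but at most \(g_\epsilon\) of the \(n-k\) non-defectives, i.e., when the stopping time \(T^\star\) of the subset coupon-collector problem (SCCP) of Lemma~\ref{lem:CCP_Lemma} with \(w=n-k\) and \(g=g_\epsilon\) satisfies \(T^\star\le sM_0\).

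Introducing a Chernoff parameter \(\eta\in(0,1)\) and the split parameter \(c\in(0,1)\), I would then apply the union bound
\begin{equation*}
\mathbb{P}(G>g_\epsilon)\le \mathbb{P}\bigl(M_0<(1-\eta)\mu\bigr)+\mathbb{P}\bigl(T^\star>s(1-\eta)\mu\bigr),
\end{equation*}
and require that the first term be at most \((1-c)\delta\) and the second at most \(c\delta\). For the first term I would use a Chernoff-type lower-tail inequality for the binomial random variable \(M_0\); equating the resulting exponential bound to \((1-c)\delta\) and substituting the value of \(\mu\) implied by matching \(s(1-\eta)\mu\) with the SCCP budget \(\chi(n-k)[\log(n-k)+\gamma-H_{g_\epsilon}]\) yields a quadratic \(\eta^2+C\eta-C=0\) whose positive root is \(\eta=(-C+\sqrt{C^2+4C})/2\), with \(C\) as defined in~\eqref{eq:CBP_CCP_C_Definition}.

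For the second term I would invoke Lemma~\ref{lem:CCP_Lemma}(b) with \(w=n-k\) and \(g=g_\epsilon\), setting the stopping-time threshold \(\chi w[\log w+\gamma-H_g]\) equal to \(s(1-\eta)\mu\) and requiring the resulting tail bound to be at most \(c\delta\). Taking logarithms and solving the inequality for \(\chi\) yields precisely~\eqref{eq:CBP_s_Bound_Chi}; rearranging \(s(1-\eta)\mu=\chi(n-k)[\log(n-k)+\gamma-H_{g_\epsilon}]\) then gives the claimed sufficient number of tests~\eqref{eq:CBP_s_Bound}. The main technical obstacle will be the self-referential coupling between the two bounds: \(\eta\) appears in \(m_S\) through \(\mu\), but the Chernoff inequality that determines \(\eta\) itself depends on \(\mu\). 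Handling this by parameterising through \(C\) (which collects the SCCP-dictated value of \((1-\eta)\mu\)) is what lets the quadratic for \(\eta\) close in the clean form above; verifying that the resulting \(\eta\) lies in \((0,1)\)---so that both the Chernoff and SCCP tail bounds are simultaneously valid---is the final consistency check needed to complete the argument.
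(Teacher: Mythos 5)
Your proposal is correct and follows essentially the same route as the paper's proof: the same two-event decomposition (too few negative-test coupon draws, bounded by a Chernoff lower tail, versus the subset coupon collection failing despite enough draws, bounded by Lemma~\ref{lem:CCP_Lemma}(b)), the same $c\delta$ versus $(1-c)\delta$ split of the confidence budget, and the same resolution of the $\eta$--$m$ coupling via the quadratic $\eta^2+C\eta-C\geq 0$. Your explicit framing in terms of $M_0\sim\mathrm{Bin}(m,p_0)$ and the stopping time $T^\star$ is a slightly cleaner formalization of what the paper states in words, but it is not a different argument.
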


\subsubsection{Order-Wise Analysis - Exact Recovery Case}
\label{sec:CBP_Order-wise_analysis_exact}
\textcolor{black}{We now discuss the order-wise behavior of~\eqref{eq:CBP_s_Bound} for large $n$ and $k$, and specialize the result to the case when $\epsilon = 0$.} Note that differentiating~\eqref{eq:CBP_CCP_Expected_Inequality} with respect to $s$ and setting the derivative equal to $0$ yields $s^* = 1/\log(n/(n-k))$.
Using this value of $s^*$, we obtain the following corollary.
\begin{cor}\label{cor:CBP_s_Bound_sStar}
	With $\chi$ and $\eta$ as specified in Theorem~\ref{thm:CBP_s_Bound} computed at $s = s^* \triangleq 1/\log(n/(n-k))$, the sufficient number of tests required by the CBP algorithm such that the predicted outcome based on the estimated defective set does not agree with the true outcome on future group tests with probability at most $\epsilon$ and confidence parameter $1-\delta$~is
	\begin{equation}
		\label{eq:CBP_s_Bound_Cor_sStar}
		m_S = \frac{\chi k}{(1-\eta)}\left(\frac{n}{n-k}\right)^{s^*} \left[\log(n-k)+\gamma-H_{g_\epsilon}\right],
	\end{equation} 
	with $g_\epsilon$ as given by~\eqref{eq:Prob_Error_FP_Relationship_sLength}.
\end{cor}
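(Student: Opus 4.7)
The plan is to specialize Theorem~\ref{thm:CBP_s_Bound} by substituting $s = s^* \triangleq 1/\log(n/(n-k))$ into~\eqref{eq:CBP_s_Bound} and simplifying. Since $\chi$ and $\eta$ already absorb the choice of $s$ through~\eqref{eq:CBP_s_Bound_Chi} and~\eqref{eq:CBP_CCP_C_Definition}, only the prefactor $(n-k)/[\,s\,(\tfrac{n-k}{n})^s]$ needs to be handled.

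First I would exploit the defining identity $s^* \log(n/(n-k)) = 1$. Raising $(n-k)/n$ to the power $s^*$ gives $\left(\tfrac{n-k}{n}\right)^{s^*} = e^{s^*\log((n-k)/n)} = e^{-1}$, and dually $\left(\tfrac{n}{n-k}\right)^{s^*} = e$. Combined with $1/s^* = \log(n/(n-k))$, the prefactor rearranges as
\[
\frac{n-k}{s^*\left(\tfrac{n-k}{n}\right)^{s^*}} \;=\; (n-k)\log\!\left(\tfrac{n}{n-k}\right)\left(\tfrac{n}{n-k}\right)^{s^*},
\]
which already isolates the factor $\left(\tfrac{n}{n-k}\right)^{s^*}$ that appears in the corollary.

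Second, I would invoke the first-order Taylor expansion $-\log(1-k/n) = k/n + O((k/n)^2)$, which in the sub-linear regime $k = o(n)$ adopted throughout Section~III gives $(n-k)\log(n/(n-k)) = k + O(k^2/n) \to k$. Substituting this back into the rearranged form of~\eqref{eq:CBP_s_Bound} yields precisely~\eqref{eq:CBP_s_Bound_Cor_sStar}.

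The derivation is essentially routine algebra, so I do not anticipate a serious technical obstacle. The one subtlety worth flagging is that the replacement of $(n-k)\log(n/(n-k))$ by $k$ is an asymptotic identity accurate to leading order in $k/n$; retaining $(n-k)\log(n/(n-k))$ instead of $k$ throughout~\eqref{eq:CBP_s_Bound_Cor_sStar} would yield a fully non-asymptotic version of the corollary. Similarly, keeping the factor $(n/(n-k))^{s^*}$ explicit (rather than collapsing it to $e$) appears to be a stylistic choice that makes the scaling with $n$ and $k$ more transparent for the downstream contour/surface visualizations of $m_S$.
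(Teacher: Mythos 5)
Your route is essentially the paper's: substitute $s=s^*$ into~\eqref{eq:CBP_s_Bound}, use $1/s^*=\log(n/(n-k))$ to rewrite the prefactor, and replace $(n-k)\log(n/(n-k))$ by $k$. The one place you diverge is in how that last replacement is justified. You treat it as a first-order Taylor approximation valid in the sub-linear regime, and accordingly flag the corollary as asymptotic; the paper instead uses the exact inequality $\log(1+x)\leq x$ (it writes the two-term bound $\log(1+x)\geq x-x^2/2$ as an intermediate step) with $x=k/(n-k)$, which gives $(n-k)\log(n/(n-k))\leq k$ for \emph{all} $0<k<n$. Since the quantity being bounded is a sufficient threshold on $m$, enlarging it can only decrease the failure probability in~\eqref{eq:CBP_s_length_Failure_Probability_Raw_Form}, so the replacement by $k$ preserves sufficiency exactly, not just to leading order in $k/n$. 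Your proof therefore reaches the correct expression but understates its validity; the non-asymptotic version you propose ``retaining $(n-k)\log(n/(n-k))$'' is unnecessary, because~\eqref{eq:CBP_s_Bound_Cor_sStar} is already non-asymptotic once the monotonicity-of-sufficiency argument is made explicit.
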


Recall that $g_\epsilon = 0$ corresponds to the exact recovery case as considered in the  literature~\cite{Chan_Jaggi_Saligrama_2011}. Also, for reasonably large $n$, $(n/(n-k))^{s^*} \to e$ \textcolor{black}{from below}. \textcolor{black}{Substituting for $\chi$ from~\eqref{eq:CBP_s_Bound_Chi} with $g_\epsilon = 0$ into~\eqref{eq:CBP_s_Bound_Cor_sStar} leads to
\begin{equation}
	\label{eq:CBP_s_Bound_Cor_sStar_Exact}
	m_S = \frac{ek}{(1-\eta)} \left[\log(n-k)+\log\left(\frac{2}{\delta}\right)\right].
\end{equation}
We now set $s = s^*$, $c = 1/2$, and $g_\epsilon = 0$ and reduce~\eqref{eq:CBP_CCP_C_Definition} to
\begin{align}
	\label{eq:CBP_C_Exact}
	C &= \frac{s^*\log\left( \frac{2}{\delta} \right)}{(n-k)\left[ \log\left(\frac{2}{\delta}\right) + \log(n-k) \right]} \\
	\label{eq:CBP_C_Approx_2_Regimes}
      &\approx
    \begin{cases}
	  		~~~~~\frac{s^*}{(n-k)}~&,~\delta \ll \frac{2}{n-k} \\
	  		\frac{s^* \log\left(\frac{2}{\delta}\right)}{(n-k)\log(n-k)}~&,~\delta \gg \frac{2}{n-k}\\
	\end{cases}
\end{align}
Using the fact that $s^*/(n-k) \ll 1$ holds for large $n$ and in the sub-linear regime along with~\eqref{eq:CBP_C_Exact} and~\eqref{eq:CBP_C_Approx_2_Regimes}, a Taylor series approximation yields
\begin{subnumcases}{\eta\!\approx\!}
	\sqrt{\frac{s^*}{(n\!-\!k)}}&,~$\delta\!\ll\!\frac{2}{n\!-\!k}$ \label{eq:CBP_rho_sStar_Dominant_n}
	\\
	\sqrt{\frac{s^* \log\left(\frac{2}{\delta}\right)}{(n\!-\!k)\log(n\!-\!k)}}&,~$\delta\!\gg\!\frac{2}{n\!-\!k}$ \label{eq:CBP_rho_sStar_Dominant_delta}
	\\
	\sqrt{\frac{s^* \log\left(\frac{2}{\delta}\right)}{(n\!-\!k)\left[\log\left(\frac{2}{\delta}\right)\!+\!\log(n\!-\!k)\right]}}&,~$\delta\!\sim\!\frac{2}{n\!-\!k}$ \label{eq:CBP_rho_sStar_SimMag}
\end{subnumcases} 
where, in~\eqref{eq:CBP_rho_sStar_SimMag}, $a \sim b$ is used to signify that $a$ and $b$ are of the same order.}

From~\eqref{eq:CBP_rho_sStar_SimMag}, we see that $\eta$ scales as $O\left(1/\sqrt{n\log n}\right)$ for large $n$.
From the sufficiency result for $m$ in \cite[Theorem 3]{Chan_Jaggi_Saligrama_Agnihotri_2014}, with $\delta^\prime \triangleq 2n^{-\delta}$, $m = 2ek(\log n + \log(2/\delta^\prime))$ is sufficient, which is similar to our PAC bound \eqref{eq:CBP_s_Bound_Cor_sStar_Exact} when $\epsilon=0$ (\textcolor{black}{also see the discussion below~\eqref{eq:CBP_final_approx}}). Explicitly computing the optimum $\eta$ as in~\eqref{eq:CBP_rho_sStar_Dominant_n},~\eqref{eq:CBP_rho_sStar_Dominant_delta} and~\eqref{eq:CBP_rho_sStar_SimMag} instead of using a nominal value $\eta = 1/2$~\cite{Chan_Jaggi_Saligrama_Agnihotri_2014} yields approximately a factor of $2$ improvement in the testing rate when $g_\epsilon = 0$.

\subsubsection{Order-Wise Analysis - Approximate Recovery Case}
\label{sec:CBP_Order-wise_analysis_approximate}
We first discuss the behavior of $\eta$ when $g_\epsilon > 0$, for large $n$ and $k$. When $g_\epsilon > 0$, we have $1/2 \leq g_\epsilon/(g_\epsilon+1) < 1$ and further, dropping $g_\epsilon/(g_\epsilon+1)$ in the denominator of~\eqref{eq:CBP_CCP_C_Definition} only increases the value of $C$ and hence $\eta$. Therefore, dropping the $g_\epsilon/(g_\epsilon+1)$ term in the expression for $C$ does not violate the sufficiency of the bound in~\eqref{eq:CBP_s_Bound_Cor_sStar}. \textcolor{black}{As before, we set $c = 1/2$ and $s = s^*$ and observe that $s^*/(n-k) \ll 1$ holds for large $n$ and therefore, using a Taylor series approximation in the sub-linear regime, the Chernoff parameter, $\eta$, can be approximated as
\begin{equation}
	\label{eq:CBP_rho_sStar_Approx_Recovery}
	\eta \approx \sqrt{\frac{s^*\log\left(\frac{2}{\delta}\right)}{(n\!-\!k)\left[\frac{\log\left(\frac{2}{\delta}\right)}{g_\epsilon\!+\!1}\!+\!\log\left(\frac{n\!-\!k}{g_\epsilon\!+\!1}\right)\right]}}.
\end{equation}}

\textcolor{black}{From~\eqref{eq:CBP_rho_sStar_Approx_Recovery}, we observe that $\eta$ scales as $O(1/\sqrt{n\log n})$ for any fixed $\epsilon \geq 0$, similar to its behavior in the exact recovery scenario. Further,~\eqref{eq:CBP_rho_sStar_Approx_Recovery} reduces to~\eqref{eq:CBP_rho_sStar_SimMag} at $\epsilon = 0$.}

\textcolor{black}{We now characterize the order-wise behavior of our bound. We start by substituting~\eqref{eq:CBP_s_Bound_Chi} in~\eqref{eq:CBP_s_Bound_Cor_sStar}, to get
\begin{equation}
	\label{eq:CBP_first_asymptotic_approx}
	m_S = \frac{ek}{(1-\eta)} \left[\frac{\log\left(\frac{1}{c\delta}\right)}{g_\epsilon+1} + \frac{g_\epsilon}{g_\epsilon+1} + \log\left( \frac{n-k}{g_\epsilon+1} \right)\right], 
\end{equation}
where $(n/(n-k))^{s^*} \to e$ from below as $n$ grows large. Using $\eta = (-C+\sqrt{C^2+4C})/2$ from Theorem~\ref{thm:CBP_s_Bound}, we get
\begin{equation}
	\label{eq:CBP_CCP_eta_ratio_bound}
	\frac{1}{1-\eta} \leq 2+C.
\end{equation}
Using $\log(1/x) \to 1-(1/x)$ for $x \to 1$ with $x = (n-k)/n$, we get $s^* \to (n-k)/n$, for large $n$. Therefore, we use
\begin{equation}
	\label{eq:CBP_CCP_C_Aysmptotic_Bound}
	C = \frac{\log\left(\frac{1}{(1-c)\delta}\right)}{k\left[\frac{\log\left(\frac{1}{c\delta}\right)}{g_\epsilon+1} + \frac{g_\epsilon}{g_\epsilon+1} + \log\left( \frac{n-k}{g_\epsilon+1} \right)\right]},
\end{equation}
along with~\eqref{eq:CBP_CCP_eta_ratio_bound} in~\eqref{eq:CBP_first_asymptotic_approx} to get
\begin{align}
	m_S &= 2ke\left[\log\left(\frac{n-k}{g_\epsilon+1}\right) + \frac{g_\epsilon}{g_\epsilon+1} + \frac{\log\left(\frac{1}{c\delta}\right)}{g_\epsilon+1}\right] \nonumber\\
	\label{eq:CBP_second_asymptotic_approx}
	&~~~~~~~~+ e\log\left(\frac{1}{(1-c)\delta}\right).
\end{align}
Observe that $g_\epsilon/(g_\epsilon+1) < 1$ and $\log(n-k) < \log n$, when $n > k > 0$. Setting $c = 1/2$ in~\eqref{eq:CBP_second_asymptotic_approx}, we get
\begin{equation}
	\label{eq:CBP_third_asymptotic_approx}
	m_S = 2ke\left[\log\left(\frac{n}{g_\epsilon\!+\!1}\right) \!+\!1\!+\!\log\left(\frac{2}{\delta}\right)\!\left[\frac{1}{g_\epsilon\!+\!1}\!+\!\frac{e}{2k}\right]\right].
\end{equation}
As mentioned earlier, $(n/(n-k))^{s^*} \to e$ for 
large $n$. Also, $(1-x)^{(1/x)-1} \to 1/e$ as $x \to 0$ with $x = k/n$ and $(1-x)^n \to 1-nx$ for $x < 1$ and $|nx| \ll 1$. Therefore, $g_\epsilon = ke\epsilon$ from~\eqref{eq:Prob_Error_FP_Relationship_sLength}. Substituting for $g_\epsilon$ in~\eqref{eq:CBP_third_asymptotic_approx}, we arrive at
\begin{align}
	\label{eq:CBP_final_approx}
	m_S = 2ke\left[\log\left(\frac{n}{ke\epsilon\!+\!1}\right) \!+\!1\!+\!\log\left(\frac{2}{\delta}\right)\!\left[\frac{1}{ke\epsilon\!+\!1}\!+\!\frac{e}{2k}\right]\right].
\end{align}
For a given $n$, $k$, and $\delta$, we see that the dependency of our CBP bound on $\epsilon > 0$ is $\propto\!\left(\log(1/\epsilon) + 1/\epsilon\right)$, similar to the CoMa bound.  On the other hand, we get $m_S = 2ke[\log n + 1 + (1+e/2k)\log(2/\delta)]$ for the exact recovery case by setting $\epsilon = 0$, showing that the analysis in this subsection is inclusive of the exact recovery case.}

\subsubsection{Utility of Optimizing the Chernoff Parameter}
\label{sec:Optimum_eta}
We now discuss the effectiveness of the approximation to the Chernoff parameter, $\eta$, for both $g_\epsilon = 0$ and $g_\epsilon > 0$. To this end, we start by defining the \emph{testing rate} as $\rho_R \triangleq m_S/n$. Note that $\rho_R$ denotes the sufficient number of tests per item.

Fig.~\ref{fig:cbp_exact_bounds_approx} shows the testing rates, $\rho_R$ over various values of the \emph{log confidence} parameter, $\log(1/\delta)$ for both exact recovery and approximate recovery cases. For illustration, we choose $g_\epsilon = 5$ for our discussion on the approximate recovery case in this subsection. The plots are generated with $n = 2500$ and $k = 50$, corresponding to an error tolerance of $10\%$ in the approximate recovery case.

\begin{figure}[t]
	\centering
	\includegraphics[width=0.8\linewidth]{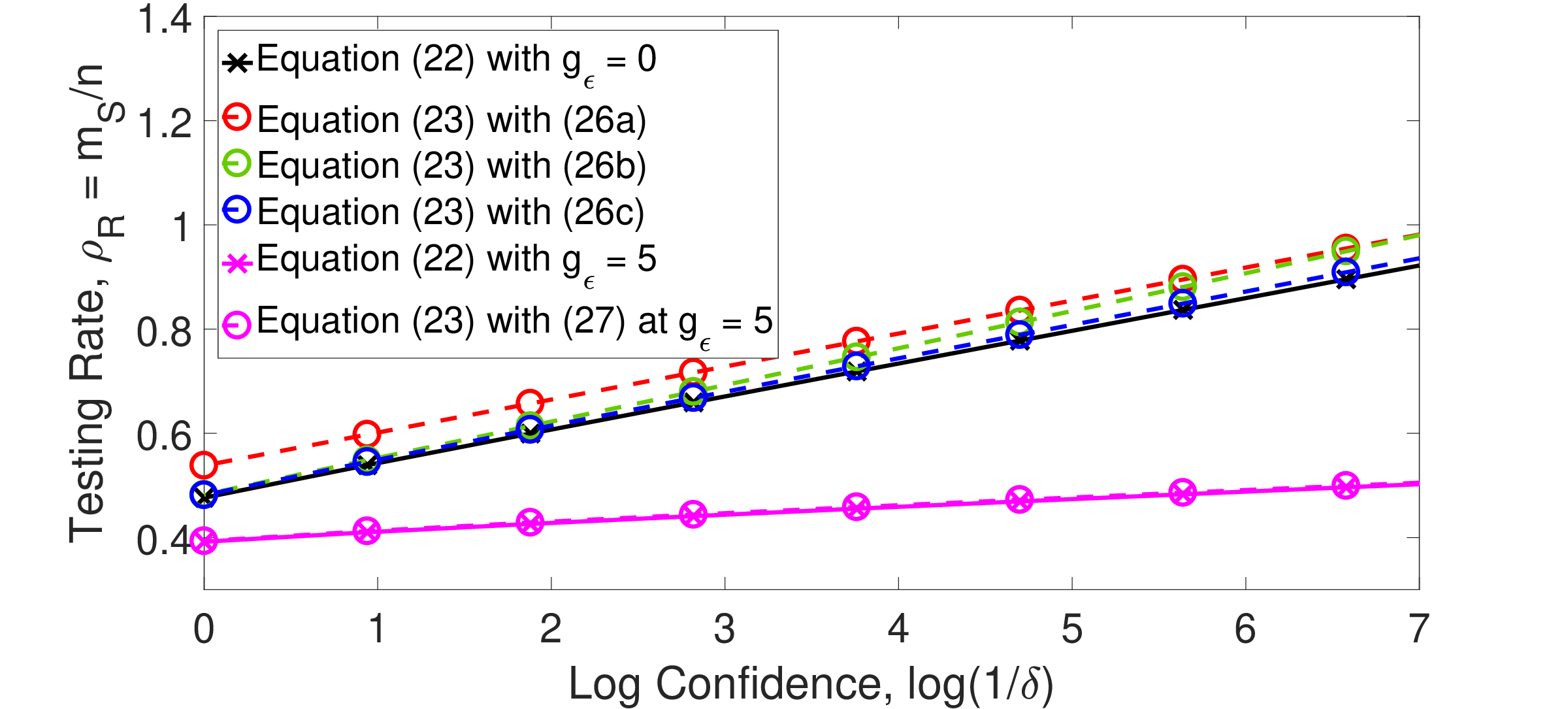}
	\caption{Comparison of the sufficiency bound given in~\eqref{eq:CBP_s_Bound_Cor_sStar} and~\eqref{eq:CBP_s_Bound_Cor_sStar_Exact} with $n = 2500$, $k = 50$, $s = s^*$ and $c = 1/2$.}
	\vspace{-0.1in}
	\label{fig:cbp_exact_bounds_approx}
\end{figure}

From Fig.~\ref{fig:cbp_exact_bounds_approx}, we see that the approximations in~\eqref{eq:CBP_rho_sStar_Dominant_n}--\eqref{eq:CBP_rho_sStar_SimMag} (the red, green, and blue curves, respectively) match well with the exact expressions from Cor.~\ref{cor:CBP_s_Bound_sStar} (the black curve) under the exact recovery. Similarly, the approximation in~\eqref{eq:CBP_rho_sStar_Approx_Recovery} (the magenta dashed curve) matches well with the exact expression from Cor.~\ref{cor:CBP_s_Bound_sStar} (the magenta solid curve). A closer observation under the exact recovery condition shows that the approximation in~\eqref{eq:CBP_rho_sStar_Dominant_delta} (the green curve) closes in on the exact bound at higher $\delta$ where the condition, $\delta \gg 2/(n-k)$, is valid. Further, we observe that the testing rate has almost halved when we allow $5$ errors as compared to the exact recovery case at very high confidence, $\log(1/\delta) = 7$. \textcolor{black}{Lastly, the rate of increase of $\rho_R$ with $\log(1/\delta)$ is lower when $g_\epsilon > 0$ compared to when $g_\epsilon = 0$, i.e., the slope of the magenta curve is lower than that of the black curve in Fig.~\ref{fig:cbp_exact_bounds_approx}. For example, if $g_\epsilon = 0$, i.e., we wish to guarantee exact recovery, and if we perform $0.6 n$ tests, the set output by the algorithm will fail to match the defective set for about $15\%$ of the random test matrices. In contrast, if $g_\epsilon = 5$, with $0.5 n$ tests,  a randomly drawn test matrix will fail with probability less than $0.1\%$. As we will see, such observations hold for the other algorithms also.}

\section{PAC Analysis with False Negative Errors}
\label{sec:PAC Analysis for False Negatives}
In this section, we illustrate how PAC analysis can be used to quantify the approximation tolerance when only false negative errors occur. Specifically, we consider the DD algorithm whose defective set estimate satisfies $\hat{\mathcal{K}} \subseteq \mathcal{K}$~\cite{Aldridge_Balsassini_2014}.
\vspace{-0.3cm}
\subsection{Bernoulli Test Design: The DD Algorithm}
\label{sec:The DD Algorithm}
The DD algorithm proceeds in two stages: 1) a CoMa-like method to eliminate all the items participating in negative tests to obtain a \emph{probable defective set (PDS)} comprising the remaining items, and 2) an item in the PDS is declared a \emph{definite defective} if it is the sole item participating in any positive outcome test, after eliminating the items identified as non-defective in the first stage~\cite{Aldridge_Balsassini_2014}. Therefore, the DD algorithm never classifies a non-defective item as a defective. However, it may make false negative errors.

We characterize the number of false negatives by counting the number of \emph{unidentified defectives} i.e., the defective items that remain unidentified in the training phase due to their participation in only those group tests which have other (definite) defectives participating in them. Suppose the hypothesis, i.e., the outcome $\hat{x}(\cdot)$ of the DD algorithm, has $D \le k$ unidentified defectives, and we want the error in the testing phase to be at most $\epsilon$. This requires $D \leq d_\epsilon$, where $d_\epsilon$ can be obtained~from
\begin{align}
	{\mathbb{P}}_{\textbf{a}_i\sim \mathcal{B}(p)}\left(\hat{x}(\textbf{a}_i ) \neq x(\textbf{a}_i)\right) &= (1-(1-p)^{D})(1-p)^{k-D} \leq \epsilon \nonumber \\
	\Rightarrow d_\epsilon &= \Biggl\lfloor \frac{\log(1+\epsilon/(1-p)^k)}{\log(1/(1-p))} \Biggr\rfloor. 	\label{eq:Prob_Error_FN_Relationship}
\end{align}
Here, the PAC bound in \eqref{eq:PAC bound} reduces to $\mathbb{P}(D > d_\epsilon) \leq \delta$.

\textcolor{black}{We use a similar approach as in~\cite{Aldridge_Balsassini_2014} till~Lemma~\ref{lem:DD_Interim_Results} below, with the definition of the success probability modified to accommodate the approximate recovery condition.} Think of each group test as a \emph{ball}, and \emph{bin} the balls (group tests) into $k+2$ bins. The first $k$ bins correspond to the $k$ defectives, i.e., a ball will fall into the $i$th bin, $i = 1, 2, \ldots, k$, if only the $i$th defective item participates and no other defective item participates in the corresponding group test. The ball will fall into the $(k+1)$th bin, if the corresponding group test has more than one defective item participating. Lastly, the ball will fall into the $(k+2)$th bin, if the outcome of the corresponding test is negative, i.e., when no defectives participate.  We construct a vector $\textbf{B}^\prime = (B_1, B_2, \ldots, B_k, B_+, B_-)$ containing the number of tests of each type that are conducted in the training phase comprising $m$ independent group tests. 
If the entries of each group test are drawn i.i.d. from $\mathcal{B}(p)$, the probability vector associated with the bins is $\textbf{q} = (\underbrace{q_1, \ldots, q_1}_{k~\text{bins}},q_+,q_-),$~\cite{Aldridge_Balsassini_2014} with
\vspace*{-0.2cm}
\begin{align}
	q_1 &= \mathbb{P}(\text{Only one defective participates}) = p(1-p)^{k-1}, \nonumber \\
	q_- &= \mathbb{P}(\text{None of the defectives participates}) = (1-p)^k, \nonumber \\
	q_+ &= \mathbb{P}(\text{More than 1 defectives participate}) = 1-kq_1-q_-, \nonumber
\end{align}
and hence $\textbf{B}^\prime$ follows the multinomial distribution~\cite{Aldridge_Balsassini_2014}
\begin{equation}
	\mathbb{P}_{m;\textbf{q}}(b_1,\dots,b_k,b_+,\!b_-\!) \!= \!\frac{m!}{b_+!\ b_-!\ \prod^k_{i=1}b_i!}q_1^{b_1+\dots+b_k}q_+^{b_+}q_-^{b_-},
\end{equation}
where $\sum_{i=1}^kb_i+b_++b_- = m$. 

Recall that the PDS is a set containing all the defective items as well as $0 \le g \le n-k$ hidden non-defective items. An item participating in a positive test can be declared definite defective if and only if none of the other items in the PDS participate in that test~\cite{Aldridge_Balsassini_2014}. The probability of the event where all $g$ hidden non-defectives do not participate in a given test is $(1-p)^g$. Using this, we divide $i$th bin into two (sub-) bins containing $L_i$ and $B_i-L_i$ balls, with corresponding probabilities $(1-p)^gq_1$ and $(1-(1-p)^g)q_1$, respectively, for $i \in [k]$. We call the bin with $L_i$ balls as the $i$th singleton bin: any ball in this bin corresponds to a group test where only the $i$th defective item (and no other item in the PDS) participates. Therefore, if the $L_i \geq 1$, we can declare the $i$th item as a definite defective. Thus, the new bin vector is $\textbf{B} = (B_+, B_-, L_1, L_2, \ldots, L_k, B_1-L_1,B_2-L_2, \ldots, B_k-L_k)$ with the associated probabilities $\textbf{q} = (q_+, q_-, \underbrace{(1-p)^gq_1,\dots,(1-p)^gq_1}_{k~\text{bins}}$, $\underbrace{(1-(1-p)^g)q_1,\dots,(1-(1-p)^g)q_1}_{k~\text{bins}})$. We now have:
\begin{lemma}
	\label{lem:DD_Interim_Results}
	Consider the DD algorithm run using the outcomes of $m$ tests, with the test matrix drawn from $\mathcal{B}(p)$. Then,
\begin{enumerate}[label=(\alph*)]
	\item The probability that there are $r$ negative test outcomes is $\mathbb{P}(B_- = r) = {m \choose r}q_-^r(1-q_-)^{m-r}$.\label{lem:DD_Interim_Results_a}
	\item  If $G$ denotes the number of hidden non-defectives in the  PDS, we have $\mathbb{E}[G] = \bar{g} = (n-k)(1-p(1-p)^k)^m$. \label{lem:DD_Interim_Results_b}
	\item Given a set of $d$ unidentified defectives after the second stage, conditioned on $G=g$, we have\label{lem:DD_Interim_Results_c}\\ $\mathbb{P}(\cap_{i=1}^{d}\{L_i = 0 \} | G=g) = (1-dp(1-p)^{k-1+g})^m$. 
\end{enumerate}
\end{lemma}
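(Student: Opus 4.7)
The plan is to handle the three parts in increasing order of delicacy; the common thread is that in the i.i.d.\ Bernoulli design the $m$ rows of $\mathbf{A}$ are independent, so any event that factorizes over tests decouples cleanly. For part (a), I would observe that a single test has negative outcome iff none of the $k$ defective entries of $\mathbf{a}_i$ takes the value $1$, an event of probability $q_- = (1-p)^k$; across $m$ independent tests $B_-$ therefore counts successes in $m$ independent Bernoulli$(q_-)$ trials, giving $B_-\sim\mathrm{Binomial}(m,q_-)$ and the claimed PMF at once.

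For part (b), I would use a per-non-defective indicator and linearity of expectation. Fix a non-defective $j\notin\mathcal{K}$: the CoMa-like first stage of DD removes $j$ from the PDS iff there exists a test $i$ with $a_{ij}=1$ and $y_i=0$. In a single test, since $j\notin\mathcal{K}$ and entries of $\mathbf{A}$ are i.i.d.\ $\mathcal{B}(p)$, the indicator $a_{ij}$ is independent of the $k$ defective indicators, so $\mathbb{P}(a_{ij}=1,\,y_i=0)=p(1-p)^k$. Independence across the $m$ tests then gives that $j$ survives (remains hidden) with probability $(1-p(1-p)^k)^m$, and summing this over the $n-k$ non-defectives yields $\mathbb{E}[G]=\bar g$.

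For part (c), the plan is a two-step argument. First, I would condition on a specific realization $\mathcal{H}$ of the hidden set (with $|\mathcal{H}|=g$) and on a specific set of $d$ unidentified defectives. In a single test, the $d$ events ``defective $i$ is the unique PDS participant'', $i=1,\dots,d$, are pairwise disjoint since at most one item can be alone, and each has probability $p(1-p)^{k-1}(1-p)^{g}=p(1-p)^{k-1+g}$: defective $i$ must participate, the other $k-1$ defectives must not, and none of the $g$ items of $\mathcal{H}$ must participate (non-defectives outside $\mathcal{H}$ are not in the PDS and therefore irrelevant). Disjointness gives per-test probability $1-dp(1-p)^{k-1+g}$ that the union does not occur, and independence across the $m$ tests yields $(1-dp(1-p)^{k-1+g})^{m}$ for the joint event $\cap_{i=1}^{d}\{L_i=0\}$. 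Because this conditional probability depends on $\mathcal{H}$ only through $|\mathcal{H}|=g$, averaging over the conditional distribution of $\mathcal{H}$ given $G=g$ preserves the value and recovers the lemma's expression.

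The main obstacle is the passage in part (c) from conditioning on a specific $\mathcal{H}$ to conditioning on the random statistic $\{G=g\}$. I plan to close this gap by exploiting exchangeability of the non-defective columns under the i.i.d.\ $\mathcal{B}(p)$ design: permuting labels of two non-defectives leaves both $G$ and the event $\cap_{i=1}^{d}\{L_i=0\}$ equal in distribution, so the conditional probability is constant across all $\mathcal{H}$ of size $g$ and therefore coincides with the one given $G=g$. A secondary point worth emphasizing is that the exponent in the per-test probability is $k-1+g$ rather than $n-1$, because only items currently in the PDS can prevent a defective from being declared definite.
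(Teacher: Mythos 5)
Your proposal is correct and, for parts (a) and (c), follows essentially the same route as the paper: negativity of a test factorizes over the $k$ defective entries to give $q_-=(1-p)^k$ and a $\mathrm{Binomial}(m,q_-)$ count; and in (c) the per-test probability $p(1-p)^{k-1+g}$ of being the sole PDS participant, disjointness of the $d$ single-participant events within a test, and independence across tests give $(1-dp(1-p)^{k-1+g})^m$. You are in fact slightly more careful than the paper in (c): the paper conditions directly on $\{G=g\}$ without comment, whereas you condition on a specific hidden set $\mathcal{H}$ and then invoke exchangeability of the non-defective columns to pass to $\{G=g\}$; that is a legitimate refinement, though like the paper you still implicitly treat the entries of the hidden columns as fresh $\mathcal{B}(p)$ variables within each positive test after conditioning on those items being hidden. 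The only genuinely different decomposition is in part (b): you compute the marginal survival probability of a fixed non-defective directly, $\mathbb{P}(\text{hidden})=(1-p(1-p)^k)^m$ by factorizing the event over the $m$ independent tests, and apply linearity of expectation, whereas the paper first establishes $G\mid B_-\sim\mathrm{Bin}(n-k,(1-p)^{B_-})$, applies the tower property, and collapses the resulting sum via the binomial theorem. Your route is shorter and more elementary; the paper's route has the side benefit that the conditional law $G\mid B_-$ is exactly what is reused later to write down $\mathbb{P}(G=g)$ in the proof of Theorem~\ref{thm:DD_Bound}, so the intermediate step is not wasted there.
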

We note that the relation in Lemma \ref{lem:DD_Interim_Results}~\ref{lem:DD_Interim_Results_a} gives the marginal distribution of $B_-$. Lemma \ref{lem:DD_Interim_Results}~\ref{lem:DD_Interim_Results_c} specifies the probability that a given set of $d$ singleton bins are empty, conditioned on there being $g$ hidden non-defectives. We now present a sufficiency bound on $m$ for the DD algorithm:
\begin{theorem}\label{thm:DD_Bound}
	The sufficient number of tests such that the predicted outcome based on the estimated defective set does not agree with the true outcome on the future group tests with probability at most $\epsilon$ and confidence $1-\delta$ is implicitly given by the value of $m_S$ that satisfies
	\begin{equation}
		\label{eq:DD_Bound}
		{k \choose {d_\epsilon+1}}(1-(d_\epsilon+1) p(1-p)^{k-1+\bar{g}+\tilde{g}})^{m_S} \leq \delta,
	\end{equation}
	where $d_\epsilon$ is given by~\eqref{eq:Prob_Error_FN_Relationship}, $\bar{g} = (n-k)(1-p(1-p)^k)^{m_S}$ and $\tilde{g} \ge 0$ is a parameter. 
\end{theorem}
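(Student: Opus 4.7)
The plan is to convert the PAC condition into a tail bound on the number of unidentified defectives $D$ and then apply the tools already developed in Lemma~\ref{lem:DD_Interim_Results}. By the discussion preceding the theorem, the PAC requirement $\mathbb{P}(e(\hat{x},x) > \epsilon) \leq \delta$ together with~\eqref{eq:Prob_Error_FN_Relationship} reduces to the event-level statement $\mathbb{P}(D > d_\epsilon) \leq \delta$, so the whole proof concentrates on upper-bounding $\mathbb{P}(D > d_\epsilon)$.

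I would first replace the tail event by a union over its witnesses. The event $\{D > d_\epsilon\}$ forces the existence of at least one size-$(d_\epsilon+1)$ subset $S \subset [k]$ of defectives with $L_i = 0$ for every $i \in S$. By a union bound over all ${k \choose d_\epsilon+1}$ such subsets, together with the symmetry of the i.i.d.\ Bernoulli design, it suffices to bound $\mathbb{P}(\bigcap_{i=1}^{d_\epsilon+1}\{L_i = 0\})$ for a single fixed subset. Conditioning on $G$ and invoking Lemma~\ref{lem:DD_Interim_Results}~\ref{lem:DD_Interim_Results_c} gives this probability as $\mathbb{E}_G\bigl[(1-(d_\epsilon+1)p(1-p)^{k-1+G})^{m}\bigr]$.

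The next step is to eliminate the randomness in $G$. Since $(1-p)^{k-1+g}$ is decreasing in $g$, the integrand is non-decreasing in $G$, so on the high-probability event $\{G \leq \bar{g}+\tilde{g}\}$, with $\bar{g} = \mathbb{E}[G]$ from Lemma~\ref{lem:DD_Interim_Results}~\ref{lem:DD_Interim_Results_b} and $\tilde{g} \geq 0$ a slack parameter, the integrand is at most $(1-(d_\epsilon+1)p(1-p)^{k-1+\bar{g}+\tilde{g}})^{m}$, while on the complementary event it is trivially at most one. Multiplying by the combinatorial prefactor ${k \choose d_\epsilon+1}$ and requiring the result to be no larger than $\delta$ yields the implicit inequality~\eqref{eq:DD_Bound}; the slack $\tilde{g}$ plays a role analogous to the Chernoff parameter $\eta$ in Theorem~\ref{thm:CBP_s_Bound} and is left available for the practitioner to optimize.

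The main obstacle is controlling the tail probability $\mathbb{P}(G > \bar{g}+\tilde{g})$. Because $G$ is a sum of $n-k$ indicators whose dependencies propagate through the shared test matrix, a clean concentration estimate is not immediate. I would either apply Markov's inequality directly (cheap but loose, yielding $\mathbb{P}(G > \bar{g}+\tilde{g}) \leq \bar{g}/(\bar{g}+\tilde{g})$) or a Chernoff-type bound exploiting the per-item hiding probability $(1-p(1-p)^k)^m$ that underlies $\bar{g}$. Either way, the technical work lies in arguing that this tail can be absorbed into the $\delta$ budget through a suitable choice of $\tilde{g}$, so that~\eqref{eq:DD_Bound} stands as a self-consistent sufficient condition on $m_S$.
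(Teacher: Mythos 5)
Your proposal is correct and follows essentially the same route as the paper: a union bound over the ${k \choose d_\epsilon+1}$ subsets of possibly-missed defectives, Lemma~\ref{lem:DD_Interim_Results}~\ref{lem:DD_Interim_Results_c} conditioned on $G$, a split of the resulting expectation at $G = \bar{g}+\tilde{g}$ using the monotonicity of the integrand in $g$, and absorption of the tail $\mathbb{P}(G > \bar{g}+\tilde{g})$ into the bound via the choice of $\tilde{g}$ (the paper likewise defers this to the tuning of $\tilde{g}$ so that its penultimate inequality holds, rather than giving an explicit concentration argument). No substantive difference in approach.
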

\textcolor{black}{In the above, $\tilde{g}$ is a parameter that arises in proving the sufficiency condition in \eqref{eq:DD_Bound}. Due to the complicated form of $\mathbb{P}(G=g)$ in \eqref{eq:DD_Prob_G}, it is hard to analytically derive its precise value, but since the last expression in  \eqref{eq:DD_bound_penultimate} is monotonically increasing in $\tilde{g}$, the inequality will be satisfied for some non-negative $\tilde{g}$. In our simulations, we have seen that choosing $\tilde{g} = \lceil \bar{g} \rceil - \bar{g}$ is sufficient to ensure that the inequality \eqref{eq:DD_bound_penultimate} holds, and hence the $m_S$ obtained from \eqref{eq:DD_Bound} is indeed sufficient.}

\subsubsection{Order-Wise Analysis of~\eqref{eq:DD_Bound}}
\label{sec:DD_Order-wise_analysis}
We now discuss the behavior of $m_S$ for nonzero $\epsilon$ as $n$ and $k$ grow. We use $p = 1/k$ and $\tilde{g} = 1$ in our analysis. First, we observe that,
\begin{equation}
	\label{eq:DD_g_bar_approx}
	\bar{g} = (n-k)\left(1 - 1/ke\right)^m \to ne^{-m/ke},
\end{equation}
where we use $(1+a/x)^x \to e^{a}$ from below for large $x$ with $x = m$ and $a = -m/ke$. Using ${k \choose d_\epsilon+1} \leq \left(ek/(d_\epsilon+1)\right)^{d_\epsilon+1}$ and~\eqref{eq:DD_g_bar_approx} in~\eqref{eq:DD_bound_final}, we get
\begin{align}
	\left(\frac{ek}{d_\epsilon+1}\right)^{d_\epsilon+1} \left(1-(d_\epsilon+1)\frac{1}{k}\left(1-\frac{1}{k}\right)^{k+\bar{g}}\right)^m &\leq \delta \nonumber \\
	\left(\frac{ek}{d_\epsilon+1}\right)^{d_\epsilon+1} \left(1-\frac{d_\epsilon+1}{ke} \left(1-\frac{1}{k}\right)^{ne^{-m/ke}}\right)^m &\leq \delta \nonumber \\
	m\log\left(1\!-\!\frac{d_\epsilon\!+\!1}{ke}\left(1\!-\!\frac{1}{k}\right)^{ne^{-\!m/ke}}\right)&\\
	\label{eq:DD_asymptotic_approx_interim}
	\leq \log\left(\delta \left(\frac{d_\epsilon\!+\!1}{ke}\right)^{d_\epsilon\!+\!1}\right)&,
\end{align}
where we use the fact that $(1-1/k)^k \to 1/e$ for large $k$ along with~\eqref{eq:DD_g_bar_approx} in penultimate step. We now observe that $(1-x)^a \approx (1-ax)$ holds when $x$ is small and for fixed $a$. It also holds when $x$ becomes smaller at a \emph{faster rate} than the growth of $a$. Therefore, the condition stated above is valid for $m > ke\log n$ with $x = 1/k$ and $a = ne^{-m/ke}$. Further, for $ne^{-m/ke}/k < 1$ to hold, we require $m > ke\log(n/k)$. Since we have $m > ke \log n$, the above condition on $m$ holds too, for $k \geq 1$. Using these conditions in~\eqref{eq:DD_asymptotic_approx_interim}, we get
\begin{align}
	m\log\left(1\!-\!\frac{d_\epsilon\!+\!1}{ke}\left(1\!-\!\frac{ne^{-\!m/ke}}{k}\right)\right)\!&\leq\!\log\left(\delta \left(\frac{d_\epsilon\!+\!1}{ke}\right)^{d_\epsilon\!+\!1}\right) \nonumber \\
	\label{eq:DD_transcendental}
	m\!\left(1\!-\!\frac{ne^{-\!m/ke}}{k}\right)\!\geq\!ke&\left[\frac{\log\left(\frac{1}{\delta}\right)}{d_\epsilon\!+\!1}\!+\!\log \left(\frac{ke}{d_\epsilon\!+\!1}\right)\right]
\end{align}
where we use $\log(1-x) \to -x$ for small $x$ with $x = \frac{d_\epsilon+1}{ke}\left(1-\frac{ne^{-m/ke}}{k}\right) \ll 1$ for large $n$ and $m$ in the last step. The solution to the transcendental equation in~\eqref{eq:DD_transcendental} subject to $m > ke\log n$ gives the sufficient number of tests, $m_S$ required for the DD algorithm to succeed with confidence $1-\delta$ when $d_\epsilon$ false negative errors are allowed. Define 
\begin{equation}
	\label{eq:D}
	D \triangleq ke\left[ \frac{\log\left(\frac{1}{\delta}\right)}{d_\epsilon+1} + \log\left(\frac{ke}{d_\epsilon+1}\right) \right].
\end{equation}
Note that as $m$ increases, $D/m$ decreases. Using~\eqref{eq:D} in~\eqref{eq:DD_transcendental}, we get
\begin{align}
	- \frac{m_S}{ke} + \log\left(\frac{n}{k}\right) &\leq \log\left(1 - \frac{D}{m_S}\right) \nonumber \\
	\label{eq:DD_quadratic_transcendental}
	\frac{m_S^2}{ke} - m_S\log\left(\frac{n}{k}\right) - D &\geq 0,	
\end{align}
where the last step uses $\log(1-x) \to x$ for small $x$ with $x = D/m$. Therefore, we get
\begin{equation}
	\label{eq:DD_quadratic_solution}
	m_S = \frac{ke}{2}\left[\log\left(\frac{n}{k}\right) + \sqrt{\log^2\left(\frac{n}{k}\right) + \frac{4D}{ke}}\right].
\end{equation}
Since $m_S > ke\log n$, the solution to $m$ in~\eqref{eq:DD_quadratic_solution} is consistent. Further using $\sqrt{1+x} \approx 1+x/2$ with $x = 4D/ke\log^2(n/k)$ in~\eqref{eq:DD_quadratic_solution}, we get
\begin{align}
	\label{eq:DD_quadratic_solution_sqrtApprox}
	m_S &= \frac{ke}{2}\log\left(\frac{n}{k}\right)\left[2\!+\!\frac{2D}{ke\log^2\left(\frac{n}{k}\right)}\right]\nonumber\\
	&= ke\log\left(\frac{n}{k}\right)\!+\!\frac{D}{\log\left(\frac{n}{k}\right)}.
\end{align}
From~\eqref{eq:Prob_Error_FN_Relationship}, we see that for $p=1/k$ and large $k$, we get $d_\epsilon \approx ke\epsilon$ using $1/(1-x) \approx 1+x$, $(1-p)^k \to 1/e$ and $\log(1+x) \approx x$ for small $x$. Substituting for $D$ from~\eqref{eq:D} in~\eqref{eq:DD_quadratic_solution_sqrtApprox} along with $d_\epsilon = ke\epsilon$, we get
\begin{equation}
	\label{eq:DD_Bound_Final_Approx}
	m_S = ke\left[\log\left(\frac{n}{k}\right)\!+\!\frac{\log\left(\frac{1}{\delta}\right)}{(ke\epsilon\!+\!1)\log\left(\frac{n}{k}\right)}\!+\!\frac{\log\left(\frac{ke}{ke\epsilon+1}\right)}{\log\left(\frac{n}{k}\right)}\right].
\end{equation}

\textcolor{black}{As in the CoMa and CBP case, we see that the behavior of the DD bound for a fixed $n$, $k$ and $\delta$ is $\propto (\log(1/\epsilon) + 1/\epsilon)$ for any $\epsilon > 0$. Lastly, setting $\epsilon = 0$, we get $m_S = ke\left(\log (n/k) + \frac{1}{\log(n/k)} \left[\log(1/\delta) + \log k + 1 \right]\right)$ in the exact recovery case. Noting that $1/\log(n/k) < 1$ along with $\log k/ \log(n/k) < 1$ for $k \ll n$, we get $m_S = ke\left(\log (n/k) + \frac{\log(1/\delta)}{\log(n/k)} + 2\right)$ when $\epsilon = 0$.} Using \cite[Theorem B.3]{Aldridge_Balsassini_2014}, we get $m = (\kappa(\gamma^\prime)+\delta)ek\log n$ is sufficient, where $\kappa(\gamma^\prime) \triangleq \max \{\gamma^\prime, 1-\gamma^\prime\}$ with $k = n^{1-\gamma^\prime}$. Using $\delta^\prime = n^{-\delta}$, $m = ek(\kappa(\gamma^\prime)\log n + \log(1/\delta^\prime))$ is sufficient for exact recovery, \textcolor{black}{which is similar to our bound}.

\section{Proofs of Theorems and Lemmas}
\label{sec:Proofs of Theorems and Lemmas}
\begin{proof}[Proof of Lemma \ref{lem:defective_set_learnt_function_equivalence}]
	We first show that if $\hat{\mathcal{K}} = \mathcal{K}$, then $\hat{x}(\mathbf{a}) = x(\mathbf{a})$ with probability one. Note that there is a one-to-one mapping between any $k$-literal OR-ing function $\hat{x}$ and a corresponding $k$-sized set $\hat{\mathcal{K}}$ containing the elements participating in the OR-ing function represented by $\hat{x}$. Hence, as long as the distribution $\mathcal{D}$ is such that $\mathbb{P}_{\mathcal{D}}(a_{j} = 1) \in (0, 1),~j \in [n]$ and $a_j$s are independent,
	\begin{equation}
		 \hat{\mathcal{K}} = \mathcal{K}~~\implies~~\mathbb{P}_{\mathbf{a} \sim \mathcal{D}}\left(\hat{x}(\mathbf{a}) = x(\mathbf{a})\right) = 1.
	\end{equation}

	We show the converse by contrapositive. To this end, it suffices to show that if $\hat{\mathcal{K}} \neq \mathcal{K}$, then $\exists~\mathbf{a}$ that occurs with nonzero probability when $\mathbf{a} \sim \mathcal{D}$, such that $\hat{x}(\mathbf{a}) \neq x(\mathbf{a})$. In other words, we need to show that
	\begin{equation}
		\label{eq:Converse_condition_equivalence}
		\hat{\mathcal{K}} \neq \mathcal{K} \implies e(\hat{x}(\cdot), x(\cdot)) = \mathbb{P}_{\mathbf{a} \sim \mathcal{D}}\left(\hat{x}(\mathbf{a})\!\neq\!x(\mathbf{a})\right) > 0.
	\end{equation}
	
	Starting with $\hat{\mathcal{K}} \neq \mathcal{K}$, we argue that the above claim holds under the following covering cases:
	\begin{enumerate}
		\item When $\hat{\mathcal{K}} \setminus \mathcal{K} \neq \emptyset$, consider any $\mathbf{a} \sim \mathcal{D}$ such that
		\begin{equation}
			a_j = \begin{cases}
				1, j \in \hat{\mathcal{K}} \setminus \mathcal{K} \nonumber \\
				0, j \notin \hat{\mathcal{K}} \setminus \mathcal{K}. \nonumber
			\end{cases}
		\end{equation}
		Then, for such an $\mathbf{a}$, which occurs with nonzero probability when $\mathbf{a} \sim \mathcal{D}$, we have $\hat{x}(\mathbf{a}) = 1$ whereas $x(\mathbf{a}) = 0$.
		\item When $\mathcal{K} \setminus \hat{\mathcal{K}} \neq \emptyset$, consider any $\mathbf{a} \sim \mathcal{D}$ such that
		\begin{equation}
			a_j = \begin{cases}
				1, j \in \mathcal{K} \setminus \hat{\mathcal{K}} \nonumber \\
				0, j \notin \mathcal{K} \setminus \hat{\mathcal{K}}. \nonumber
			\end{cases}
		\end{equation}
		Then, for such an $\mathbf{a}$, which occurs with nonzero probability when $\mathbf{a} \sim \mathcal{D}$, we have $\hat{x}(\mathbf{a}) = 0$ whereas $x(\mathbf{a}) = 1$.
	\end{enumerate}
	Noting that the distribution $\mathcal{D}$ obeys $\mathbb{P}_{\mathcal{D}}(a_{j} = 1) \in (0, 1),~j \in [n]$ and that the $a_j$s are independent, we get $\mathbb{P}_{\mathbf{a} \sim \mathcal{D}}(\hat{x}(\mathbf{a}) \neq x(\mathbf{a})) > 0$ as required in~\eqref{eq:Converse_condition_equivalence} in both the cases, thereby proving the converse part.
\end{proof}

\begin{proof}[Proof of Lemma \ref{lem:probability_CoMa}]
	Recall that the probability with which an item can participate in a test is $p$. A non-defective item will remain hidden in a group test under two mutually exclusive conditions: 1) the test outcome is positive or 2) the test outcome is negative but the item does not participate in the test. \textcolor{black}{It then follows that the probability that $g$ items will be hidden, denoted by $\mathbb{P}_g^h$, can be written as
	\begin{align}
		\mathbb{P}_g^h &= \mathbb{P}[\text{positive test}] + \mathbb{P}[\text{negative test and $g$ items excluded}] \nonumber \\
					   &= (1-(1-p)^k) + (1-p)^{g+k}. \nonumber
	\end{align}
	Since the tests are independent, the probability that $g$ items will remain hidden in $m$ tests is given by
	\begin{align}
		\mathbb{P}_g^h(m) = (1 - (1-p)^k + (1-p)^{g+k})^m,
	\end{align}
	as required.}
\end{proof}

\begin{proof}[Proof of Theorem \ref{thm:CoMa_Bound}]
	If we allow at most $g_\epsilon$ non-defective items to remain hidden, an error occurs if some subset of $g_\epsilon\!+\!1$ non-defective items remains hidden. Using the union bound
	\begin{equation}
		\label{eq:CoMa_Bound_Intermediate}
		\mathbb{P}(e(\hat{x},x)>\epsilon) = \mathbb{P}(G > g_\epsilon) \leq {n-k \choose g_\epsilon + 1}\mathbb{P}^h_{g_\epsilon + 1}(m),
	\end{equation}
	where the first equality holds since CoMa only makes false positive errors. Using Lemma \ref{lem:probability_CoMa} in \eqref{eq:CoMa_Bound_Intermediate}, we get
	\begin{equation}
		\label{eq:CoMa_Bound_Final}
		\mathbb{P}(e(\hat{x},x)>\epsilon) \leq {n-k \choose g_\epsilon + 1}\left(1-(1-p)^k+(1-p)^{g_\epsilon + 1 + k}\right)^m
	\end{equation}

	Using $\mathbb{P}(e(\hat{x},x)>\epsilon) = \mathbb{P}(G > g_\epsilon) \leq \delta$ in \eqref{eq:CoMa_Bound_Final} and rearranging the terms, we get the desired result.
\end{proof}

\begin{proof}[Proof of Lemma \ref{lem:CCP_Lemma}]
	(a) Let $T$ denote the number of draws needed to collect $w-g$ distinct coupons, and let $T_i$ denote the number of draws needed to collect the $i$th coupon after $i-1$ coupons have been collected. Then, $T_1 = 1$ and $T = \sum_{i=1}^{w-g}T_i$. Now, the probability of collecting a new coupon in a single draw after $i-1$ coupons have been collected is $p_i = \frac{w-(i-1)}{w}$, so that $T_i$s are independent Geometrically distributed random variables with expectation $1/p_i$. Therefore, the expected stopping time is
	\begin{align}
		\mathbb{E}[T] &= \sum_{i=1}^{w-g} \mathbb{E}[T_i] \nonumber\\
		&= \sum_{i=1}^{w-g} \frac{1}{p_i} \nonumber \\
					  &= w\left[H_w - H_g\right] \nonumber,
	\end{align}
	where $H_w$ denotes $w$th Harmonic number as defined earlier. Using a well known asymptotic approximation for $H_w \approx \log w + \gamma$, where $\gamma \approx 0.5772$ is the Euler–Mascheroni constant as mentioned in Sec.~\ref{sec:The CBP Algorithm},\footnote{For e.g., for $w=500$, the error in the approximation is about $0.015\%$.} we get $\mathbb{E}[T] \approx w\left[\log w +\gamma - H_g\right]$, the required result.\\
	(b) Let $Z_i^r$ denote the event that the $i$th coupon was not picked in the first $r$ trials (draws). Then,
	\begin{align}
		\mathbb{P}(Z_i^r) &= \left(1-\frac{1}{w}\right)^r \nonumber \\
		&\leq e^{-r/w}. \nonumber
	\end{align}
	Let $Z_{i_1,i_2}^r$ denote the event that the $i_1$ and $i_2$th coupons, $i_1 \neq i_2$, were not picked in the first $r$ trials. Then,
	\begin{align}
		\mathbb{P}(Z_{i_1,i_2}^r) &= \left(1-\frac{1}{w}\right)^r \cdot \left(1-\frac{1}{w-1}\right)^r \nonumber \\
		&\leq e^{-r/w} \cdot e^{-r/(w-1)}. \nonumber
	\end{align}
	Continuing in this manner, the probability that $i_1, i_2, \ldots i_g$th coupons were not picked in the first $r$ trials is given by
	\begin{align}
		\mathbb{P}(Z_{\{i_l\}_{l=1}^g}^r) &= \prod_{i=0}^{g-1}\left(1-\frac{1}{w-i}\right)^r \nonumber \\
		&\leq \prod_{i=0}^{g-1} e^{-r/(w-i)} \nonumber \\
		&= e^{-r\left[H_w - H_{w-g}\right]}. \nonumber
	\end{align}
	When we want to collect only $w-g$ coupons, the event of interest occurs when $g+1$ coupons are missed in first $r = \chi w\left[\log w + \gamma - H_g\right]$ trials. Note that,
	\begin{align}
		\mathbb{P}(Z_{\{i_l\}_{l=1}^{g+1}}^r) &\leq e^{-\chi w \left[\log w + \gamma - H_g\right] \left[H_w - H_{w-(g+1)}\right]} \nonumber \\
		&\leq e^{-\chi \left[\log w + \gamma - H_g\right] (g+1)} \nonumber \\
		&= w^{-(g+1)\chi} e^{(g+1)\chi [H_g-\gamma]}, \label{eq:CCP_Lemma_IndividualEventProb}
	\end{align}
	where we use the fact that $H_w\!-\!H_{w\!-\!(g\!+\!1)}\!=\!\sum_{i=0}^g1/(w\!-\!i)\!\geq\!(g\!+\!1)/w$ in the penultimate step. Since any $g+1$ coupons out of the $w$ coupons can be missed, taking the union bound over ${w \choose g+1}$ sets, and using the inequalities ${w \choose q} \leq w^q/q!,~q \leq \sqrt{w}$ and $q! \geq q^q/e^{q-1},~q \geq 1$ with $q = g+1$ to simplify the upper bound, we obtain
	\begin{align}
		\mathbb{P}(T > \chi\mathbb{E}[T])
		&\leq {w \choose g+1} \mathbb{P}(Z_{\{i_l\}_{l=1}^{g+1}}^r) \nonumber \\
		&\leq w^{(g+1)(-\chi + 1)} \frac{e^{(g+1)\chi [H_g-\gamma] + g}}{(g+1)^{(g+1)}},
	\end{align}
	as required.	
\end{proof}
\begin{proof}[Proof of Theorem \ref{thm:CBP_s_Bound}]
	In order to obtain the tail bound on $m$ for $s$-length test vector designs, we start by modifying~\eqref{eq:CBP_CCP_Approximate_Recovery_Inequality} as follows. The right-hand side is modified to $\chi (n-k)\left[\log(n-k) + \gamma - H_{g_\epsilon}\right]$. This is because all the distinct non-defectives excluding $g_\epsilon$ have been \emph{collected} if $\chi (n-k)\left[\log(n-k) + \gamma - H_{g_\epsilon}\right]$ total non-defective items have been collected. From Lemma~\ref{lem:CCP_Lemma} (b), the probability that the stopping time is more than $\chi$ times the expected stopping time is at most $(n-k)^{(g_\epsilon+1)(-\chi+1)}\frac{e^{(g_\epsilon+1)\chi [H_{g_\epsilon}-\gamma]+g_\epsilon}}{(g_\epsilon+1)^{(g_\epsilon+1)}}$.
	
	The left-hand side of~\eqref{eq:CBP_CCP_Approximate_Recovery_Inequality} is multiplied with $(1-\eta)$, where $\eta$ is a design parameter to be specified by the Chernoff bound. Then, the probability that the actual number of items in the negative tests is smaller than $(1-\eta)$ times the expected number, $ms((n-k)/n)^s$ is at most~\cite{Chan_Jaggi_Saligrama_Agnihotri_2014}
	\begin{equation}
		\exp\left( -\eta^2 m \left( \frac{n-k}{n} \right)^s \right).
	\end{equation}
	
	Taking the union bound over the above two low probability events, we have that
	\begin{equation}
		\label{eq:CBP_CCP_Approximate_Recovery_Inequality_Rho_Chi}
		(1-\eta)ms\left(\frac{n-k}{n}\right)^s \geq \chi (n-k) \left[ \log(n-k) + \gamma - H_{g_\epsilon} \right]
	\end{equation}
	does not hold with probability
	\begin{equation}
		\label{eq:CBP_s_length_Failure_Probability_Raw_Form}
		\exp\left(\!-\!\eta^2 m \left( \frac{n\!-\!k}{n} \right)^s \right)\!+\!(n\!-\!k)^{(g_\epsilon\!+\!1)(\!-\!\chi\!+\!1)}\frac{e^{(g_\epsilon\!+\!1)\chi [H_{g_\epsilon}\!-\!\gamma]\!+\!g_\epsilon}}{(g_\epsilon\!+\!1)^{(g_\epsilon\!+\!1)}}.
	\end{equation}
	From~\eqref{eq:CBP_CCP_Approximate_Recovery_Inequality_Rho_Chi}, we have
	\begin{equation}
		\label{eq:Bound_m_s_length_Intermediate}
		m \geq \frac{\chi(n-k)}{(1-\eta)s\left(\frac{n-k}{n}\right)^s} \left[\log(n-k)+\gamma-H_{g_\epsilon}\right].
	\end{equation}
	Substituting~\eqref{eq:Bound_m_s_length_Intermediate} in ~\eqref{eq:CBP_s_length_Failure_Probability_Raw_Form}, $P_e$ is upper bounded by
	\begin{align}
		P_e &\leq e^{-\eta^2 m \left( \frac{n-k}{n} \right)^s} \nonumber\\ &+ \frac{e^{(g_\epsilon + 1)(-\chi + 1)\log(n-k) + (g_\epsilon + 1)\chi [H_{g_\epsilon}-\gamma] + g_\epsilon}}{e^{(g_\epsilon + 1)\log(g_\epsilon + 1)}} \nonumber \\
		&\leq \exp\left( \frac{-\eta^2 \chi (n-k)}{(1-\eta)s} \left[\log(n-k)+\gamma-H_{g_\epsilon}\right] \right)\nonumber\\
		\label{eq:CBP_CCP_Pe_Split_Bound}
		&+ e^{-\chi (g_\epsilon+1)\left[\log(n-k)+\gamma-H_{g_\epsilon}\right] + (g_\epsilon+1)\log\left( \frac{n-k}{g_\epsilon+1} \right)+g_\epsilon}
	\end{align}
	Since $P_e = P(G > g_\epsilon) \leq \delta \in (0,1)$ and each term in the RHS of~\eqref{eq:CBP_CCP_Pe_Split_Bound} is greater than $0$, we choose a design parameter $c \in (0,1)$ such that 
	\begin{enumerate}[label=(\alph*)]
		\item $\exp\left( \frac{-\eta^2\chi(n-k)}{(1-\eta)s} \left[\log(n-k)+\gamma-H_{g_\epsilon}\right] \right) \leq (1-c)\delta$, and
		\item $e^{-\chi (g_\epsilon+1)\left[\log(n-k)+\gamma-H_{g_\epsilon}\right] + (g_\epsilon+1)\log\left( \frac{n-k}{g_\epsilon+1} \right) + g_\epsilon} \leq c\delta$.
	\end{enumerate}	
	Simplifying (b), we get
	\begin{align}
		\label{eq:CBP_CCP_Chi_Bound}
		 &\geq \frac{\left[\frac{\log\left(\frac{1}{c\delta}\right)}{g_\epsilon+1} + \frac{g_\epsilon}{g_\epsilon+1} + \log\left( \frac{n-k}{g_\epsilon+1} \right)\right]}{\log(n-k)+\gamma-H_{g_\epsilon}}.
	\end{align}
	Substituting~\eqref{eq:CBP_CCP_Chi_Bound} in (a), we get
	\begin{align}
		\frac{\eta^2}{1-\eta}\left(\frac{n-k}{s}\right) \left[ \frac{\log\left(\frac{1}{c\delta}\right)}{g_\epsilon+1}+\frac{g_\epsilon}{g_\epsilon+1}+\log\left( \frac{n-k}{g_\epsilon+1} \right) \right] \nonumber\\ \geq \log\left(\frac{1}{(1-c)\delta}\right) \nonumber \\
		\label{eq:CBP_CCP_Rho_Quadratic}
		\frac{\!\eta^2}{1\!-\!\eta} \geq C \triangleq \frac{\log\left(\frac{1}{(1-c)\delta}\right)}{\left(\frac{n\!-\!k}{s}\right) \left[ \frac{\log\left(\frac{1}{c\delta}\right)}{g_\epsilon\!+\!1}\!+\!\frac{g_\epsilon}{g_\epsilon\!+\!1}\!+\!\log\left( \frac{n\!-\!k}{g_\epsilon\!+\!1} \right) \right]}.
	\end{align}
	Noting that $C > 0$, the bound on $\eta$ is obtained by solving the  inequality $\eta^2 + C\eta - C \geq 0$ subject to $\eta \in (0,1)$ to get
	\begin{equation}
		\label{eq:CBP_CCP_Rho_Bound}
		\eta \geq \frac{-C + \sqrt{C^2 +4C}}{2}.
	\end{equation}
 Finally,~\eqref{eq:Bound_m_s_length_Intermediate},  with~\eqref{eq:CBP_CCP_Chi_Bound}
 and~\eqref{eq:CBP_CCP_Rho_Bound}, gives the required bound for the sufficient number of tests for the CBP algorithm.
\end{proof}

\begin{proof}[Proof of Corollary \ref{cor:CBP_s_Bound_sStar}]
	We follow on similar lines as~\cite{Chan_Jaggi_Saligrama_Agnihotri_2014} and simplify~\eqref{eq:CBP_s_Bound} at $s = s^*$, to get
	\begin{align}
		m &\geq \frac{\chi}{1\!-\!\eta} \frac{n\!-\!k}{\left(\frac{n\!-\!k}{n}\right)^{s^*}} \log\left(\frac{n}{n\!-\!k}\right) \left[\log(n-k)\!+\!\gamma\!-\!H_{g_\epsilon}\right] \nonumber \\
		\label{eq:Bound_m_s_length_Modified}
		&\geq \frac{\chi}{1\!-\!\eta} \left(\frac{n}{n\!-\!k}\right)^{s^*}\!\left( k\!-\! \frac{k^2}{2(n\!-\!k)} \right) \left[\log(n\!-\!k)\!+\!\gamma\!-\!H_{g_\epsilon}\right] \\
		\label{eq:Bound_m_s_length_Simplified}
		&\geq \frac{\chi k}{1\!-\!\eta} \left(\frac{n}{n\!-\!k}\right)^{s^*} \left[\log(n\!-\!k)\!+\!\gamma\!-\!H_{g_\epsilon}\right]. 
	\end{align}
	where~\eqref{eq:Bound_m_s_length_Modified} is obtained by using $\log(1+x) \geq x-x^2/2$ with $x = k/(n\!-\!k)$ and~\eqref{eq:Bound_m_s_length_Simplified} holds because increasing the RHS of~\eqref{eq:Bound_m_s_length_Modified} can only make the error probability in~\eqref{eq:CBP_s_length_Failure_Probability_Raw_Form} smaller.
\end{proof}

\begin{proof}[Proof of Lemma \ref{lem:DD_Interim_Results}]
Part \ref{lem:DD_Interim_Results_a} follows because the marginals are binomial, so it only remains to show parts \ref{lem:DD_Interim_Results_b} and \ref{lem:DD_Interim_Results_c}. 	

\ref{lem:DD_Interim_Results_b} A non-defective item will be in the PDS if it does not participate in any of the negative tests. If there are $B_-$ negative tests, the probability that any given item among the $n-k$ non-defective items will be in the PDS is $(1-p)^{B_-}$. Since the tests are independent, denoting the number of hidden non-defectives by the random variable $G$, it is easy to see that
	\begin{equation}
		\label{eq:DD_G_given_B_minus}
		G|B_- \sim \text{Bin}(n-k,(1-p)^{B_-}).
	\end{equation}
	Using $\mathbb{E}[G|B_-=r] = (n-k)(1-p)^r$ \textcolor{black}{, $0 \le r \le m$}, we get
	\begin{align}
		\bar{g}
		=& \sum_{g=0}^{n-k} g\sum_{r=0}^{m}\mathbb{P}(B_-=r)\mathbb{P}(G=g|B_-=r) \nonumber \\
		=& \sum_{r=0}^{m}\mathbb{P}(B_-=r)\mathbb{E}[G|B_-=r] \nonumber \\
		=& (n-k)\sum_{r=0}^{m}{m \choose r}q_-^r(1-q_-)^{m-r}(1-p)^r \nonumber \\
		=& (n-k)(1-p(1-p)^k)^m, \nonumber
	\end{align}
	where the last step follows by using part (a) of this lemma and the binomial expansion for $(a+b)^m$.
	
	\ref{lem:DD_Interim_Results_c} $\mathbb{P}(\cap_{i=1}^{d}\{L_i = 0\} | G=g)$ denotes the probability that  the set output by the DD algorithm misses $d$ of the defective items, conditioned on $g$ hidden non-defective items being present in the PDS. In a single test, a defective item will \emph{not} be missed, i.e., it will be identified as a definite defective, if it is the sole item among the PDS participating in that test; this occurs with probability $p(1-p)^{k-1}(1-p)^g$. Since the $d$ items being classified as definite defectives are mutually exclusive events, the probability that none of the $d$ defective items are correctly classified as a definite defective in a single test is $1 - dp(1-p)^{k-1}(1-p)^g$. Finally, this event should happen across all the $m$ independently drawn tests. Thus, we have that
	\begin{equation}
		\mathbb{P}(\cap_{i=1}^{d}\{L_i = 0\} | G=g) = (1 - dp(1-p)^{k-1+g})^m, \nonumber
	\end{equation}
which completes the proof. 
\end{proof}

\begin{proof}[Proof of Theorem \ref{thm:DD_Bound}]
	The error event will occur if \emph{more than} $d_\epsilon$ defective items remain unidentified, i.e., $\cap_{i=1}^{d_\epsilon + 1}\{L_i = 0\}$ occurs. Using Lemma~\ref{lem:DD_Interim_Results}~\ref{lem:DD_Interim_Results_c} and the union bound, we get
	\begin{equation}
		\label{eq:DD_Conditional_Prob}
		\mathbb{P}(e(\hat{x},x)>\epsilon|G=g) \!\leq\! {k \choose d_\epsilon + 1}(1-(d_\epsilon\!+\!1)p(1\!-\!p)^{k\!-\!1\!+\!g})^m,
	\end{equation}
	and hence 
	\begin{align}
		\mathbb{P}&(e(\hat{x},x)\!>\!\epsilon) \nonumber\\
		&\leq \sum_{g=0}^{n\!-\!k} {k \choose d_\epsilon\!+\!1}(1\!-\!(d_\epsilon\!+\!1)p(1\!-\!p)^{k\!-\!1\!+\!g})^m \mathbb{P}(G = g). \label{eq:pe_total}
	\end{align}
	Now, in order to characterize $\mathbb{P}(G = g)$, note that since the tests are drawn independently, $B_- \sim~\text{Bin}(m, q_-),$ where $q_- = (1-p)^k$. Using this along with the fact that $G|B_- \sim \text{Bin}(n-k,(1-p)^{B_-})$ (see ~\eqref{eq:DD_G_given_B_minus}), we get
	\begin{align}
		\mathbb{P}(G = g) &= \sum_{b=0}^{m}~\mathbb{P}(G = g | B_- = b) \mathbb{P}(B_- = b) \nonumber \\
						  &= \sum_{b=0}^{m} \left[{n-k \choose g} (1-p)^{bg} (1-(1-p)^b)^{n\!-\!k\!-\!g} \right. \nonumber\\ &\left. \times {m \choose b} (1-p)^{kb} (1-(1-p)^k)^{m\!-\!b} \right] \nonumber \\
						  &= {n-k \choose g} \sum_{b=1}^{m} \left[ {m \choose b} (1-p)^{b(g+k)} \right. \nonumber\\
						  \label{eq:DD_Prob_G}
						  &\left.\times (1-(1-p)^b)^{n\!-\!k\!-\!g} (1-(1-p)^k)^{m\!-\!b} \right].
	\end{align}
Note that, in \eqref{eq:DD_Prob_G}, we sum the terms from $b=1$ since $1-(1-p)^b = 0$ when $b=0$. It is easy to see that $m{n-k \choose g} {m \choose m/2} (1-p)^{(g+k)}$ is a loose upper bound on $\mathbb{P}(G=g)$. Clearly, for any given $m$ and large enough $g$, $\mathbb{P}(G=g)$ becomes negligible. Using this to replace $g$ by $\bar{g}+\tilde{g}$ for some non-negative $\tilde{g}$ (with $\bar{g}$ as given by Lemma~\ref{lem:DD_Interim_Results}~\ref{lem:DD_Interim_Results_b}), we get
	\begin{align}
		\mathbb{P}&(e(\hat{x},x)>\epsilon) \nonumber\\
		&\leq {k \choose 	d_\epsilon\!+\!1}\sum_{g=0}^{n\!-\!k} (1\!-\!(d_\epsilon\!+\!1)p(1\!-\!p)^{k\!-\!1\!+\!g})^m \mathbb{P}(G = g)  \\	
				&= {k \choose d_\epsilon\!+\!1}\left\{\sum_{g=0}^{\bar{g}+\tilde{g}} (1\!-\!(d_\epsilon\!+\!1)p(1\!-\!p)^{k\!-\!1\!+\!g})^m \mathbb{P}(G\!=\!g) \right.  \\
				&~~~~~+ \left.\sum_{g=\bar{g}+\tilde{g}+1}^{n\!-\!k}\!(1\!-\!(d_\epsilon\!+\!1)p(1\!-\!p)^{k\!-\!1\!+\!g})^m \mathbb{P}(G\!=\!g) \right\} \\
				&\leq {k \choose d_\epsilon\!+\!1} (1\!-\!(d_\epsilon\!+\!1)p(1\!-\!p)^{k\!-\!1\!+\!\bar{g}+\tilde{g}})^m \sum_{g=0}^{\bar{g}+\tilde{g}} \mathbb{P}(G\!=\!g) \label{eq:DD_bound_penultimate} \\
		 &\leq {k \choose d_\epsilon\!+\!1}(1-(d_\epsilon + 1)p(1-p)^{k-1+\bar{g}+\tilde{g}})^m, \label{eq:DD_bound_final}
	\end{align} 
	where $\tilde{g} \ge 0$ is a tuning parameter chosen such that the inequality in~\eqref{eq:DD_bound_penultimate} holds.
\end{proof}

\section{Simulation Results} 
\label{sec:Simulation Results}

\subsection{Tightness of the Bounds}
\label{sec:Tightness of the Bounds}
We simulate exact and approximate set identification scenarios for the CoMa, CBP, and DD algorithms and compare them with our theoretical bounds and the existing results \cite{Chan_Jaggi_Saligrama_Agnihotri_2014, Aldridge_Balsassini_2014}. We consider $n = 2500$ items, out of which $k = 50$ are defective. Further, we use Bernoulli parameter $p = 1/k$ for generating the plots with CoMa and DD algorithms and $s = s^*$ with the weight parameter, $c = 1/2$, in the CBP algorithm (see Theorem~\ref{thm:CBP_s_Bound}). The simulated curves are obtained by averaging over $1,\!000$ Monte Carlo runs.

\begin{figure*}[t]
	\centering
	\hspace*{-0.1in}
	\includegraphics[width=0.95\linewidth]{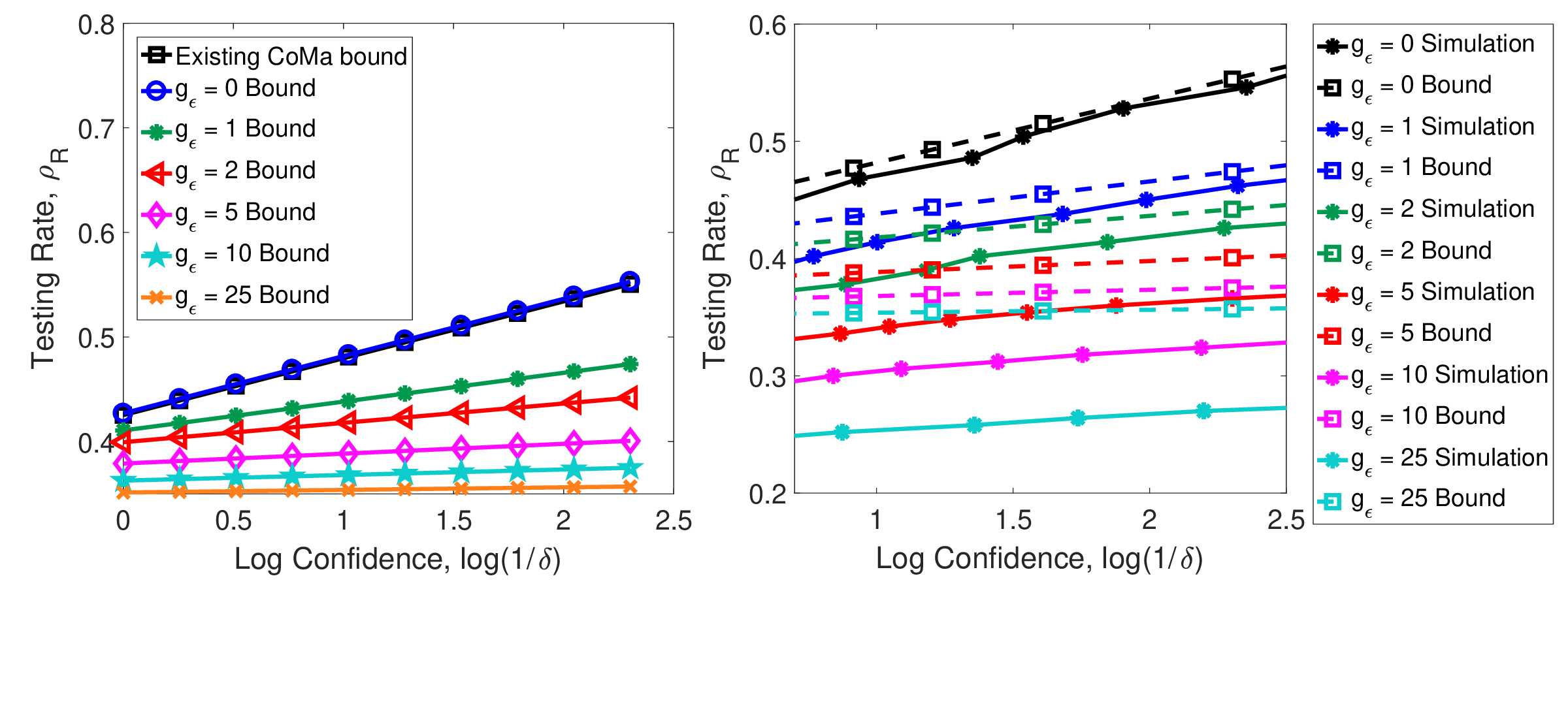}
	\vspace{-0.6in}
	\caption{(Left) Comparison of the sufficiency bound in~\cite[Thm. 4]{Chan_Jaggi_Saligrama_Agnihotri_2014} and theoretical PAC bounds \eqref{eq:CoMa_Bound} on the testing rate; (Right) theoretical and simulated testing rates at different error tolerance values, for the CoMa algorithm.}
	\vspace{-0.1in}
	\label{fig:simulation_coma_plots}
\end{figure*}
\begin{figure*}[h!]
	\centering
	\hspace*{-0.1in}
	\includegraphics[width=0.95\linewidth]{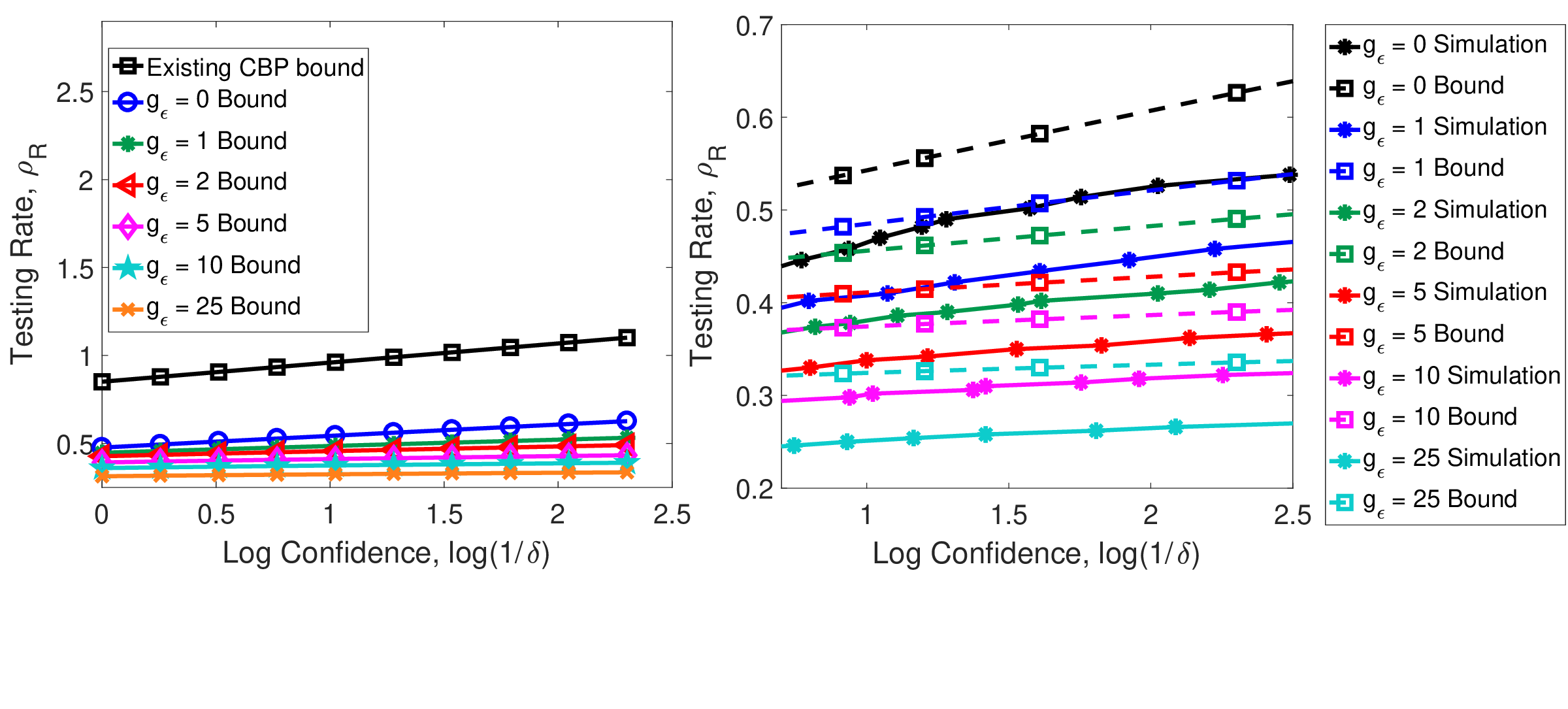}
	\vspace{-0.6in}
	\caption{(Left) Comparison of the sufficiency bound in~\cite[Thm. 3]{Chan_Jaggi_Saligrama_Agnihotri_2014} and the theoretical PAC bounds \eqref{eq:CBP_s_Bound} on the testing rate; (Right) theoretical and simulated testing rates at different error tolerance values, for the CBP algorithm.}
	\vspace{-0.1in}
	\label{fig:simulation_cbp_plots}
\end{figure*}
\begin{figure*}[h!]
	\centering
	\hspace*{-0.1in}
	\includegraphics[width=0.95\linewidth]{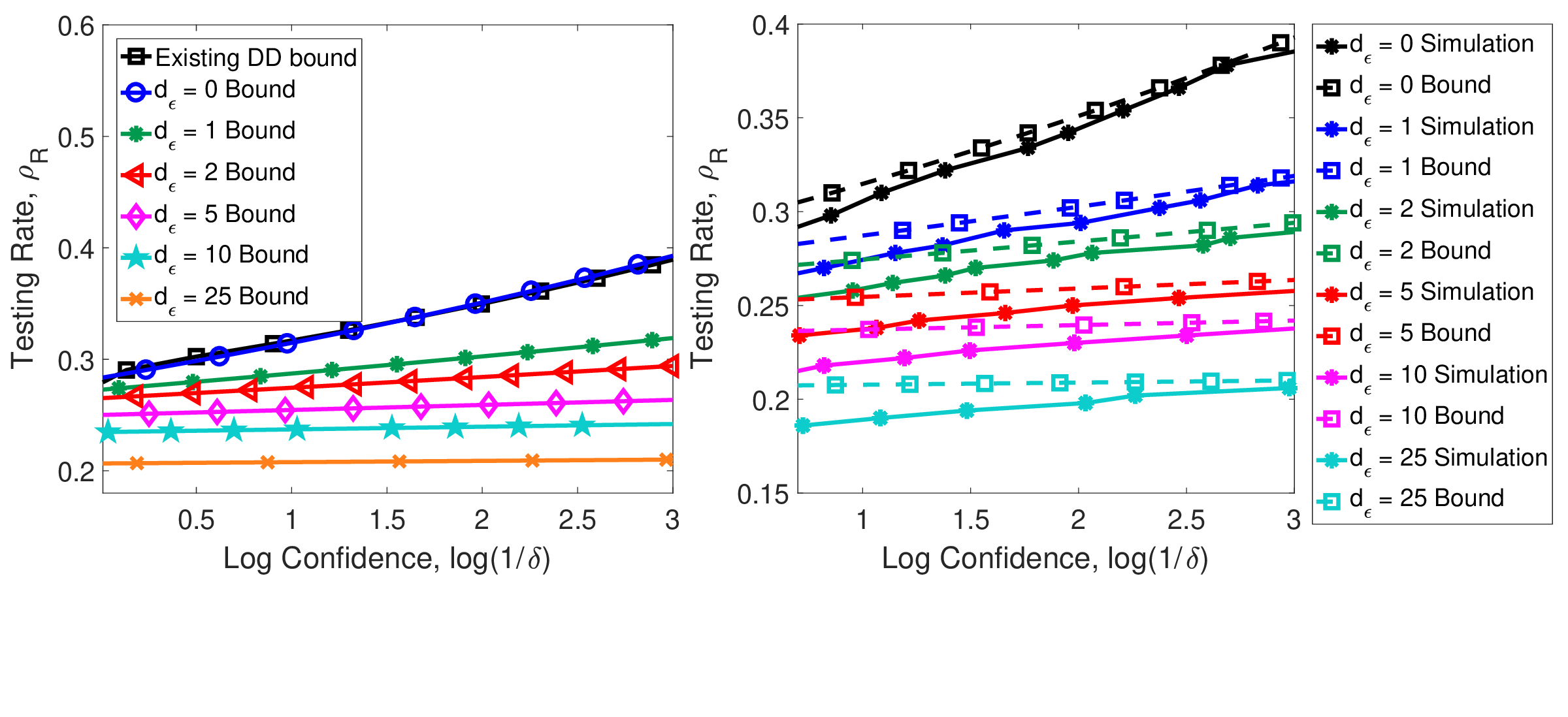}
	\vspace{-0.6in}
	\caption{(Left) Comparison of the sufficiency bound in~\cite[Lem. A.8]{Aldridge_Balsassini_2014} and the theoretical PAC bounds \eqref{eq:DD_Bound} on the testing rate; (Right) theoretical and simulated testing rates at different error tolerance values, for the DD algorithm.}
	\vspace{-0.1in}
	\label{fig:simulation_dd_plots}
\end{figure*}

Figures~\ref{fig:simulation_coma_plots},~\ref{fig:simulation_cbp_plots} and~\ref{fig:simulation_dd_plots} show the relationship between the testing rate, $\rho_R = m_S/n$ and $\log(1/\delta)$, a notion similar in behavior to the confidence parameter, $1-\delta$, for CoMa, CBP, and DD algorithms, respectively. In particular, 1) the \emph{left} subplots illustrate how the proposed PAC bound compare with the existing bounds whereas 2) the \emph{right} subplots compare the theoretical PAC bounds with the simulation curves.

From the left subplots of Figs.~\ref{fig:simulation_coma_plots} and~\ref{fig:simulation_dd_plots}, we note that the testing rate obtained from the PAC analysis equations when $\epsilon = 0$ (blue curves) match with the existing upper bounds under exact recovery (black curves) for both CoMa and DD. In contrast, from left subplot of Fig.~\ref{fig:simulation_cbp_plots}, we note that there is a difference of $\approx 0.4-0.5$ in testing rates between the PAC bound when $g_\epsilon = 0$ (blue curve) and the existing sufficiency condition under the exact recovery (black curve). This difference can be attributed to the choice of the Chernoff parameter $\eta$ in the two cases. The authors in~\cite{Chan_Jaggi_Saligrama_2011} choose $\eta=1/2$, whereas, we optimize $\eta$ to obtain the minimum number of tests such that the probability of error is lower than $\delta$. \textcolor{black}{We defer further discussion on optimizing $\eta$ to Sec.~\ref{sec:Effect of Optimizing Eta in CBP}.} We note that the sufficient number of tests is lower when we allow $\epsilon > 0$ in all three algorithms. Also, the right subplots validate the PAC upper bound derived in the paper, and we note that the bound is particularly tight in the case of the DD algorithm (Fig.~\ref{fig:simulation_dd_plots}.) 

We see that allowing for a few missed/false positive items in the set output by the algorithm can help significantly reduce the sufficient number of tests needed. As mentioned earlier in Sec.~\ref{sec:Approximate Defective Set Identification}, allowing for a few false positives (as in the CoMa and CBP algorithms) can also be useful in applications where the goal is to identify most of the non-defective items \cite{Sharma_Murthy_2017}, since these algorithms do not miss defective items. Similarly, as in the rare antigen identification example mentioned in Sec.~\ref{sec:Approximate Defective Set Identification}, allowing for a small number of missed defectives is useful when it is important to quickly identify some of the defective items, and we see that by not requiring that \emph{all} the defective items be found, the DD algorithm can significantly reduce the number of group tests that need to be conducted, while retaining a high confidence in the outcome.

\textcolor{black}{Lastly, we observe from the right subplots of Figs.~\ref{fig:simulation_coma_plots} and~\ref{fig:simulation_cbp_plots} that the CoMa bound is tight for exact recovery whereas the CBP bound follows the slope of the simulated curves slightly better at higher $g_\epsilon$ and lower confidence.}

\subsection{Effect of Approximation Error Tolerance on the Bounds}
\label{sec:Effect of Approximation Error}
In this subsection, we discuss the achievability bounds for the three algorithms, CoMa, CBP, and DD, as a function of the number of errors allowed. Fig.~\ref{fig:approx_error} shows how the testing rate, $\rho_R = m_S/n$, of CoMa and CBP (left subplot) and DD (right subplot) varies as the number of FPs and FNs, respectively, for different values of the parameter $\delta$. In the plots, the $m_S$ is computed by setting $n = 2500$, $k = 50$ using Theorem~\ref{thm:CoMa_Bound}, Theorem~\ref{thm:CBP_s_Bound}, and Theorem~\ref{thm:DD_Bound} for CoMa with $p = 1/k$, CBP with $s = s^*$, and DD with $p = 1/k$, respectively.

From Fig.~\ref{fig:approx_error}, we see that the testing rate, $\rho_R$, (and thus the sufficiency bound on $m$) is proportional to $(\log(1/\tau) + 1/\tau)$, where $\tau = g_\epsilon+1$ in the CoMa and CBP cases, and $\tau = d_\epsilon+1$ in the case of DD. From the left subplot, we see that although the testing rate for CBP as computed from our bounds for $g_\epsilon = 0$ is higher than for CoMa, the slope for CBP is steeper as $g_\epsilon$ increases. Therefore, the testing rate obtained from the CBP bound at, say, $g_\epsilon = \{25, 30\}$ is lower than that obtained from the CoMa bound. For example, at $g_\epsilon = 30$, the testing rate of CoMa is $\rho_R \approx 0.3574$ and that of CBP is $\rho_R \approx 0.325$ at a $\delta = 0.1$. In practice, both CoMa and CBP are functionally the same. However, the differences seen can be attributed to the analysis procedure \textcolor{black}{and the way the test matrix is constructed.} A similar observation can be made from the right subplots of Fig.~\ref{fig:simulation_coma_plots} and Fig.~\ref{fig:simulation_cbp_plots}. From the right subplots of Fig.~\ref{fig:simulation_coma_plots} and Fig.~\ref{fig:simulation_cbp_plots}, we see that the CoMa bound matches very well with the simulation curve at $g_\epsilon = 0$ whereas the CBP bound is relatively loose. On the other hand, the CoMa bound is looser than the CBP bound relative to their respective simulation curves at $g_\epsilon = 25$. Thus, the analysis used in deriving the CoMa bound is useful for low approximation error scenarios, whereas that used in the CBP is useful when we allow for higher approximation error. This can be attributed to the use of the union-bound argument (see~\eqref{eq:CoMa_Bound_Intermediate}). The slope of the factor ${n-k \choose g_\epsilon+1}$ is bounded between $[(c_l/x)^x (\log(c_l/x) -1), (c_u/x)^x (\log(c_u/x) -1)$ where $x = g_\epsilon+1$, $c_u = e(n-k)$ and $c_l = n-k$. That is, the slope increases with $g_\epsilon$, making the CoMa bound looser relative to the CBP bound at higher $g_\epsilon$ when $g_\epsilon+1 \ll (n-k)/2$.
	
Further, from the right subplot of Fig.~\ref{fig:approx_error}, the testing rate of the DD algorithm is lower than that of CoMa and CBP algorithms, due to the dependency of $m_S$ on $\kappa(\gamma^\prime)\log n = \log k$ in the former case as against a dependency on $\log n$ in CoMa/CBP.
\begin{figure}[t]
	\centering
	\hspace*{-0.1in}
	\includegraphics[width=0.8\linewidth]{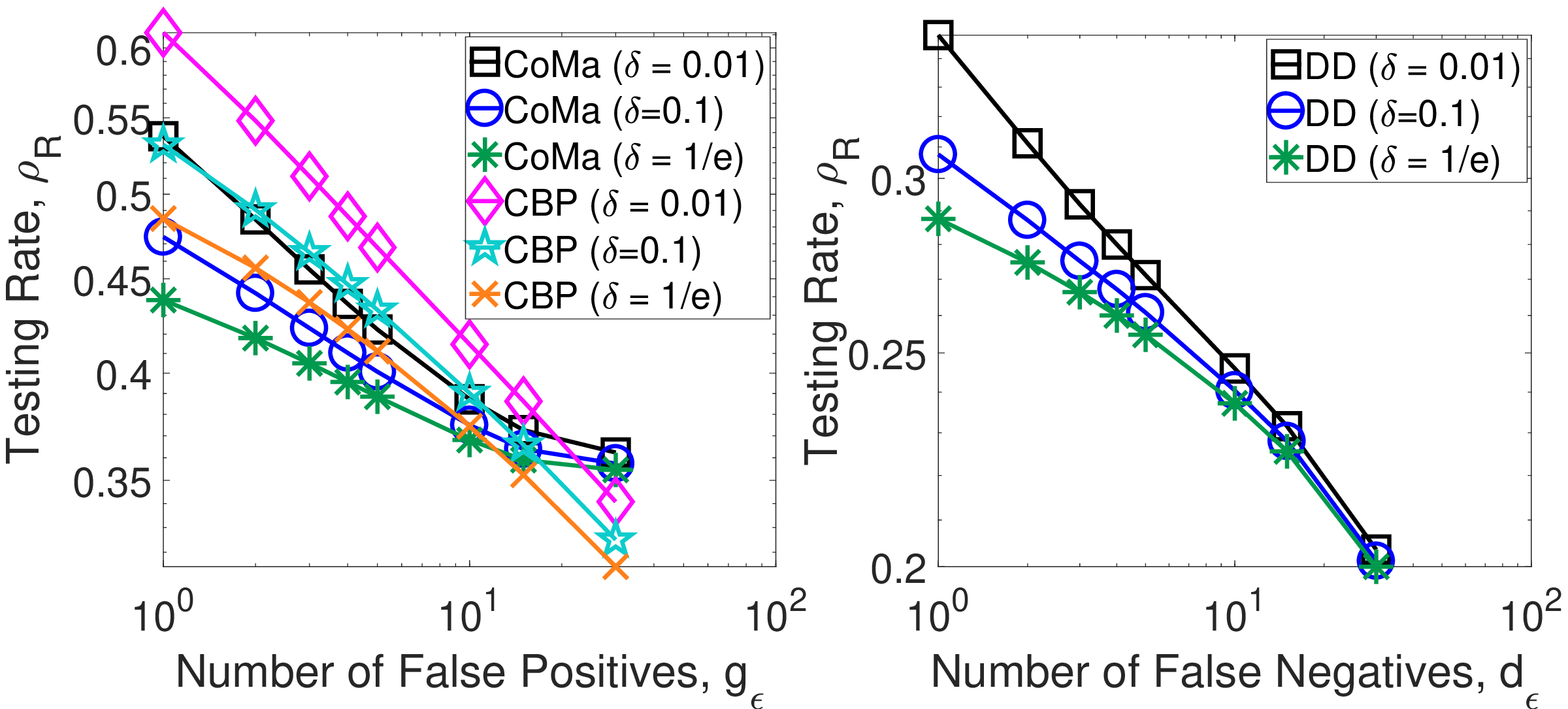}
	\caption{The sufficiency bound for CoMa and CBP (left subplot) and DD (right subplot) as the number of errors allowed varies, for different values of~$\delta$.}
	\vspace{-0.1in}
	\label{fig:approx_error}
\end{figure}

\subsection{Effect of Performing a Insufficient Number of Tests}
\label{sec:Effect of Performing Insufficient Number of Tests}
In this subsection, we present the utility of our bounds from an alternative viewpoint: what guarantees can be provided for a given $n$ and $k$, if the number of group tests performed is insufficient to guarantee exact recovery with high confidence? To illustrate this, in Fig.~\ref{fig:delta_eps_curves}, we plot the parameter, $\delta$, as a function of $g_\epsilon$ for CoMa and CBP algorithms, and as a function of $d_\epsilon$ for DD algorithm, at two different values of $m$ for each algorithm. The plot is generated by setting $n = 2500$, $k = 50$, $p = 1/k$, $c = 1/2$, $s = s^*$ in Theorems~\ref{thm:CoMa_Bound},~\ref{thm:CBP_s_Bound} and~\ref{thm:DD_Bound} for the purpose of discussion. However, similar observations can be made across different values of $n$, $k$, $m$ etc. 
\begin{figure}[t]
	\centering
	\includegraphics[width=0.8\linewidth]{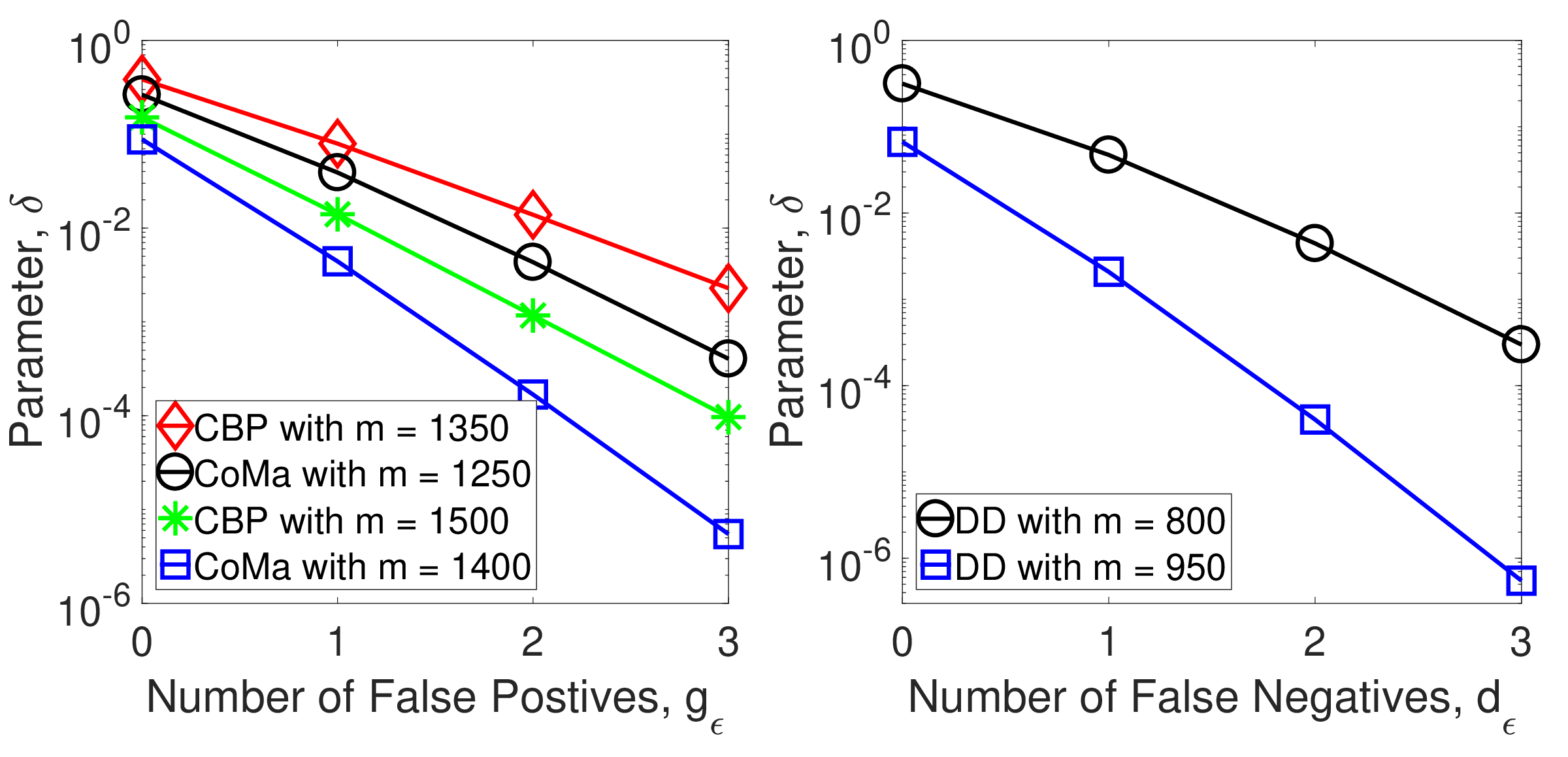}
	\caption{Illustration of the trade-off between the confidence parameter and the approximation error tolerance for CoMa, CBP and DD algorithms.}
	\vspace{-0.1in}
	\label{fig:delta_eps_curves}
\end{figure}

From Fig.~\ref{fig:delta_eps_curves}, we see that allowing a small number of false positive/negative errors allows one to obtain significantly higher confidence, i.e., a lower $\delta$, for a given number of tests. For example, to ensure a confidence of $\approx 91\%$, with the CoMA algorithm, $1400$ tests are sufficient as per~\eqref{eq:CoMa_Bound} with $n = 2500$, $k = 50$ and $p = 1/k$ for exact recovery. However, if only $1250$ tests could be conducted, one can provide a confidence of $\approx 73\%$ for exact recovery, while tolerating a single false positive error yields a confidence of $\approx 96\%$ ($> 91\%$). If two false positives are allowed, the confidence goes well above~$99\%$. Similar conclusions can be drawn about the CBP and DD algorithms.

This example illustrates that the PAC bounds can provide a range of guarantees when the number of group tests performed is insufficient to guarantee exact recovery with high confidence. One can either tolerate a small number of errors or choose to operate at lower confidence levels.

\begin{figure}[t]
	\centering
	\hspace*{-0.1in}
	\includegraphics[width=0.8\linewidth]{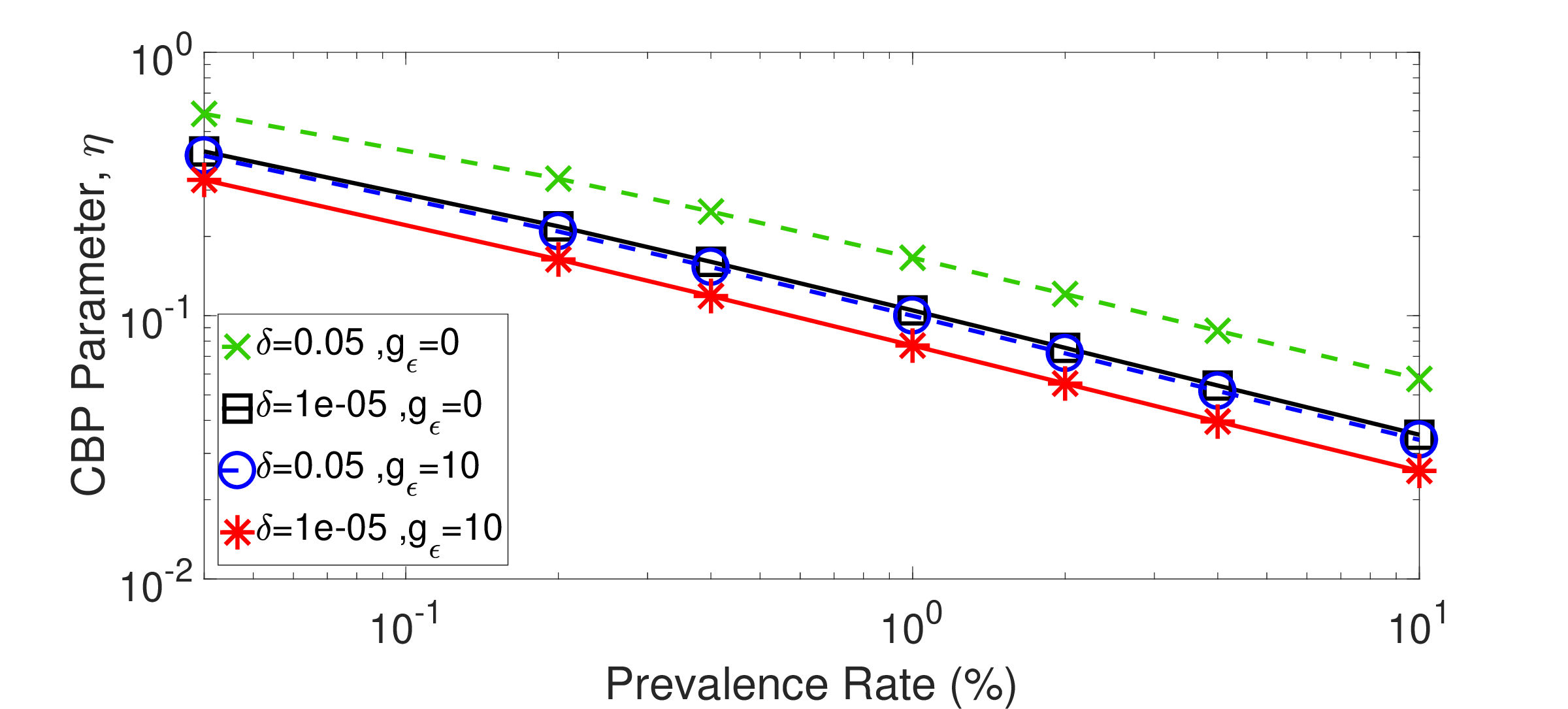}
	\caption{The CBP parameter, $\eta$, computed at different $g_\epsilon$ and $\delta$ with $n = 2500$, $c = 1/2$, $s = s^*$, as the prevalence rate (\%) is varied.}
	\vspace{-0.1in}
	\label{fig:eta_vs_p}
\end{figure}

\subsection{Effect of Optimizing $\eta$ in CBP}
\label{sec:Effect of Optimizing Eta in CBP}
In this subsection, we present empirical evidence that optimizing the Chernoff parameter $\eta$ yields tighter bounds on the number of tests. For this experiment, we set $n = 2500$, $s = s^*$ and $c = 1/2$ in the CBP algorithm (see Theorem~\ref{thm:CBP_s_Bound}). From~\eqref{eq:CBP_s_Bound}, it is clear that $m_S$ decreases as $\eta$ is decreased. At the same time, choosing an $\eta$ such that~\eqref{eq:CBP_CCP_Pe_Split_Bound} is upper bounded by $\delta$ ensures that the sufficiency condition is satisfied. We illustrate the effect of varying the prevalence rate ($\%$), i.e., $100 \times k/n$ at $\delta \in \{10^{-5}, 0.05\}$ and $g_\epsilon \in \{0, 10\}$ in Fig.~\ref{fig:eta_vs_p}. We see that $\eta$ computed using~\eqref{eq:CBP_CCP_Rho_Bound} decreases as the prevalence rate increases irrespective of the choice of $\delta$ and $g_\epsilon$. For a fixed $\delta$, the value of $\eta$ is higher under exact recovery (i.e., when $g_\epsilon = 0$) to that when $g_\epsilon = 10$, and, consequently, the sufficient number of tests is lower in the latter case. For a fixed $g_\epsilon$, as the confidence, $1-\delta$, decreases, $\eta$ decreases, thereby lowering the sufficient number of tests. The gap between the values of $\eta$ is higher across the above $\delta$ and $g_\epsilon$ values when the prevalence rate is lower, in fact, close to $0\%$, and as we move towards right in Fig.~\ref{fig:eta_vs_p}, the gap decreases.

\begin{figure}[t]
	\centering
	\hspace*{-0.1in}
	\includegraphics[width=0.8\linewidth]{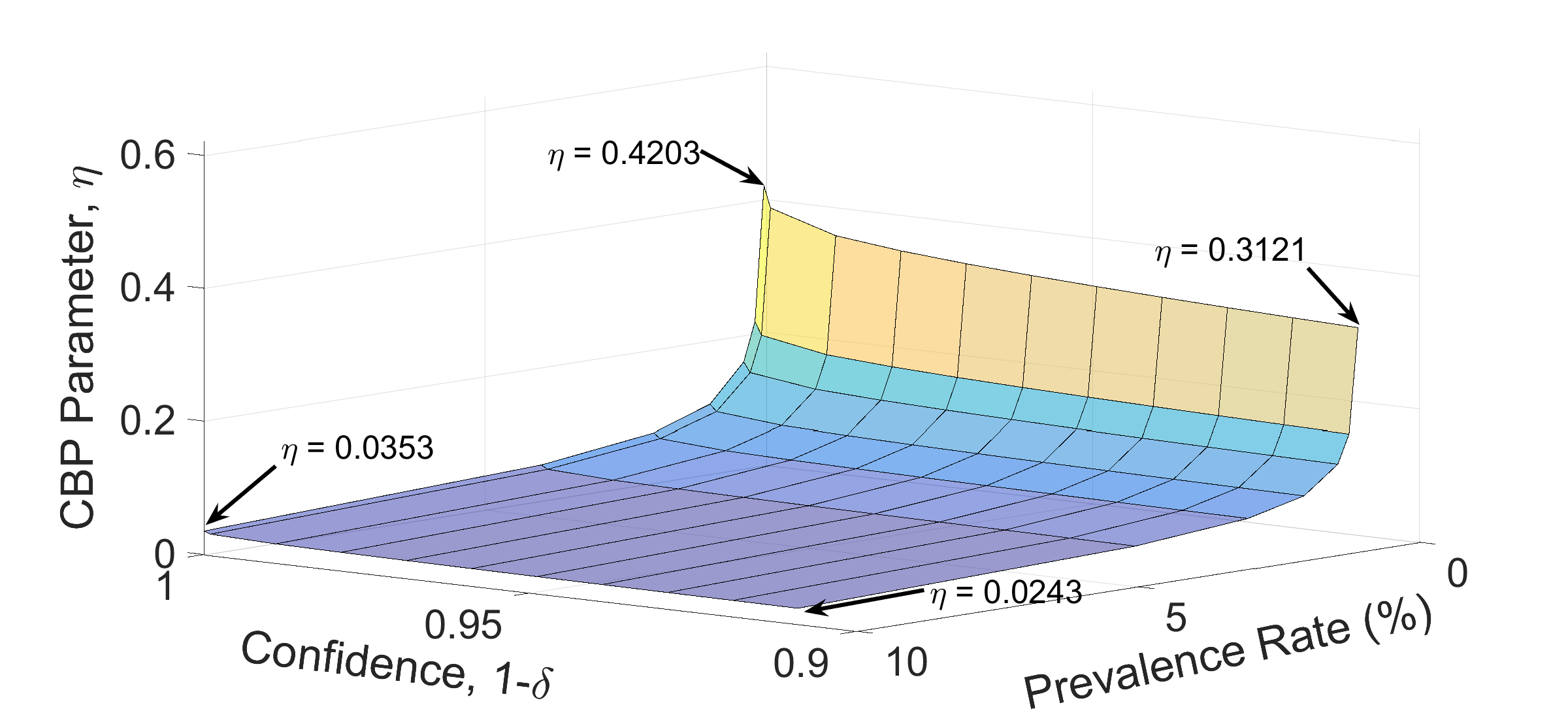}
	\caption{The surface of the CBP parameter, $\eta$, computed at $g_\epsilon = 0$ with $n = 2500$, $c = 1/2$, $s = s^*$, as the prevalence rate (\%) and $\delta$ are varied.}
	\vspace{-0.1in}
	\label{fig:eta_surface}
\end{figure}

Next, in Fig.~\ref{fig:eta_surface}, we show the effect of varying $\delta$ and the prevalence rate using a surface plot of $\eta$, for $g_\epsilon = 0$. When the prevalence rate is very low, $\eta = 1/2$ is optimum. Otherwise, the optimum value of $\eta$ is much lower. Hence, computing $\eta$ using~\eqref{eq:CBP_CCP_Rho_Bound} leads to a tighter sufficiency bound, an improvement by a factor of $2$ (see left subplot of Fig.~\ref{fig:simulation_cbp_plots}).

We conclude our discussion on the parameter $\eta$ by empirically showing its variation as $n$ varies in Fig.~\ref{fig:eta_vs_n}. The optimum value of $\eta$ decreases as $n$ increases. The variation across the prevalence rate remains similar to the observations made earlier. Therefore, in conclusion, the computation of the optimum Chernoff parameter, $\eta$, can significantly improve the sufficiency bound on $m$.
\begin{figure}[t]
	\centering
	\hspace*{-0.1in}
	\includegraphics[width=0.8\linewidth]{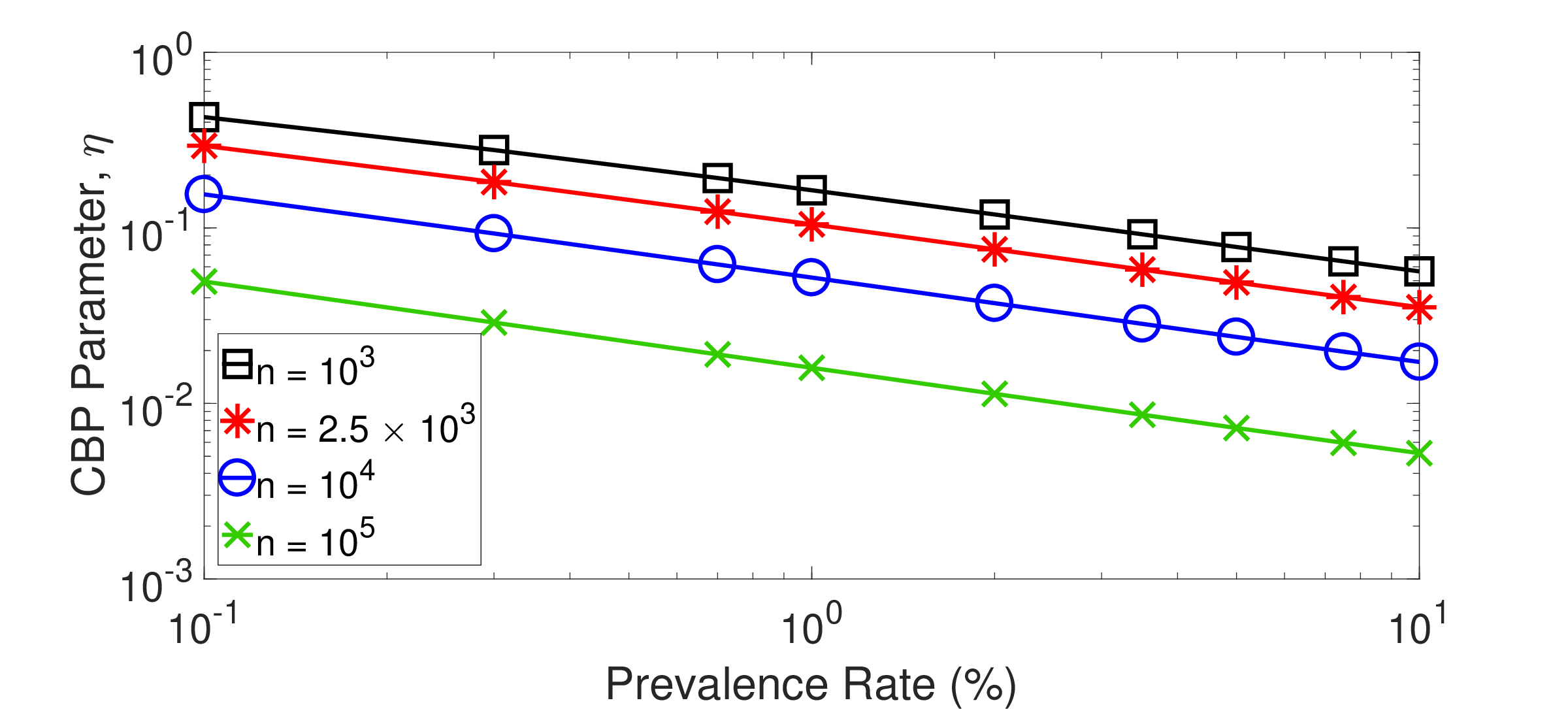}
	\caption{The CBP parameter, $\eta$, as the prevalence rate (\%) is varied, computed for different values of $n$, with $g_\epsilon = 0$, $\delta = 10^{-5}$, $c=1/2$, $s=s^*$.}
	\vspace{-0.1in}
	\label{fig:eta_vs_n}
\end{figure}

\subsection{Numerical Analysis of the DD Tuning Parameter, $\tilde{g}$}
\label{sec:DD Tuning Parameter}
We present an alternate view of  Theorem~\ref{thm:DD_Bound}, where the sufficient number of tests for DD algorithm to succeed with confidence $1-\delta$ when $\epsilon$ approximation errors are allowed is given implicitly by~\eqref{eq:DD_Bound}. We set $p = 1/k$ and relate $\epsilon$ to the allowed number of false negative errors, $d_\epsilon$, by~\eqref{eq:Prob_Error_FN_Relationship}. We start by posing the problem of solving for the sufficient number of tests in Theorem~\ref{thm:DD_Bound} as an optimization problem:
\begin{align}
	{m}_S, \tilde{g} &= \underset{m \in \mathbb{Z}_+,~\tilde{g}^\prime \in (0,\infty)}{\argmin}~m \nonumber \\
	&\text{s.t.}~\sum_{g=0}^{n\!-\!k} (1\!-\!(d_\epsilon\!+\!1)p(1\!-\!p)^{k\!-\!1\!+\!g})^m \mathbb{P}(G = g) \nonumber\\ &~~\leq (1-(d_\epsilon + 1)p(1-p)^{k-1+\bar{g}+\tilde{g}^\prime})^m \nonumber \\
	&~~~~~\text{and}~\nonumber\\	
	\label{eq:DD_gOpt}
	&{k \choose d_\epsilon\!+\!1}(1-(d_\epsilon + 1)p(1-p)^{k-1+\bar{g}+\tilde{g}^\prime})^m \leq \delta,
\end{align}
where $p = 1/k$, $d_\epsilon < k$ and $\mathbb{P}(G = g)$ is given by~\eqref{eq:DD_Prob_G}. We solve the optimization problem in~\eqref{eq:DD_gOpt} using a grid-search over a suitable range of $m$ and $\tilde{g}^\prime$ to ensure that we are operating in the feasible range.

\begin{figure}[t]
	\centering
	\includegraphics[width=0.8\columnwidth]{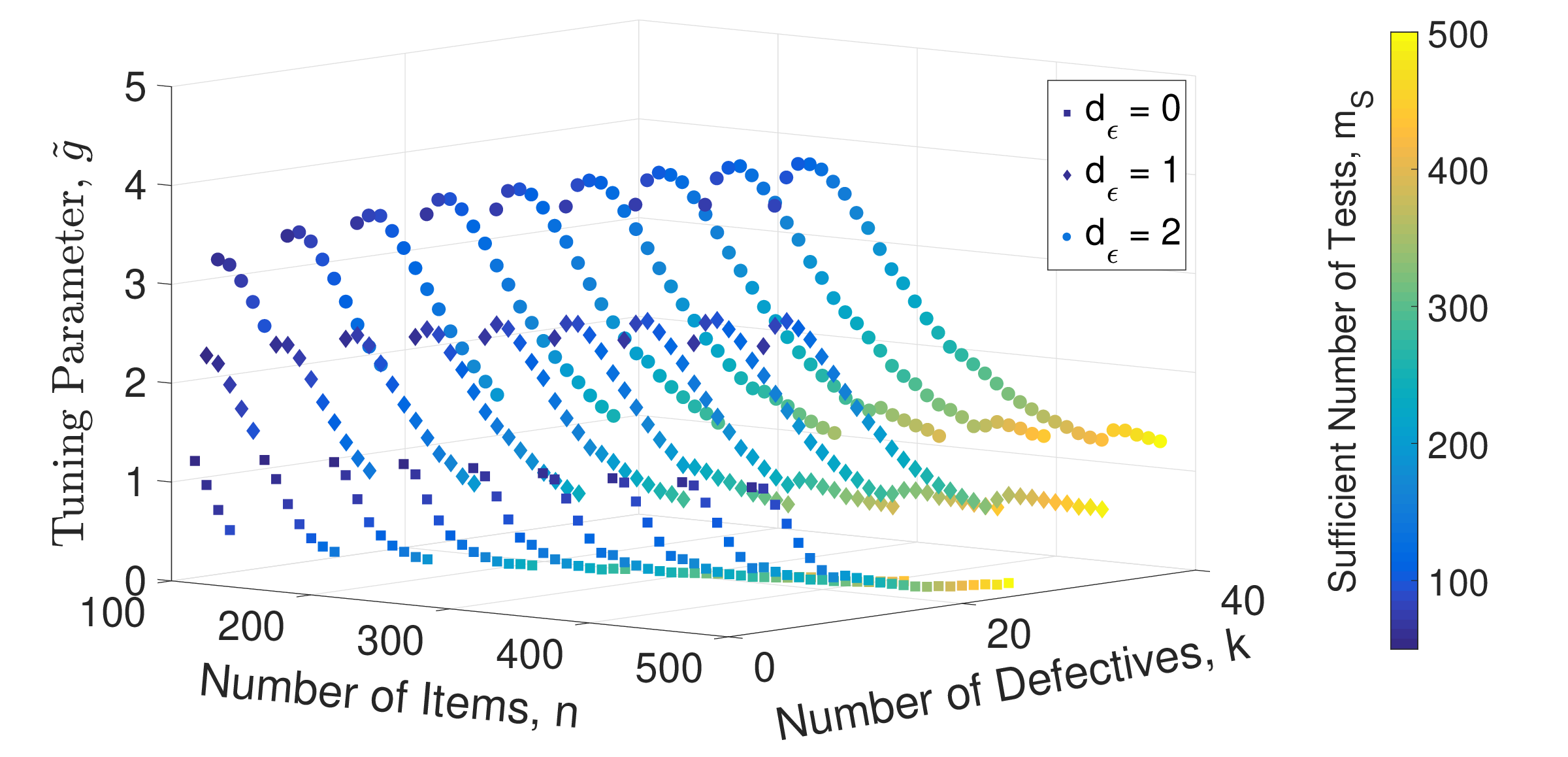}
	\caption{Solving for $\tilde{g}$ numerically by varying $n$ from $100$ to $500$ in steps of $50$ and suitable values of $k$ where the DD algorithm succeeds with $\delta = 0.01$ and $d_\epsilon \in \{0, 1, 2\}$.}
	\vspace{-0.1in}
	\label{fig:DD_g_tilde_grid_search}
\end{figure}

Figure~\ref{fig:DD_g_tilde_grid_search} shows a scatter plot of the optimum $\tilde{g}$ obtained by solving~\eqref{eq:DD_gOpt} with $p = 1/k$ by varying the population size, $n$, in range $[100,500]$ in steps of $50$ with suitable values of $k$, $\delta = 0.01$ and $d_\epsilon \in \{0,1,2\}$. The color bar in Fig.~\ref{fig:DD_g_tilde_grid_search} shows the sufficient number of tests, $m_S$, obtained by solving the optimization problem. We observe that $\tilde{g}$ increases with $d_\epsilon$ since the values of $g$ at which $\mathbb{P}(G = g)$ becomes negligible shifts to the right. Further, as $k$ increases, $\tilde{g}$ shows as slight increase, achieving a peak value at $k \approx d_\epsilon+4$ and then, tapering off as $k^{\alpha-1} e^{-\beta k} \beta^\alpha$ with $\alpha \approx 1.667$ and $\beta \in [1/5, 1/3]$. Finally, $\tilde{g}$ varies as $\log^2 n$ for a fixed~$k$.

\begin{figure*}[t]
	\centering
	\hspace*{-0.1in}
	\includegraphics[width=0.9\linewidth]{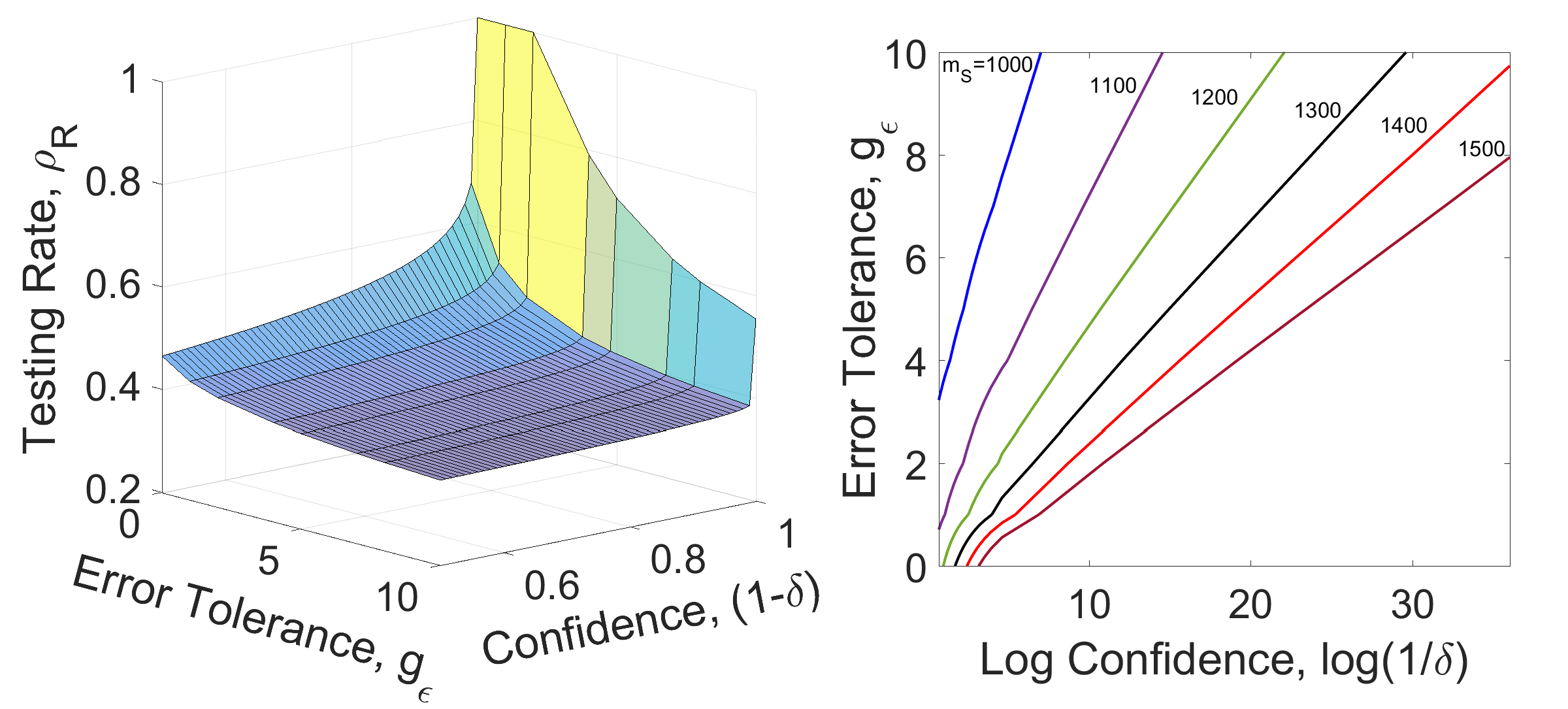}
	\vspace{-0.1in}
	\caption{(Left) The sufficient testing rate \emph{surface} and (Right) sufficient number of tests \emph{contours} vs. the confidence parameter, $1-\delta$, and error tolerance, $g_\epsilon$ with $n = 2500$, $k = 50$ and $p = 1/k$ for the CoMa algorithm.}
	\vspace{-0.1in}
	\label{fig:surface_contour_coma}
\end{figure*}
\begin{figure*}[t]
	\centering
	\hspace*{-0.1in}
	\includegraphics[width=0.9\linewidth]{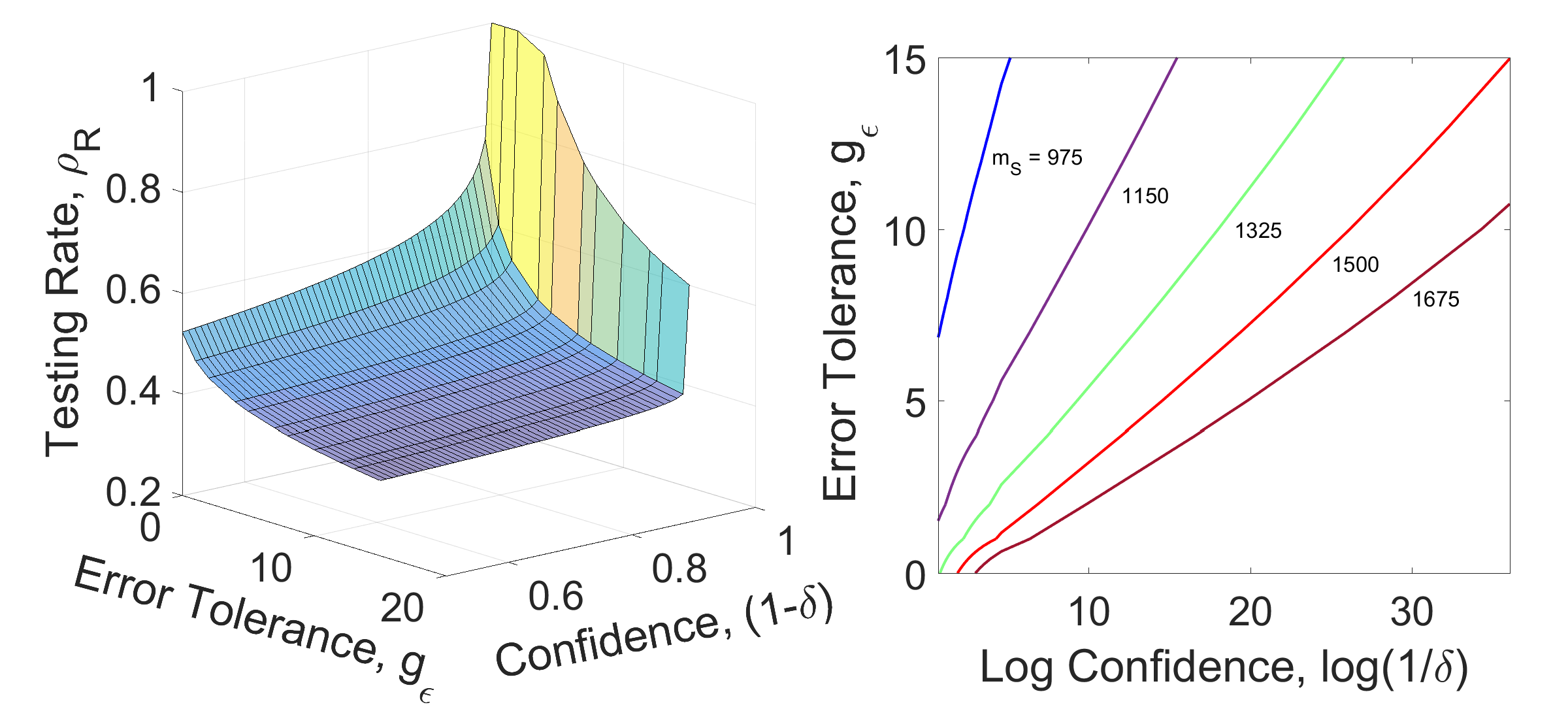}
	\vspace{-0.1in}
	\caption{(Left) The sufficient testing rate \emph{surface} and (Right) sufficient number of tests \emph{contours} vs. the confidence parameter, $1-\delta$, and error tolerance, $g_\epsilon$ with $n = 2500$, $k = 50$, $c = 1/2$ and $s = s^*$ for the CBP algorithm.}
	\vspace{-0.1in}
	\label{fig:surface_contour_cbp}
\end{figure*}
\begin{figure*}[t]
	\centering
	\hspace*{-0.1in}
	\includegraphics[width=0.9\linewidth]{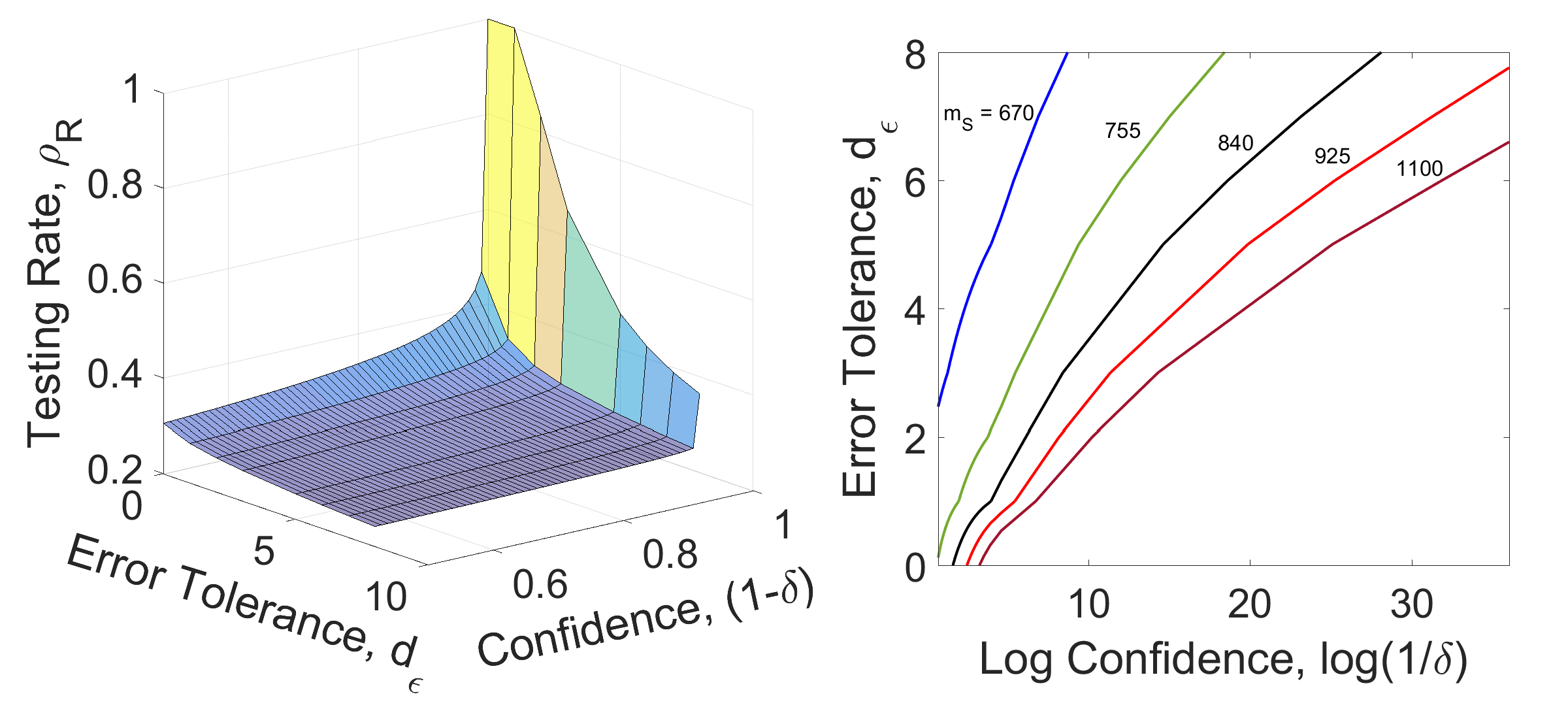}
	\vspace{-0.1in}
	\caption{(Left) The sufficient testing rate \emph{surface} and (Right) sufficient number of tests \emph{contours} vs. the confidence parameter, $1-\delta$, and error tolerance, $d_\epsilon$ with $n = 2500$, $k = 50$ and $p = 1/k$ for the DD algorithm.}
	\vspace{-0.1in}
	\label{fig:surface_contour_dd}
\end{figure*}

\subsection{Testing Rate Surface and Sufficient Tests Contours}
\label{sec:Testing Rate Surface and Sufficient Tests Contours}
We now introduce the notions of the \emph{testing rate surface} and \emph{sufficient tests contours}, which allow better visualization of the trade-off between the error margin and the confidence parameter. These are illustrated in the left (testing rate surface) and right (sufficient tests contour) subplots of Figs.~\ref{fig:surface_contour_coma},~\ref{fig:surface_contour_cbp} and~\ref{fig:surface_contour_dd} for CoMa, CBP, and DD algorithms, respectively. Specifically, the testing rate surface shows the sufficient testing rate, $\rho_R$, as a function of the error tolerance $g_\epsilon$ (or $d_\epsilon$) and confidence $1-\delta$. The testing rate contours are plotted over log confidence and mark the boundary over which a given number of tests are sufficient. For example, referring to Fig.~\ref{fig:surface_contour_coma}, all error tolerance and confidence values to the right and under the  blue curve are achievable when $1,\!000$ group tests are used. From the left subplots, we see that allowing for a nonzero error tolerance allows us to reach a high confidence level without significantly increasing the number of tests. On the other hand, if exact recovery is required, the number of tests rapidly increases as the confidence approaches one. Similar observations on the trade-off between the error tolerance and confidence parameter can be made from the right subplots also. Comparing the testing rate surfaces of CoMa, CBP, and DD algorithms, we note that although the geometry of the surfaces is visually similar, the testing rate offered by the DD algorithm is better by a factor $\approx 1.5$ compared to that of CoMa and by a factor $\approx 1.6$ compared to that of CBP when the confidence level is close to $1$, across various error tolerances $\epsilon$. In other words, at a given confidence level, allowing for a small number of missed defectives leads to a larger reduction the number of group tests compared to allowing for a small number of false positives. Thus, at high confidence ($\approx 1.0$) and low $\epsilon$, the CBP algorithm requires the highest number of tests at $1675$, followed closely by CoMa at $1500$ and then by a larger margin, by the DD algorithm at $1100$.

\subsection{Testing Rate vs. Population Size}
\label{sec:Testing Rate Curves vs. Population Size Across Various Sparsity}

\begin{figure}[t]
	\centering
	\includegraphics[width=0.8\columnwidth]{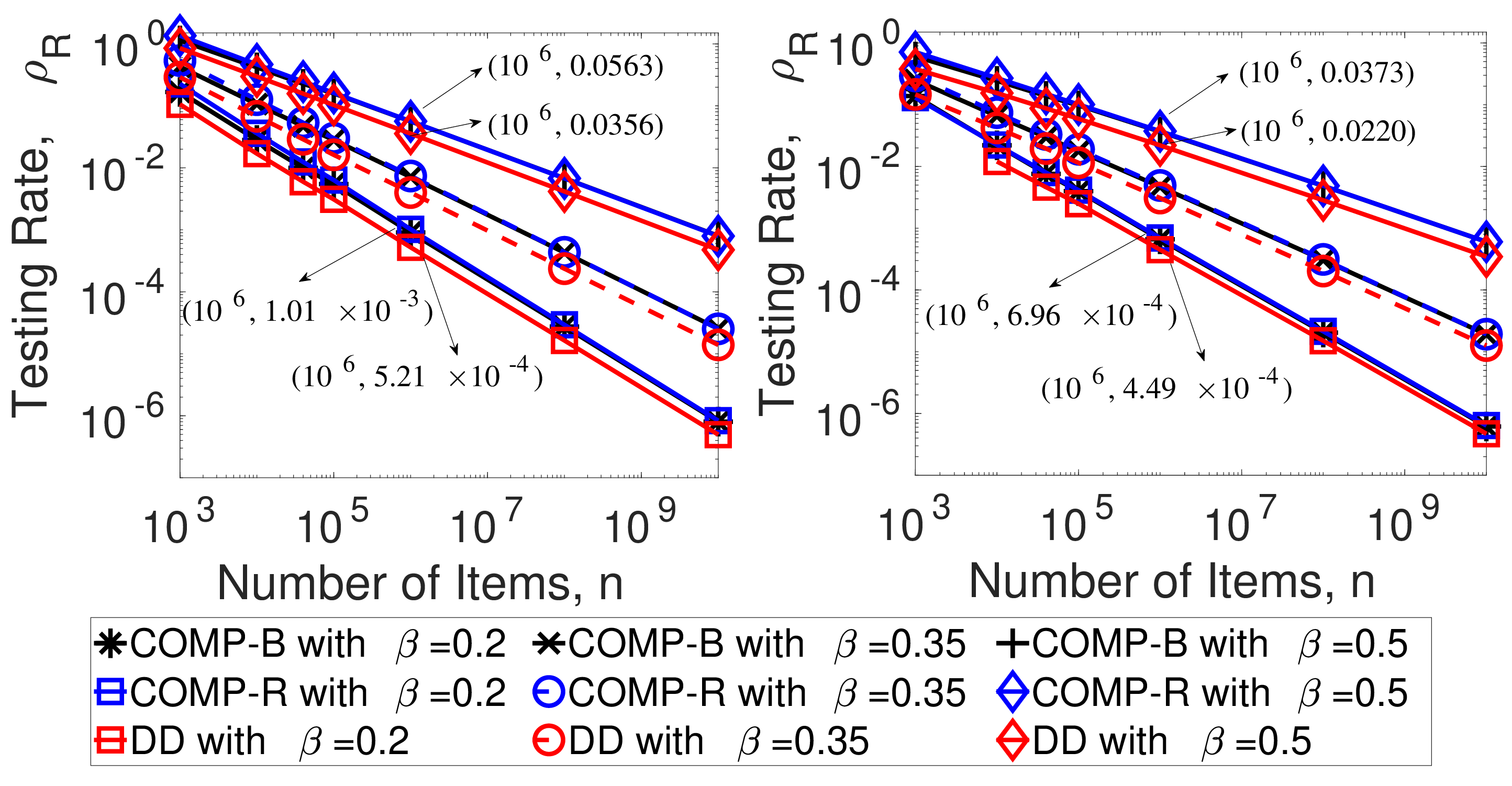}
	\caption{Sufficient testing rate vs. population size, $n$, across various inverse-sparsity parameter $\beta$ with $k = 0.95n^\beta$, $s=s^*$, $c = 1/2$, $p = 1/k$ and $\delta = 10^{-3}$ for CoMa, CBP and DD algorithm in exact recovery case in the left subplot and approximate recovery case ($g_\epsilon\!=\!d_\epsilon\!=\!5$) in the right subplot.}
	\vspace{-0.1in}
	\label{fig:testing_rate_n_sparsity}
\end{figure}
In the sub-linear regime, we choose $k = \Theta(n^{\beta})$, where $\beta \in (0,1)$ is called the \emph{inverse-sparsity} parameter, because the number of defective items increases (i.e., the item vector $\mathbf{x}$ becomes less sparse) as $\beta$ increases. In Fig.~\ref{fig:testing_rate_n_sparsity}, we compare the testing rates of the three algorithms as $n$ increases across various values of $\beta \in \{0.2, 0.35, 0.5\}$ at high confidence, i.e., when $\delta = 10^{-3}$. The left subplot in the figure shows the testing rate under  exact recovery. The right subplot shows the behavior for approximate recovery, i.e., with $g_\epsilon = d_\epsilon = 5$.

From the left subplot of Fig.~\ref{fig:testing_rate_n_sparsity}, we see that, for exact recovery, a testing rate of $\approx 0.0563$ and $1.01\times10^{-3}$ for CoMa, CBP and a testing rate of $\approx 0.0356$ and $5.21\times10^{-4}$ for DD at $\beta = 0.5$ and $0.2$, respectively, are sufficient when the population size is $n=10^6$. Similarly, from the right subplot, we see that the testing rates relax to $0.0373$ and $6.96\times10^{-4}$ for CoMa, CBP and to $0.022$ and $4.49\times10^{-4}$ for DD at $\beta = 0.5$ and $0.2$, respectively, at the same population size when we allow for $5$ errors. In summary, the testing rates relax by $\approx 33\%$ and $31\%$ for CoMa, CBP and by $\approx 38\%$ and $13\%$ for DD given $n = 10^6$ when we allow for $5$ errors at $\beta = 0.5$ and $0.2$, respectively. The lower percentage change in the case of the DD algorithm at $\beta = 0.2$ can be attributed to the dependency of the sufficient number of tests on $\log{k \choose d_\epsilon\!+\!1}$ in DD vs. that on $\log{n-k \choose g_\epsilon\!+\!1}$ in CoMa, CBP algorithms.

As $\beta$ increases, the testing rates of all three algorithms increase. More specifically, we have seen in Sec.~\ref{sec:The CoMa Algorithm} that $O\left(k\log n\right)$ tests are sufficient for exact recovery for large $n$. With $k = \Theta\left(n^\beta\right)$, 
we get the following: $\log \rho_R = O\left((\beta-1)\log n + \log(\log n)\right)$. Therefore, $\log \rho_R$ is approximately linearly decreasing with $\log n$ with slope $\beta-1$ for large $\log n$, which matches with the exact bound shown in the plot. From the curves in the right subplot, we see that our observation on the slope holds true even for the approximate recovery case. Finally, the DD algorithm performs the best in terms of the sufficient number of tests required, across all $\beta$s. While this is known for the exact recovery case~\cite{Aldridge_Balsassini_2014, Aldridge_GT_IT_2019}, we see from the right subplot that a similar observation holds for the approximate recovery case also.

\section{Discussion and Conclusion}
\label{sec:Discussion_and_Conclusion}
We studied three well-known practical boolean group testing algorithms: CoMa, DD, and CBP, from the perspective of the function learning framework. We used the probably approximately correct (PAC) analysis tools to derive a sufficiency bound on the number of tests needed for approximate defective set recovery for all three algorithms. We showed that the PAC-based bounds reduce to known sufficiency bounds in the case of exact recovery. Also, we presented numerical results illustrating the bounds vis-\`{a}-vis the desired confidence level, for both exact and approximate recovery cases. Finally, we demonstrated the advantage of the PAC formulation by empirically illustrating its ability to quantify the sufficient number of group tests across various confidence levels and error tolerances. 

Future work could involve theoretical analysis of the group testing algorithms under different test matrix formulations like (near-) constant column-weight and doubly-regular design, other group testing algorithms like the smallest satisfying set (SSS) and linear programming (LiPo) decoder, and extending the PAC framework to noisy group testing scenarios. The PAC formulation presented in this work is different from the classical PAC analysis since the data distribution can be chosen based on the hypothesis space containing the target function. This aspect could lead to novel results on the existence and characterization of the algorithms that are PAC-learnable. This is also an interesting direction for the future work.

\nocite{*}
\bibliographystyle{IEEEtran}
\bibliography{IEEEabrv,References.bib}

\begin{thebibliography}{10}
\providecommand{\url}[1]{#1}
\csname url@samestyle\endcsname
\providecommand{\newblock}{\relax}
\providecommand{\bibinfo}[2]{#2}
\providecommand{\BIBentrySTDinterwordspacing}{\spaceskip=0pt\relax}
\providecommand{\BIBentryALTinterwordstretchfactor}{4}
\providecommand{\BIBentryALTinterwordspacing}{\spaceskip=\fontdimen2\font plus
\BIBentryALTinterwordstretchfactor\fontdimen3\font minus
  \fontdimen4\font\relax}
\providecommand{\BIBforeignlanguage}[2]{{%
\expandafter\ifx\csname l@#1\endcsname\relax
\typeout{** WARNING: IEEEtran.bst: No hyphenation pattern has been}%
\typeout{** loaded for the language `#1'. Using the pattern for}%
\typeout{** the default language instead.}%
\else
\language=\csname l@#1\endcsname
\fi
#2}}
\providecommand{\BIBdecl}{\relax}
\BIBdecl
\renewcommand{\BIBentryALTinterwordstretchfactor}{4}

\bibitem{Sameera_Chandra_ISIT_2022}
S.~B. Hayavadana, M.~Bansal, and C.~R. Murthy, ``Approximate set
  identification: {PAC} analysis for group testing,'' in \emph{Proc.\ IEEE
  Int.\ Symp.\ Inf.\ Theory}, 2022, pp. 1--6.

\bibitem{Dorfman_1943}
R.~Dorfman, ``The detection of defective members of large populations,''
  \emph{Ann. Math. Stat.}, vol.~14, no.~4, pp. 436--440, 1943.

\bibitem{Ajay_Bharti_2009}
A.~R. Bharti, S.~L. Letendre, K.~P. Patra, and J.~M. Vinetz, ``Malaria
  diagnosis by a polymerase chain reaction-based assay using a pooling
  strategy,'' \emph{Amer. J. Tropical Med. and Hygiene}, vol.~81, no.~5, pp.
  754--757, Nov. 2009.

\bibitem{Cohen_Nir_Eldar_2021}
A.~Cohen, N.~Shlezinger, A.~Solomon, Y.~C. Eldar, and M.~M{\'e}dard,
  ``Multi-level group testing with application to one-shot pooled covid-19
  tests,'' in \emph{Proc.\ {ICASSP}}.\hskip 1em plus 0.5em minus 0.4em\relax
  IEEE, 2021, pp. 1030--1034.

\bibitem{Wolf_1985}
J.~Wolf, ``Born again group testing: Multiaccess communications,'' \emph{{IEEE}
  Trans. Inf. Theory}, vol.~31, no.~2, pp. 185--191, 1985.

\bibitem{Sharma_Murthy_2014}
A.~Sharma and C.~R. Murthy, ``Group testing-based spectrum hole search for
  cognitive radios,'' \emph{{IEEE} Trans. Veh. Tech.}, vol.~63, no.~8, pp.
  3794--3805, 2014.

\bibitem{Hwang_Du_2006}
F.~K. Hwang and D.-Z. Du, \emph{Pooling designs and nonadaptive group testing:
  important tools for {DNA} sequencing}.\hskip 1em plus 0.5em minus 0.4em\relax
  World Sci., 2006, vol.~18.

\bibitem{Ngo_Du_1999}
H.~Q. Ngo and D.-Z. Du, ``A survey on combinatorial group testing algorithms
  with applications to {DNA} library screening,'' in \emph{Discrete
  Mathematical Problems with Medical Applications}.\hskip 1em plus 0.5em minus
  0.4em\relax Amer.\ Math.\ Soc.\ 2000, 1999.

\bibitem{Raghu_Peter_2008}
R.~Kainkaryam and P.~Woolf, ``{poolHiTS}: A shifted transversal design based
  pooling strategy for high-throughput drug screening,'' \emph{{BMC} Bioinfo.},
  vol.~9, p. 256, 02 2008.

\bibitem{Xin_2009}
X.~Xin \emph{et~al.}, ``Shifted transversal design smart-pooling for high
  coverage interactome mapping,'' \emph{Genome Res.}, vol.~19, no.~7, pp.
  1262--1269, 2009.

\bibitem{Sobel_Groll_1959}
M.~Sobel and P.~A. Groll, ``Group testing to eliminate efficiently all
  defectives in a binomial sample,'' \emph{Bell System Technical Journal},
  vol.~38, no.~5, pp. 1179--1252, 1959.

\bibitem{Yelin_Aharony_2020}
I.~Yelin \emph{et~al.}, ``Evaluation of {COVID}-19 {RT-qPCR} test in multi
  sample pools.'' \emph{Clin. Infect. Dis.}, vol.~71, no.~16, pp. 2073--2078,
  Nov. 2020, first published in PrePrint on 27-March, 2020.

\bibitem{Sameera_Chandra_TSP_2022}
S.~Bharadwaja~H. and C.~R. Murthy, ``Recovery algorithms for pooled {RT-qPCR}
  based {Covid-19} screening,'' \emph{{IEEE} Trans. Signal Process.}, vol.~70,
  pp. 4353--4368, 2022.

\bibitem{Li_Zhao_Kimberly_2010}
Y.~Li, H.~Zhao, K.~Wilkins, C.~Hughes, and I.~K. Damon, ``Real-time {PCR}
  assays for the specific detection of monkeypox virus {W}est {A}frican and
  {C}ongo basin strain {DNA}.'' \emph{J. Virol. Methods}, vol.~99, no.~1, pp.
  223--237, Oct. 2010.

\bibitem{Seifert_Bai_2022}
S.~N. Seifert \emph{et~al.}, ``An {ACE2}-dependent {S}arbecovirus in {R}ussian
  bats is resistant to {SARS-CoV-2} vaccines,'' \emph{{PLoS} pathogens},
  vol.~18, no.~9, pp. 1553--7374, Sep. 2022.

\bibitem{Ding_Zhu_1999}
D.-Z. Du and F.~K. Hwang, \emph{Combinatorial {G}roup {T}esting and {I}ts
  {A}pplications}, 2nd~ed.\hskip 1em plus 0.5em minus 0.4em\relax World Sci.,
  1999.

\bibitem{Aldridge_Balsassini_2014}
M.~Aldridge, L.~Baldassini, and O.~Johnson, ``Group {T}esting {A}lgorithms:
  {B}ounds and {S}imulations,'' \emph{{IEEE} Trans. Inf. Theory}, vol.~60,
  no.~6, pp. 3671--3687, 2014.

\bibitem{Chan_Jaggi_Saligrama_2011}
C.~Chan, P.~Che, S.~Jaggi, and V.~Saligrama, ``Non-adaptive probabilistic group
  testing with noisy measurements: Near-optimal bounds with efficient
  algorithms,'' in \emph{Proc.\ Allerton Conf.\ on Commun., Control and
  Comput.}, 2011, pp. 1832--1839.

\bibitem{Chan_Jaggi_Saligrama_Agnihotri_2014}
C.~L. Chan, S.~Jaggi, V.~Saligrama, and S.~Agnihotri, ``Non-adaptive group
  testing: Explicit bounds and novel algorithms,'' \emph{{IEEE} Trans. Inf.
  Theory}, vol.~60, no.~5, pp. 3019--3035, 2014.

\bibitem{Aldridge_GT_IT_2019}
M.~Aldridge, O.~Johnson, and J.~Scarlett, \emph{Group Testing: An Information
  Theory Perspective}.\hskip 1em plus 0.5em minus 0.4em\relax {NOW}
  {P}ublishers, 2019.

\bibitem{Malyutov_1978}
M.~B. Malyutov, ``The separating property of random matrices,'' \emph{Mat.
  {Z}ametki}, vol.~23, no.~1, pp. 155--167, 1978.

\bibitem{Arkadii_2004}
A.~G. D'yachkov, ``Lectures on designing screening experiments,'' Ph.D.
  dissertation, Combinatorial Comput. Math. Center, Pohang Univ. Sci. Technol.,
  Gyeongsangbuk-do, South Korea, Feb. 2004.

\bibitem{Aldridge_2017}
M.~Aldridge, ``The capacity of bernoulli nonadaptive group testing,''
  \emph{{IEEE} Trans. Inf. Theory}, vol.~63, no.~11, pp. 7142--7148, 2017.

\bibitem{Coja-Oghlan_Oliver_2020}
A.~Coja-Oghlan, O.~Gebhard, M.~Hahn-Klimroth, and P.~Loick, ``Optimal group
  testing,'' in \emph{Proc.\ Annu.\ Conf.\ on Learn.\ Theory}, ser. {PMLR},
  vol. 125.\hskip 1em plus 0.5em minus 0.4em\relax PMLR, 09--12 Jul 2020, pp.
  1374--1388.

\bibitem{Coja-Oghlan_Gebhard_Max_2020}
------, ``Information-theoretic and algorithmic thresholds for group testing,''
  \emph{{IEEE} Trans. Inf. Theory}, vol.~66, no.~12, pp. 7911--7928, 2020.

\bibitem{Aldridge_Johnson_Scarlett_2016}
M.~Aldridge, O.~Johnson, and J.~Scarlett, ``Improved group testing rates with
  constant column weight designs,'' in \emph{Proc.\ IEEE Int.\ Symp.\ Inf.\
  Theory}.\hskip 1em plus 0.5em minus 0.4em\relax IEEE Press, 2016, p.
  1381–1385.

\bibitem{Johnson_Aldridge_Scarlett_2019}
O.~Johnson, M.~Aldridge, and J.~Scarlett, ``Performance of group testing
  algorithms with near-constant tests per item,'' \emph{{IEEE} Trans. Inf.
  Theory}, vol.~65, no.~2, pp. 707--723, 2019.

\bibitem{Tan_Scarlett_2021}
N.~Tan and J.~Scarlett, ``An analysis of the {DD} algorithm for group testing
  with size-constrained tests,'' in \emph{Proc.\ IEEE Int.\ Symp.\ Inf.\
  Theory}, 2021, pp. 1967--1972.

\bibitem{Gandikota_Jaggi_2019}
V.~Gandikota, E.~Grigorescu, S.~Jaggi, and S.~Zhou, ``Nearly optimal sparse
  group testing,'' \emph{{IEEE} Trans. Inf. Theory}, vol.~65, no.~5, pp.
  2760--2773, 2019.

\bibitem{Oliver_Max_Olaf_2022}
O.~Gebhard \emph{et~al.}, ``Near-optimal sparsity-constrained group testing:
  Improved bounds and algorithms,'' \emph{{IEEE} Trans. Inf. Theory}, vol.~68,
  no.~5, pp. 3253--3280, 2022.

\bibitem{Malyutov_2013}
M.~Malyutov, \emph{Search for Sparse Active Inputs: A Review}.\hskip 1em plus
  0.5em minus 0.4em\relax Berlin, Heidelberg: Springer Berlin Heidelberg, 2013,
  pp. 609--647.

\bibitem{Scarlett_Cevher_2017}
J.~Scarlett and V.~Cevher, ``Limits on support recovery with probabilistic
  models: An information-theoretic framework,'' \emph{{IEEE} Trans. Inf.
  Theory}, vol.~63, no.~1, pp. 593--620, 2017.

\bibitem{Atia_Saligrama_2009}
G.~Atia and V.~Saligrama, ``Noisy group testing: An information theoretic
  perspective,'' in \emph{Proc.\ Allerton Conf.\ on Commun., Control and
  Comput.}, 2009, pp. 355--362.

\bibitem{Scarlett_Cevher_2016}
J.~Scarlett and V.~Cevher, \emph{Phase Transitions in Group Testing}.\hskip 1em
  plus 0.5em minus 0.4em\relax ACM, 2016, pp. 40--53.

\bibitem{Scarlett_Cevher_List_2017}
------, ``How little does non-exact recovery help in group testing?'' in
  \emph{Proc.\ {ICASSP}}, 2017, pp. 6090--6094.

\bibitem{Truong_Aldridge_Scarlett_2020}
L.~V. Truong, M.~Aldridge, and J.~Scarlett, ``On the all-or-nothing behavior of
  {B}ernoulli group testing,'' \emph{{IEEE} J. Sel. Inf. Theory}, vol.~1,
  no.~3, pp. 669--680, 2020.

\bibitem{Jonathan_Ilias_2023}
J.~Niles-Weed and I.~Zadik, ``It was ``all" for ``nothing": Sharp phase
  transitions for noiseless discrete channels,'' \emph{{IEEE} Trans. Inf.
  Theory}, vol.~69, no.~8, pp. 5188--5202, 2023.

\bibitem{Lee_Chandrasekher_2019}
K.~Lee, K.~Chandrasekher, R.~Pedarsani, and K.~Ramchandran, ``{SAFFRON}: A
  fast, efficient, and robust framework for group testing based on sparse-graph
  codes,'' \emph{{IEEE} Trans. Signal Process.}, vol.~67, no.~17, pp.
  4649--4664, 2019.

\bibitem{Scarlett_Cevher_Separate_2018}
J.~Scarlett and V.~Cevher, ``Near-optimal noisy group testing via separate
  decoding of items,'' \emph{{IEEE} J. Sel. Topics Signal Process.}, vol.~12,
  no.~5, pp. 902--915, 2018.

\bibitem{Tan_Scarlett_2023}
N.~Tan, W.~Tan, and J.~Scarlett, ``Performance bounds for group testing with
  doubly-regular designs,'' \emph{{IEEE} Trans. Inf. Theory}, vol.~69, no.~2,
  pp. 1224--1243, 2023.

\bibitem{Iliopoulos_Ilias_2021}
F.~Iliopoulos and I.~Zadik, ``Group testing and local search: is there a
  computational-statistical gap?'' in \emph{Proc.\ Annu.\ Conf.\ on Learn.\
  Theory}, ser. {PMLR}, M.~Belkin and S.~Kpotufe, Eds., vol. 134.\hskip 1em
  plus 0.5em minus 0.4em\relax {PMLR}, 15--19 Aug 2021, pp. 2499--2551.

\bibitem{Aldridge_2021}
M.~Aldridge, ``Pooled testing to isolate infected individuals,'' in
  \emph{Proc.\ Conf.\ on Inform.\ Sci.\ and Syst.~(CISS)}, 2021, pp. 1--5.

\bibitem{Gilbert_Wen_Strauss_2008}
A.~C. Gilbert, M.~A. Iwen, and M.~J. Strauss, ``Group testing and sparse signal
  recovery,'' in \emph{Proc.\ Asilomar Conf.\ on Signals, Syst., and Comput.},
  2008, pp. 1059--1063.

\bibitem{Mohri_Afshin_Talwalkar_2012}
M.~Mohri, A.~Rostamizadeh, and A.~Talwalkar, \emph{Foundations of Machine
  Learning}.\hskip 1em plus 0.5em minus 0.4em\relax The MIT Press, 2012.

\bibitem{Shai_Shai_2014}
S.~Shalev-Shwartz and S.~Ben-David, \emph{PAC-Bayes}.\hskip 1em plus 0.5em
  minus 0.4em\relax Cambridge University Press, 2014, pp. 364--368.

\bibitem{Pant_Maiti_2022}
A.~Pant, T.~K. Maiti, D.~Mahajan, and B.~Das, ``Human gut microbiota and drug
  metabolism,'' \emph{Microb. Ecol.}, vol.~86, no.~1, pp. 97--111, 2022.

\bibitem{Ubaru_Mazumdar_2017}
S.~Ubaru and A.~Mazumdar, ``Multilabel classification with group testing and
  codes,'' in \emph{Proc.\ Int.\ Conf.\ on Machine Learning}, ser. {PMLR},
  D.~Precup and Y.~W. Teh, Eds., vol.~70.\hskip 1em plus 0.5em minus
  0.4em\relax PMLR, 06--11 Aug 2017, pp. 3492--3501.

\bibitem{Malioutov_Varshney_2013}
D.~Malioutov and K.~Varshney, ``Exact rule learning via boolean compressed
  sensing,'' in \emph{Proc.\ Int.\ Conf.\ on Machine Learning}, ser. {PMLR},
  S.~Dasgupta and D.~McAllester, Eds., vol.~28, no.~3.\hskip 1em plus 0.5em
  minus 0.4em\relax Atlanta, Georgia, USA: PMLR, 17--19 Jun 2013, pp. 765--773.

\bibitem{Mitzenmacher_Upfal_2005}
M.~Mitzenmacher and E.~Upfal, \emph{{Probability and Computing: Randomized
  Algorithms and Probabilistic Analysis}}.\hskip 1em plus 0.5em minus
  0.4em\relax USA: {Cambridge University Press}, 2005.

\bibitem{Valiant_1984}
L.~G. Valiant, ``A theory of the learnable,'' \emph{Commun. ACM}, vol.~27,
  no.~11, pp. 1134--–1142, nov 1984.

\bibitem{Nick_1987}
N.~Littlestone, ``Learning quickly when irrelevant attributes abound: A new
  linear-threshold algorithm,'' in \emph{Annu. IEEE Symp. Found. Comput. Sci.
  (FOCS)}, 1987, pp. 68--77.

\bibitem{Motwani_Raghavan_1995}
R.~Motwani and P.~Raghavan, \emph{{Randomized Algorithms}}.\hskip 1em plus
  0.5em minus 0.4em\relax {Cambridge University Press}, 1995.

\bibitem{Chang_Chen_Guo_Huang_2015}
F.-H. Chang, H.-B. Chen, J.-Y. Guo, and Y.-P. Huang, ``Multigroup testing for
  items with real-valued status under standard arithmetic,'' \emph{{IEEE}
  Trans. Inf. Theory}, vol.~61, no.~2, pp. 1084--1092, 2015.

\bibitem{Sharma_Murthy_2017}
A.~Sharma and C.~R. Murthy, ``Computationally tractable algorithms for finding
  a subset of non-defective items from a large population,'' \emph{{IEEE}
  Trans. Inf. Theory}, vol.~63, no.~11, pp. 7149--7165, 2017.

\end{thebibliography}

\end{document}